\definecolor{DarkGreen}{rgb}{0.1,0.5,0.1}
\definecolor{DarkRed}{rgb}{0.5,0.1,0.1}
\definecolor{DarkBlue}{rgb}{0.1,0.1,0.5}
\def\draft{1}
\def\submit{0}
    \def\ShowAuthNotes{1}
    \def\ShowAuthNotes{0}
\newcommand{\forsubmit}[1]{#1}
\newcommand{\forreals}[1]{}
\newcommand{\forreals}[1]{#1}
\newcommand{\forsubmit}[1]{}
\newcommand{\authnote}[2]{{ \footnotesize \bf{{\color{DarkRed}[#1's Note:}
{\color{DarkBlue}#2}]}}}
\newcommand{\authnote}[2]{}
\newtheorem{theorem}{Theorem}[section]
\newtheorem{remark}[theorem]{Remark}
\newtheorem{lemma}[theorem]{Lemma}
\newtheorem{corollary}[theorem]{Corollary}
\newtheorem{claim}[theorem]{Claim}
\newtheorem{fact}[theorem]{Fact}
\theoremstyle{definition}
\newtheorem{definition}[theorem]{Definition}
\newcommand{\chapterref}[1]{\hyperref[ch:#1]{Chapter~\ref{ch:#1}}}
\newcommand{\claimref}[1]{\hyperref[claim:#1]{Claim~\ref{claim:#1}}}
\newcommand{\corollarylabel}[1]{\label{cor:#1}}
\newcommand{\corollaryref}[1]{\hyperref[cor:#1]{Corollary~\ref{cor:#1}}}
\newcommand{\definitionlabel}[1]{\label{def:#1}}
\newcommand{\definitionref}[1]{\hyperref[def:#1]{Definition~\ref{def:#1}}}
\newcommand{\equationlabel}[1]{\label{eq:#1}}
\newcommand{\equationref}[1]{\hyperref[eq:#1]{Equation~\ref{eq:#1}}}
\newcommand{\factlabel}[1]{\label{fact:#1}}
\newcommand{\factref}[1]{\hyperref[fact:#1]{Fact~\ref{fact:#1}}}
\newcommand{\figurelabel}[1]{\label{fig:#1}}
\newcommand{\figureref}[1]{\hyperref[fig:#1]{Figure~\ref{fig:#1}}}
\newcommand{\itemlabel}[1]{\label{item:#1}}
\newcommand{\itemref}[1]{\hyperref[item:#1]{Item~(\ref{item:#1})}}
\newcommand{\lemmalabel}[1]{\label{lem:#1}}
\newcommand{\lemmaref}[1]{\hyperref[lem:#1]{Lemma~\ref{lem:#1}}}
\newcommand{\propref}[1]{\hyperref[prop:#1]{Proposition~\ref{prop:#1}}}
\newcommand{\propositionref}[1]{\hyperref[prop:#1]{Proposition~\ref{prop:#1}}}
\newcommand{\remarkref}[1]{\hyperref[rem:#1]{Remark~\ref{rem:#1}}}
\newcommand{\sectionlabel}[1]{\label{sec:#1}}
\newcommand{\sectionref}[1]{\hyperref[sec:#1]{Section~\ref{sec:#1}}}
\newcommand{\theoremlabel}[1]{\label{thm:#1}}
\newcommand{\theoremref}[1]{\hyperref[thm:#1]{Theorem~\ref{thm:#1}}}
\newcommand{\Esymb}{\mathbb{E}}
\newcommand{\Psymb}{\mathbb{P}}
\newcommand{\Vsymb}{\mathbb{V}}
\DeclareMathOperator*{\E}{\Esymb}
\DeclareMathOperator*{\Var}{\Vsymb}
\DeclareMathOperator*{\ProbOp}{\Psymb}
\renewcommand{\Pr}{\ProbOp}
\newcommand{\mper}{\,.}
\newcommand{\mcom}{\,,}
\newcommand{\cA}{{\cal A}}
\newcommand{\cG}{{\cal G}}
\newcommand{\cM}{{\cal M}}
\newcommand{\defeq}{\stackrel{\small \mathrm{def}}{=}}
\renewcommand{\leq}{\leqslant}
\renewcommand{\le}{\leqslant}
\renewcommand{\geq}{\geqslant}
\renewcommand{\ge}{\geqslant}
\newcommand{\set}[1]{\{#1\}}
\newcommand{\Set}[1]{\left\{#1\right\}}
\newcommand\rd{\,\mathrm{d}}
\newcommand{\bits}{\{0,1\}}
\newcommand{\R}{\mathbb{R}}
\newcommand{\poly}{{\rm poly}}
\renewcommand{\epsilon}{\varepsilon}
\newcommand{\eps}{\epsilon}
\newcommand{\remove}[1]{}
\newcommand{\ignore}[1]{}
\newcommand{\tv}{\mathrm{tv}}
\title{How Robust are Linear Sketches to Adaptive Inputs?}
\author{Moritz Hardt\thanks{IBM Almaden Research. Email: {\tt mhardt@us.ibm.com}}\and
David P. Woodruff\thanks{IBM Almaden Research. Email: {\tt dpwoodru@us.ibm.com}}}
\begin{document}
\maketitle
\begin{abstract}
Linear sketches are powerful algorithmic tools that turn an
$n$-dimensional input into a concise lower-dimensional representation via a linear
transformation. Such sketches have seen a wide range of applications including
norm estimation over data streams, compressed sensing, and distributed
computing. In almost any realistic setting, however, a linear sketch faces the
possibility that its inputs are correlated with previous evaluations of the
sketch. Known techniques no longer guarantee the correctness of the output in
the presence of such correlations.  We therefore ask: Are linear sketches inherently non-robust
to adaptively chosen inputs?  We give a strong affirmative answer to this
question. Specifically, we show that \emph{no} linear sketch approximates the
Euclidean norm of its input to within an arbitrary multiplicative 
approximation factor on a
polynomial number of adaptively chosen inputs. The result remains true 
even if the dimension of the sketch is $d=n-o(n)$ and the sketch is given
unbounded computation time. Our result is based on an algorithm with running
time polynomial in $d$ that adaptively finds a distribution over inputs on
which the sketch is incorrect with constant probability.
Our result implies several corollaries for related problems including
$\ell_p$-norm estimation and compressed sensing. Notably, we resolve an open
problem in compressed sensing regarding the feasibility of
$\ell_2/\ell_2$-recovery guarantees in presence of computationally bounded 
adversaries.
\end{abstract}
\vfill
\thispagestyle{empty}
\pagebreak
\tableofcontents
\vfill
\pagebreak
%
%
%
%
\newcommand{\GammaD}{\Gamma}

\section{Introduction}
Recent years have witnessed an explosion in the amount of available data, such as that
in data warehouses, the internet, sensor networks, and transaction logs. The need to
process this data efficiently has led to the emergence of new fields, including compressed sensing,
data stream algorithms and distributed functional monitoring. A unifying technique in 
these fields is the use of linear sketches. This technique involves specifying a
distribution $\pi$ over linear maps $A:\mathbb{R}^n \rightarrow \mathbb{R}^r$ for a value $r \ll n$. A 
matrix $A$ is sampled from $\pi$. Then, in the online phase, a vector $x \in \mathbb{R}^n$
is presented to the algorithm, which maintains the ``sketch'' $Ax$. This
provides a concise summary of $x$, from which various queries about $x$ can be
approximately answered. The storage and number of linear measurements (rows of
$A$) required is  proportional to $r$. The goal is to minimize $r$ to
well-approximate a large class of queries with high probability. 

\paragraph{Applications of Linear Sketches.}
In compressed sensing the goal is to design a distribution $\pi$ so that 
for $A \sim \pi$, given a vector $x \in \mathbb{R}^n$, from $Ax$ one can output a vector $x'$ 
for which $\|x-x'\|_p \leq C\|x_{tail(k)}\|_q$, where $x_{tail(k)}$ denotes $x$ with its top $k$ coefficients 
(in magnitude) replaced with zero, $p$ and $q$ are norms, and $C > 1$ is an approximation
parameter. The scheme is considered efficient if $r \leq k \cdot \poly(\log n)$. 
There are two common models, the ``for all'' and ``for each'' models. In
the ``for all'' model, a single $A$ is chosen and is required, with high probability, to work
simultaneously for all $x \in \mathbb{R}^n$. In the ``for each'' model, the chosen $A$ is just required
to work with high probability for any fixed $x \in \mathbb{R}^n$. 

A related model is the turnstile model for data streams. Here an underlying vector
$x \in \mathbb{R}^n$ is initialized to $0^n$ and undergoes a long sequence of additive updates to its coordinates 
of the form $x_i \leftarrow x_i + \delta$. The algorithm is presented the updates one by one, and maintains
a summary of what it has seen. If the summary is a linear sketch $Ax$, then given an additive update to the $i$-th coordinate,
the summary can be updated by adding $\delta A_i$ to $Ax$, where $A_i$ is the $i$-th column of $A$. The best known 
algorithms for any problem in this model maintain a linear sketch. 
Starting with the work of Alon, Matias, and Szegedy \cite{AMS99}, problems 
such as approximating the $p$-norm $\|x\|_p = (\sum_{i=1}^n |x_i|^p)^{1/p}$
for $1 \leq p \leq \infty$ (also known as the frequency moments), 
the heavy hitters or largest
coordinates in $x$,
and many others have been considered; we refer the reader to \cite{IndykCourse,Muthu}. Often it is required that
the algorithm be able to query the sketch to approximate the statistic 
at intermediate points in the stream, rather than solely at the end of the stream. 

Other examples include distributed computing \cite{mfhh02} and functional monitoring \cite{cmy11}. Here there 
are $k$ parties $P^1, \ldots, P^k$, e.g., database servers or sensor networks, 
each with a local stream $\mathcal{S}^i$ of additive updates to a vector $x^i$. The goal
is to approximate statistics, such as those mentioned above, 
on the aggregate vector $x = \sum_{i=1}^k x^i$. If the parties share public randomness, they can agree upon a sketching matrix
$A$. Then, each party can locally compute $Ax^i$, from which $Ax$ can be computed using the linearity
of the sketch, namely, $Ax = A(x^1 + \cdots + x^k)$. The important measure is the communication complexity, 
which, since it suffices to exchange the sketches $Ax^i$, is proportional to $r$ rather than to $n$. 

\paragraph{Adaptively Chosen Inputs.}
One weakness with the models above is that they assume the sketching matrix $A$ is 
{\it independent} of the input vector $x$. As pointed out in recent papers \cite{ghrsw12,ghsww12}, there are
applications for which this is inadequate. Indeed, this occurs in situations for which the result of
performing a query on $Ax$ influences future updates to the vector $x$. One example given in \cite{ghrsw12} is
that of a grocery store, in which $x$ consists of transactions, and one uses $Ax$ to approximate the best
selling items. The store may update its inventory based on $Ax$, which in turn influences
future sales. A more adversarial example given in \cite{ghrsw12,ghsww12} is that of using a compressed sensing radar
on a ship to avoid a missile from an attacker. Based on $Ax$, the ship takes evasive
action, which is viewed by the attacker, and may change the attack. The matrix $A$ used by the radar 
cannot be changed between successive attacks for efficiency reasons. 
Another example arises in high frequency stock trading. Imagine Alice monitors a stream of orders on the stock market 
and places her own orders depending on statistics based on sketches. A competitor Charlie might have a commercial 
interest in leading Alice's algorithm astray by observing her orders and manipulating the input stream accordingly.
The question of sketching in adversarial environments was also introduced and motivated in the beautiful work of 
Mironov, Naor and Segev~\cite{MironovNS08} who provide several examples arising in multiparty sketching applications.
Even from a less adversarial point of view, it seems hard to argue that in
realistic settings there will be no correlation between the inputs to a linear
sketch and previous evaluations of it. Resilience to such correlations would
be a desirable robustness guarantee of a sketching algorithm.

A deterministic sketching matrix, e.g., in compressed sensing one that
satisfies the ``for all'' property above, would suffice to handle this kind of
feedback. Unfortunately, such sketches provably have much weaker error
guarantees.
Indeed, if one wants the number $r$ of measurements to be on the order of $k \cdot \poly(\log n)$, then the best 
one can hope for is that for all $x \in \mathbb{R}^n$, from $Ax$ one can output $x'$ for which
$\|x-x'\|_2 \leq \frac{\eps}{\sqrt{k}} \|x_{tail(k)}\|_1$, which is known as the $\ell_2/\ell_1$ error guarantee. 
However, if one allows the ``for each'' property,
then there are distributions $\pi$ over sketching matrices $A$ for which for any fixed $x \in \mathbb{R}^n$,
from $Ax$ one can output $x'$ for which $\|x-x'\|_2 \leq (1+\eps)\|x_{tail(k)}\|_2$ with high probability
(over $A \sim \pi$), which is known as the
$\ell_2/\ell_2$ error guarantee. One can verify that the
second guarantee is much stronger than the first; indeed, for constant $\eps$ and $k = 1$, if 
$x = (\sqrt{n}, \pm 1, \pm 1, \ldots, \pm 1)$, then with the $\ell_2/\ell_1$ guarantee, an output of $x' = 0^n$ is valid,
while for the $\ell_2/\ell_2$ guarantee, $x'_1$ must either be large or many coordinates of $x'$ must agree in sign
with those of $x$. 

An important open question, indeed, the first open question\footnote{See 
\url{http://ls2-www.cs.tu-dortmund.de/streamingWS2012/slides/open.problems_dortmund2012.pdf}} in the 
``Open Questions from the Workshop on Algorithms for Data Streams 2012 at Dortmund'', is whether or not it is 
possible to achieve the $\ell_2/\ell_2$ guarantee for probabilistic polynomial time adversaries with limited
information about $A$. The weakest possible information an adversary can have about $A$ is through black box
queries. Formally, given a sketch $Ax$, there is a function $f(Ax)$ for which its output satisfies a given
approximation guarantee with high probability, e.g., in the case of compressed sensing, the guarantee
would be that $f(Ax)$ satisfies the $\ell_2/\ell_2$ guarantee above, while in the case of data streams,
the guarantee may be that $f(Ax) = (1 \pm \eps)\|x\|_p$. The adversary only sees values $f(Ax^1), f(Ax^2), 
\ldots, f(Ax^t)$ for a sequence of vectors $x^1, \ldots, x^t$ of her choice, where $x^i$ may depend
on $x^1, \ldots, x^{i-1}$ and $f(Ax^1), \ldots, f(Ax^{i-1})$. The goal of the adversary is to find a vector
$x$ for which $f(Ax)$ does not satisfy the approximation guarantee. This corresponds to the private model
of compressed sensing, given in Definition 3 of \cite{ghsww12}. 

\subsection{Our Results}
We resolve the above open question in the negative. In fact, we prove a much
more general result about linear sketches. All of our results are derived from
the following promise problem {\sc GapNorm}($B$): for an input vector $x \in
\mathbb{R}^n$, output $0$ if $\|x\|_2 \leq 1$ and output $1$ if $\|x\|_2 \geq
B$, where $B > 1$ is a parameter. If $x$ satisfies neither of these two
conditions, the output of the algorithm is allowed to be $0$ or $1$. 

Our main theorem is stated informally as follows.
\begin{theorem}[Informal version of \theoremref{constant}]
There is a randomized algorithm which, given a parameter $B\ge2$ and
oracle access to a linear sketch that uses at most $r=n-O(\log(nB))$
rows, with high probability finds a distribution over queries on which 
the linear sketch fails to solve {\sc GapNorm}($B$) with constant probability.

The algorithm makes at most $\poly(rB)$ adaptively chosen queries to the
oracle and runs in time $\poly(rB).$ Moreover, the algorithm uses only $r$
``rounds of adaptivity'' in that the query sequence can be partitioned into at
most $r$ sequences of non-adaptive queries.
\end{theorem}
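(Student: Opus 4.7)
The attack rests on a single structural observation: if $V = \mathrm{rowspan}(A)$ has dimension $r$ and $g \sim N(0,\sigma^2 I_n)$, then the orthogonal decomposition $g = P_V g + P_{V^\perp} g$ has both components independent and Gaussian, the sketch $Ag$ is a function of $P_V g$ alone, and consequently $P_{V^\perp} g \mid f(Ag) = y$ remains $N(0,\sigma^2 I_{V^\perp})$ for every answer $y$. The attacker's posterior over a Gaussian query can only be reshaped \emph{inside} $V$; any second-moment anomaly observed across oracle answers therefore points at $V$.

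The algorithm maintains a subspace $W_t \subseteq \R^n$ with $W_0 = \R^n$. In round $t \in \{1,\dots,r\}$ it draws $m = \poly(r,B)$ Gaussian queries $g_j \sim N(0,\sigma_t^2 P_{W_t})$, where the scale $\sigma_t$ is picked from a geometric grid of $O(\log B)$ scales that makes $\|g_j\|_2$ sweep the interval $[1,B]$. The oracle answers $y_j = f(Ag_j)$ split the samples into $G_0,G_1$; by pigeonhole some grid scale gives both classes non-negligible mass, because the oracle must flip from mostly $0$ to mostly $1$ somewhere in the grid in order to satisfy \textsc{GapNorm}($B$) on non-adaptive inputs. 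The attacker then computes the empirical conditional second-moment difference $\hat M_t = \tfrac{1}{\abs{G_1}}\sum_{g \in G_1} gg^\top - \tfrac{1}{\abs{G_0}}\sum_{g \in G_0} gg^\top$, extracts its top eigenvector $v_t$ on $W_t$, and sets $W_{t+1} = W_t \cap v_t^\perp$.

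The crux is analyzing a single round. The structural observation implies that the population conditional covariance agrees with $\sigma_t^2 P_{W_t}$ on $V^\perp \cap W_t$ under both conditioning events, so the population difference $M_t$ is supported entirely on $V \cap W_t$. \textsc{GapNorm} correctness on the non-adaptive Gaussian $g_j$ forces $\E[\|P_V g\|_2^2 \mid y = 1] - \E[\|P_V g\|_2^2 \mid y = 0]$ to be at least $1/\poly(r,B)$ at the chosen scale, which gives $M_t$ a spectral gap pointing into $V \cap W_t$. Matrix Bernstein then yields that $m = \poly(r,B)$ samples suffice for $\hat M_t$ to approximate $M_t$ in operator norm, producing $v_t$ with $\|(I-P_V) v_t\|_2 \le \eta$ for tiny $\eta$. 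Because successive $v_t$ are explicitly forced mutually orthogonal, after $r$ rounds $\mathrm{span}(v_1,\dots,v_r)$ is within error $r\eta$ of $V$ and $W_r$ is within the same error of $\ker A$.

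The attack concludes by drawing $x$ uniformly from the sphere of radius $B$ inside $W_r$. Then $\|Ax\|_2 \le B \cdot r\eta = 1/\poly(nB)$, indistinguishable in the sketch from the zero vector, so correctness forces $f(Ax) = 0$ with high probability; yet $\|x\|_2 = B$, violating the guarantee. Repeating with fresh independent draws from the sphere in $W_r$ exhibits a distribution on which the oracle errs with constant probability. The main obstacle is the spectral-gap step of the per-round analysis: one must argue that \emph{every} answering rule $f$ compatible with \textsc{GapNorm} must induce a detectable gap inside $V \cap W_t$ at \emph{some} scale of the grid, uniformly across rounds, and control how the error $\eta$ compounds through $r$ rounds. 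The dimension budget $r \le n - O(\log(nB))$ is precisely what provides the room in $W_r$ for a radius-$B$ query to sit essentially in $\ker A$, and the slack needed for a $\poly(nB)$ union bound.
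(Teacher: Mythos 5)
Your high-level architecture matches the paper's: iteratively learn the row space $V$ of the sketch via Gaussian queries, extract one approximately-in-$V$ direction per round via a spectral estimator, restrict future queries to the orthogonal complement, and after $r$ rounds query in what is essentially $\ker A$. The ``structural observation'' that conditioning on $f(Ag)$ leaves $P_{V^\perp}g$ undisturbed is exactly the insight the paper exploits via its sufficient-statistic lemma, and your choice of a difference-of-conditional-second-moments matrix $\hat M_t$ in place of the paper's positive-samples-only covariance is a reasonable alternative estimator (both have the property that the population matrix is supported inside $V \cap W_t$). However, the proposal has three genuine gaps.

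First, and most seriously, the spectral gap claim --- that \textsc{GapNorm} correctness forces $\E[\|P_V g\|_2^2 \mid y=1] - \E[\|P_V g\|_2^2 \mid y=0] \geq 1/\poly(r,B)$ at some scale --- is asserted, not proved; you flag this yourself as ``the main obstacle.'' This claim is the \emph{Conditional Expectation Lemma} (\lemmaref{conditional}), which is the paper's central technical contribution and occupies Sections \ref{sec:chi-squared} and \ref{sec:conditional}. It is far from a pigeonhole: the paper proves it by integrating $(s-\tau)\nu_\tau(s)$ against $\tau$ over $[d,Bd]$, isolating the function $\Delta(s)$, and arguing sign-structure of $\Delta$ plus soundness forces $\int h(s)\Delta(s)\,\rd s \geq d/4$. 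A coarse geometric grid of $O(\log B)$ scales (as you propose) does not obviously capture the witnessing $\tau$ --- the paper uses a polynomially fine additive grid $S = [3/4,B] \cap \epsilon\Z$ with $\epsilon = 1/\poly(Bn)$ precisely because the argument only guarantees \emph{existence} of a good $\tau$ in $[d,Bd]$, not stability under constant-factor perturbation.

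Second, your queries $g_j \sim N(0,\sigma_t^2 P_{W_t})$ are confined to $W_t$ with no ambient noise. Because each $v_t$ is only approximately in $V$, the subspace $W_t$ you actually query from differs from the ``ideal'' one by some small angle, and the adversarial $f$ could in principle detect this discrepancy and exploit it. The paper's distribution $G(V^\perp,\sigma^2)$ adds an independent full-dimensional Gaussian $g_2 \sim N(0,1/4)^n$; this is what makes $G(V_t^\perp,\sigma^2)$ and the ideal $G(W_t^\perp,\sigma^2)$ statistically indistinguishable (\lemmaref{stat-dist}) even under approximation error, and is essential to controlling how your $\eta$ compounds. Without it, the ``$r\eta$ after $r$ rounds'' accounting is not justified: the invariant maintained across rounds (\equationref{invariant}) and the progress argument (\lemmaref{progress}) both lean on this smoothing.

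Third, a minor but real omission: you never establish why a single scale per round must exhibit both non-negligible $G_0$ \emph{and} $G_1$ mass together with the moment gap. The paper avoids needing a non-trivial $G_0$ entirely --- its lemma compares $\E[\cdot \mid y=1]$ to the \emph{unconditional} expectation, and its estimator uses only the positives, precisely because forcing both a flip and a gap at the same scale is the part that requires the careful $\chi^2$-average analysis rather than a hand-wave.
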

Note that the algorithm in our theorem succeeds on every linear sketch with
high probability. In particular, our theorem implies that one cannot design a
distribution over sketching matrices with at most $r$ rows so as to output a
value in the range $[\|x\|_2, B \|x\|_2]$, that is, a $B$-approximation to
$\|x\|_2$, and be correct with constant probability on an adaptively chosen
sequence of $\poly(rB)$ queries.  This is unless the number $r$ of rows in the
sketch is $n - O(\log(nB))$, which agrees with the trivial $r = n$ upper bound
up to a low order term.  Here $B$ can be any arbitrary approximation factor
that is only required to be polynomially bounded in $n$ (as otherwise the
running time would not be polynomial). An interesting aspect of our algorithm
is that it makes arguably very natural queries as they are all drawn from
Gaussian distributions with varying covariance structure.

We also note that the second part of our theorem implies that the queries
can be grouped into fewer than $r$ rounds, where in each round the
queries made are independent of each other conditioned on previous rounds.
This is close to optimal, as if $o(r/\log r)$ rounds were used, the sketching
algorithm could partition the rows of $A$ into $o(r/\log r)$ disjoint blocks
of $\omega(\log r)$ coordinates, and use the $i$-th block alone to respond to
queries in the $i$-th round. If the rows of $A$ were i.i.d.~normal random
variables, one can show that this would require a super-polynomial (in $r$)
number of non-adaptive queries to break, even for constant $B$.  Moreover, our
theorem gives an algorithm with time complexity polynomial in $r$ and $B$, and
therefore rules out the possibility of using cryptographic techniques secure
against polynomial time algorithms. 

We state our results in terms of algorithms that output any
computationally unbounded but deterministic function $f$ of the sketch $Ax.$
However, it is not difficult to extend all of our results to the setting where
the algorithm can use additional internal randomness at each step to output a 
randomized function $f$ of $Ax.$ This is discussed in \sectionref{rand}. 

\paragraph{Applications.}
We next discuss several implications of our main theorem.
Our algorithm in fact uses only query vectors~$x$ which are 
$O(r)$-dimensional for $B \leq \exp(r).$ Recall that for such vectors,
$\Omega(r^{-1/2}\|x\|_2)\le \|x\|_p \le O(r^{1/2}\|x\|_2),$
for all $1 \leq p \leq \infty.$ This gives us the following
corollary for any $\ell_p$-norm.
\begin{corollary}[Informal]
No linear sketch with $n-\omega(\log n)$ rows approximates the $\ell_p$-norm 
to within a fixed polynomial factor on a sequence of polynomially many 
adaptively chosen queries.
\end{corollary}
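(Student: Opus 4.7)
The plan is to reduce $\ell_p$-norm approximation to the promise problem {\sc GapNorm}($B$) by invoking the norm-equivalence inequality stated just before the corollary, and then apply the main theorem to derive a contradiction. Suppose, for the sake of contradiction, that there is a linear sketch with $r=n-\omega(\log n)$ rows together with a (possibly unbounded) post-processing function $f$ such that for every input $x\in\R^n$, the value $f(Ax)$ lies in the interval $[\|x\|_p, C\|x\|_p]$ with high probability, where $C=n^{O(1)}$ is some fixed polynomial approximation factor. I want to convert any such scheme into a sketch that solves {\sc GapNorm}($B$) for a suitably chosen $B=\poly(n)$, and then invoke the informal version of \theoremref{constant} to find a distribution of adaptive queries on which it fails.

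To set up the reduction, I restrict attention to query vectors supported on a fixed set of $m=O(r)$ coordinates, which I can do because the adversarial algorithm in the main theorem already outputs only $O(r)$-dimensional queries (as long as $B\le\exp(r)$). For such vectors the norm-equivalence
\[
\Omega(r^{-1/2}\|x\|_2)\;\le\;\|x\|_p\;\le\;O(r^{1/2}\|x\|_2)
\]
holds. Consequently, if $\|x\|_2\le 1$ then $f(Ax)\le C\cdot O(r^{1/2})$, while if $\|x\|_2\ge B$ then $f(Ax)\ge \Omega(B\, r^{-1/2})$. Choosing $B$ to be a sufficiently large constant multiple of $Cr$ makes these two ranges disjoint; thresholding $f(Ax)$ at a value in between then yields an algorithm that solves {\sc GapNorm}($B$) with high probability using exactly the same linear sketch $A$.

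Now I apply the main theorem to this induced {\sc GapNorm}($B$) solver. Since $C=\poly(n)$ and $r\le n$, the required dimension bound is $r\le n-O(\log(nB))=n-O(\log n)-O(\log C)=n-O(\log n)$, which is satisfied under the assumption $r\le n-\omega(\log n)$. The theorem then produces, in time $\poly(rB)=\poly(n)$, an adaptively chosen sequence of $\poly(n)$ queries (in fact, drawn from Gaussians supported on an $O(r)$-dimensional subspace) on which the induced solver is wrong with constant probability. Because the threshold decision can only be wrong when $f(Ax)$ itself fails to lie in $[\|x\|_p,C\|x\|_p]$, this same query sequence also defeats the original $\ell_p$-approximation guarantee, contradicting the assumed correctness of the sketch.

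The substantive step here is the first one: choosing the gap $B$ large enough that the approximation slack $C$ combined with the two-sided $r^{1/2}$ loss from norm equivalence still leaves a clean gap between the ``small $\ell_2$'' and ``large $\ell_2$'' cases. Everything else is bookkeeping: ensuring that the queries produced by the algorithm of \theoremref{constant} really are $O(r)$-dimensional (which the paper asserts for $B\le\exp(r)$, automatically true for $B=\poly(n)$), and verifying that the parameters $\poly(rB)$ and $n-O(\log(nB))$ collapse to $\poly(n)$ and $n-O(\log n)$ under the polynomial bound on $C$.
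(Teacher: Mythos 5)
Your proof is correct and follows essentially the same route as the paper: threshold the $\ell_p$-estimator to obtain a \textsc{GapNorm}$(B)$ solver, use the $\ell_p$/$\ell_2$ norm equivalence to show the threshold cleanly separates the small-norm and large-norm cases once $B$ is a sufficiently large polynomial, and then invoke the main attack theorem to produce an adaptive query sequence that defeats it. The only cosmetic difference is that the paper's formal Corollary~\ref{cor:lp} applies the norm equivalence with a $\sqrt{n}$ factor (taking $B=O(C^2n^2)$ directly), whereas you use the sharper $\sqrt{r}$ factor enabled by the $O(r)$-dimensional queries; both choices are subsumed by taking $B=\poly(n)$, so the proofs are interchangeable.
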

The corollary also applies to other problems that are as least
as hard as $\ell_p$-norm estimation,
such as the earthmover distance, 
or that can be embedded into $\ell_p$ with small distortion. 

Via a reduction to {\sc GapNorm}$(B)$, we are able to resolve the
aforementioned open question for sparse recovery, even when $k = 1$.  
\begin{corollary}[Informal]
Let $C\ge 1.$ No linear sketch with $o(n/C^2)$ rows guarantees
$\ell_2/\ell_2$-recovery on a polynomial number of adaptively chosen inputs. 
More precisely, we can find with probability $2/3$ an input $x$ for which the
output~$x'$ of the sketch does not satisfy $\|x-x'\|_2\le
C\|x_{\mathrm{tail}(1)}\|_2.$
\end{corollary}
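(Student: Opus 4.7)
The plan is to reduce to the {\sc GapNorm}($B$) problem of the main theorem and derive a contradiction. Assume for contradiction that there exist a linear sketch $A\colon\mathbb{R}^n\to\mathbb{R}^r$ with $r = o(n/C^2)$ and a decoder $f$ achieving the $\ell_2/\ell_2$-recovery guarantee $\|x - f(Ax)\|_2 \le C\|x_{\mathrm{tail}(1)}\|_2$ against polynomially many adaptively chosen inputs with constant probability.

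The first step is to build from $(A,f)$ a candidate {\sc GapNorm}($B$) decoder $g(z) \defeq \Ind[\|f(z)\|_2 > 1+C]$ and pick $B \defeq 4(C+1)$. The recovery guarantee immediately ensures correctness of $g$ in the small-norm regime: whenever $\|x\|_2 \le 1$, we have $\|f(Ax)\|_2 \le \|x\|_2 + C\|x_{\mathrm{tail}(1)}\|_2 \le 1 + C$, and hence $g(Ax) = 0$. Consequently any failure of $(A,g)$ on {\sc GapNorm}($B$) must be of the ``large-norm'' form: a query $x$ with $\|x\|_2 \ge B$ and yet $\|f(Ax)\|_2 \le 1+C$. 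For such $x$ one has $\|x - f(Ax)\|_2 \ge B - 1 - C = 3(C+1)$, which contradicts the recovery guarantee whenever $\|x_{\mathrm{tail}(1)}\|_2 < 3 + 3/C$.

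The second step applies the main theorem to produce such a failure with constant probability. Restricting the sketch to act on an $n'$-dimensional coordinate subspace of $\mathbb{R}^n$ with $n' = \Theta(n/C^2)$, the hypothesis $r \le n' - O(\log(n' B))$ of the main theorem is satisfied under our assumption $r = o(n/C^2)$. The theorem then returns a distribution over queries against which $g$ fails {\sc GapNorm}($B$) with constant probability, and as argued above every such failure is a large-norm one. Picking a query $x$ from this distribution yields, with constant probability, an input violating the recovery guarantee.

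The main obstacle is arranging that the adversarial queries $x$ also satisfy the peakedness condition $\|x_{\mathrm{tail}(1)}\|_2 < 3 + 3/C$. The Gaussian queries produced by the main theorem's algorithm are essentially flat in their native coordinates, so the reduction is not immediate. I would address this by introducing a linear embedding $\phi\colon\mathbb{R}^{n'}\to\mathbb{R}^n$ that places a designated coordinate carrying a constant fraction of every lifted query's $\ell_2$-norm; the composed map $A\phi$ is then itself a legitimate linear sketch on $\mathbb{R}^{n'}$ to which the main theorem applies, and the blow-up factor $n/n' = \Theta(C^2)$ accounts precisely for the $C^2$ loss in the final lower bound. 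Completing this step cleanly requires verifying that the main theorem's adversarial algorithm continues to make non-trivial progress on the composed sketch $A\phi,$ which I expect to establish by a direct analysis of the algorithm's covariance updates on the image of $\phi$ rather than as a fully black-box reduction.
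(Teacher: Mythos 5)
Your plan correctly identifies the target (reduce to {\sc GapNorm}), and you correctly identify the central obstacle: the {\sc GapNorm} attack makes Gaussian queries, which are ``flat'' (all coordinates comparable), so the peakedness condition $\|x_{\mathrm{tail}(1)}\|_2 \ll \|x\|_2$ needed for the recovery guarantee to be binding is simply not present. Where the proposal breaks down is the proposed remedy: you cannot make the Gaussian query itself spiky via a \emph{linear} embedding $\phi$. The image $\phi(x)$ of a Gaussian under a linear map is still Gaussian, and a Gaussian random vector cannot have a fixed coordinate that \emph{consistently} carries a constant fraction of the squared norm --- any coordinate with variance $v$ will, with constant probability, be $O(\sqrt{v})$ or even near zero, at which point $\|\phi(x)_{\mathrm{tail}(1)}\|_2$ is comparable to $\|\phi(x)\|_2$ and the recovery guarantee says nothing. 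So the reduction $g(z)=\Ind[\|f(z)\|_2 > 1+C]$ cannot be salvaged this way: for a large-norm Gaussian query $x$ with $\|x\|_2 \approx B$, the recovery guarantee permits the sketch to output $x'=0$ without violating anything, and your $g$ then outputs $0$, a ``failure'' of $g$ that does \emph{not} translate into a failure of the sparse recovery algorithm. (There is also a scale mismatch --- the attack's failure certificate lives at $\|x\|_2^2 \approx n$ and $\approx Bn$ rather than at thresholds $1$ and $B$ --- but that is cosmetic.)

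The paper sidesteps this by not making the {\sc GapNorm} query itself spiky. Instead, the constructed {\sc GapNorm} decoder $f$ takes the sketch $Ax$ and internally forms the \emph{affine} probes $y^i = x + 4C\sqrt{n}\,P_{V^\bot}e_i$ for many coordinates $i$ in a set $S$ of ``nearly-free'' directions (where $e_iP_{V^\bot}e_i$ is close to $1$); $Ay^i$ can be computed from $Ax$ since $A$, $V^\bot$ are known to the decoder. It then runs the recovery algorithm on each $Ay^i$ and declares $\|x\|$ small iff the recovery output detects the spike at coordinate $i$ for every $i\in S$. The peakedness of $y^i$ is \emph{by construction} a deterministic spike of magnitude $4C\sqrt{n}$ at position $i$, which dwarfs the flat $\Theta(\sqrt n)$ Gaussian background when $\|x\|$ is small, and is drowned out when $\|x\|^2 \approx Bn$ with $B=\Theta(n)$. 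Correctness in the large-norm case is then an averaging argument: the output $z^i$ has bounded norm, so it can be large in at most a small fraction of coordinates, and a variation-distance argument over the probe position $j\in S\cap T$ shows that for a random $i$ the algorithm therefore fails to detect the spike with constant probability. The $r = o(n/C^2)$ bound arises because $|S|\ge 2n/3$ requires $r \le \beta n/C^2$, not from a dimension restriction $n' = \Theta(n/C^2)$. In short: your approach and the paper's diverge exactly at the point you flagged, and the paper's solution (affine spike probes inside the decoder, with a randomized-position averaging argument) is not a variant of your linear-embedding idea but a different mechanism; the linear-embedding route would need to be abandoned.
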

For constant approximation factors $C$, this shows one cannot do asymptotically
better than storing the entire input. For larger approximation factors $C$, 
the dependence of the number of rows on $C$ in this corollary 
is essentially best possible (at least for small $k$), 
as we point out in \sectionref{recovery}.

\paragraph{Connection to Differential Privacy.}

How might one design algorithms that are robust to adversarial inputs? An
intriguing approach is offered by the notion of differential
privacy~\cite{DworkMNS06}. Indeed, differential privacy is designed to guard a
private database $D\in\bits^n$ (here thought of as $n$ private bits) against
adversarial and possibly adaptive queries from a data analyst. Intuitively
speaking, differential privacy prevents an attacker from reconstructing the private
bit string. In our setting we can think of $D$ as the random string that
encodes the matrix used by the sketching algorithm and indeed our algorithm is
precisely a \emph{reconstruction attack} in the terminology of differential privacy.  It
is known that if $D$ is chosen uniformly at random, then after
conditioning~$D$ on the output of an $\epsilon$-differentially private
algorithm, the string $D$ is a \emph{strongly
$2\epsilon$-unpredictable}\footnote{This means that each bit of $D$ is at most
$2\epsilon$-biased conditioned on the remaining bits.} random
string~\cite{McGregorMPRTV10}. Hence, if the answers given by the sketching
algorithm satisfy differential privacy, then the attacker cannot learn the
randomness used by the sketching algorithm. This could then be used to argue
that the sketch continues to be correct.

An interesting corollary of our work is that it rules out the possibility of
correctly answering a polynomial number of ``{\sc GapNorm} queries'' using the
differential privacy approach outlined above. This stands in sharp contrast to
work in differential privacy which shows that a nearly exponential number of
adaptive and adversarial ``counting queries'' can be answered while satisfying
differential privacy~\cite{RothR10,HardtR10}. A similar (though quantitatively
sharper) separation was recently shown for the stateless
mechanisms~\cite{DworkNV12} answering counting queries. While linear sketches
are stateless, the model we use here in principle permits more flexibility in
how the randomness is used by the algorithm so that the previous separation
does not apply.

\subsection{Comparison to Previous Work}
While the above papers \cite{ghrsw12,ghsww12} introduce the problem, the results they obtain do not directly address the
general problem. 
The main result of \cite{ghsww12} is that in the private model of compressed sensing, the $\ell_2/\ell_2$
error guarantee is achievable with $r = k \cdot \poly(\log n)$ measurements, under the assumption that
the algorithm has access to the exact value of $\|x\|_2^2$ as well as specific Fourier
coefficients of $x$ (or approximate values to these quantities that 
come from a distribution that depends only on the exact values). While in some applications this may be possible, 
it is not hard to show that this assumption cannot be realized by any 
linear sketch unless $r \geq n$ (nor by any low space streaming algorithm or low communication
protocol). The main result in \cite{ghrsw12} relevant to this problem is that if the adversary
can read the sketching matrix $A$, but is required to stream through the entries in a single pass using 
logarithmic space, then it cannot generate a query $x$ for which the output of the algorithm on $Ax$ does
not satisfy the $\ell_2/\ell_2$ error guarantee. This is quite different from the problem considered here, 
since we consider multiple adaptively
chosen queries rather than a single query, and we do not allow direct access to $A$ but rather only observe $A$
through the outputs $f(Ax^i)$. 

We note that other work has observed the danger of using the output $f(Ax)$ to create an input $x'$ for which the 
value $f(Ax')$ is used \cite{agm12,agm12b}. 
Their solution to this problem is just to use a new sketching matrix $A'$
drawn from $\pi$, and instead query $f(A'x')$. As mentioned, it may not be possible to do this, e.g.,
if $x'$ is a perturbation to $x$, one would need to compute $A'x'$ without knowing $x'$ (since $x$ may only be
known through the sketch $Ax$). Other work \cite{ipw11,pw12} has also considered the power of adaptively choosing 
matrices to achieve fewer measurements in compressed sensing; this is orthogonal to our work since we
consider adaptively chosen inputs rather than adaptively chosen sketches. 

Sketching in adversarial environments was also the motivation
for~\cite{MironovNS08}. However, they consider an adversarial multi-party
model that is different from ours.
\subsection{Our Techniques and Proof Overview} 
We prove our main theorem by considering the following game between two
parties, Alice and Bob. Alice chooses an $r \times n$ matrix $A$ from
distribution $\pi$. Bob makes a sequence of queries $x^1, \ldots, x^s\in\R^n$ to
Alice, who only sees $Ax^i$ on query $i$. Alice responds by telling Bob the
value $f(Ax^i)$. We stress that here $f$ is an arbitrary function here that
need not be efficiently computable, but for now we assume that $f$ uses no
randomness. This restriction can be removed easily as we show later.  Bob's
goal is to {\it learn} the row space $R(A)$ of Alice, namely the at most
$r$-dimensional subspace of $\mathbb{R}^n$ spanned by the rows of $A$.  If Bob
knew $R(A)$, he could, with probability $1/2$ query $0^n$ and with probability
$1/2$ query a vector in the kernel of $A$. Since Alice cannot distinguish the
two cases, and since the norm in one case is $0$ and in the other case
non-zero, she cannot provide a relative error approximation. Our main theorem
gives an algorithm (which can be executed efficiently by Bob) that learns
$r-O(1)$ orthonormal vectors that are almost contained in $R(A).$ While this
does not give Bob a vector in the kernel of $A,$ it effectively reduces
Alice's row space to be constant dimensional thus forcing her to make a
mistake on sufficiently many queries.

\paragraph{The conditional expectation lemma.}
In order to learn $R(A)$, Bob's initial query is drawn from the multivariate
normal distribution $N(0, \tau I_n)$, where $\tau I_n$ is the covariance
matrix, which is just a scalar $\tau$ times the identity matrix $I_n$.  This
ensures that Alice's view of Bob's query $x$, namely, the projection $P_Ax$ of
$x$ onto $R(A)$, is spherically symmetric, and so only depends on
$\|P_Ax\|_2$. Given $\|P_Ax\|_2$, Alice needs to output $0$ or $1$ depending
on what she thinks the norm of $x$ is. The intuition is that since Alice has a
proper subspace of $\mathbb{R}^n$, she will be confused into thinking $x$ has
larger norm than it does when $\|P_Ax\|_2$ is slightly larger than its
expectation (for a given $\tau$), that is, when $x$ has a non-trivial
correlation with $R(A)$. Formally, we can prove a conditional expectation
lemma showing that there exists a choice of $\tau$ for which $\E_{x \sim
N(0, \tau \mathrm{Id}_r)}\left[\|P_Ax\|_2^2 \mid f(Ax) = 1\right] - \E_{x \sim N(0, \tau
\mathrm{Id}_r)}\left[\|P_Ax\|_2^2\right]$ is non-trivially large. This is done by showing that the sum of this
difference over all possible $\tau$ in a range $[1, B]$ is noticeably
positive. Here $B$ is the approximation factor that we tolerate.
In particular, there exists a $\tau$ for which this difference is
large. To show the sum is large, for each possible condition $v =
\|P_Ax\|_2^2$, there is a probability $q(v)$ that the algorithm outputs $1$,
and as we range over all $\tau$, $q(v)$ contributes both positively and
negatively to the above difference based on $v$'s weight in the $\chi^2$-distribution 
with mean $r \cdot \tau$. The overall contribution of $v$ can be shown to be
zero. Moreover, by correctness of the sketch, $q(v)$ must typically be close to $0$ for
small values of $v,$ and typically close to $1$ for large values of $v.$ 
Therefore $q(v)$ zeros out some of the negative contributions that $v$ would otherwise
make and ensures some positive contributions in total.

\paragraph{Boosting a small correlation.}
Given the conditional expectation lemma we we can find many independently
chosen $x^i$ for which each $x^i$ has a slightly increased expected projection
onto Alice's space $R(A)$. At this point, however, it is not clear how to
proceed unless we can aggregate these slight correlations into a single vector
which has very high correlation with $R(A).$ We accomplish this by arranging
all $m=\poly(n)$ positively labeled vectors $x^i$ into an $m\times n$ matrix $G$ and
computing the top right singular vector~$v^*$ of $G.$ Note that this can be
done efficiently.  We show that, indeed, $\|P_Av^*\|\ge 1-1/\poly(n).$ In
other words $v^*$ is almost entirely contained in $R(A).$ This step is crucial
as it gives us a way to effectively reduce the dimension of Alice's space
by~$1$ as we will see next.

\paragraph{Iterating the attack.}
After finding one vector inside Alice's space, we are unfortunately not done. In fact 
Alice might initially use only a small fraction of her rows and switch to a new set
of rows after Bob learned her initial rows.
We thus iterate the previously described attack as follows. Bob now makes
queries from a multivariate normal distribution inside of the subspace
orthogonal to the the previously found vector. In this way we have effectively
reduced the dimension of Alice's space by $1$, and we can repeat the attack
until her space is of constant dimension, at which point a standard
non-adaptive attack is enough to break the sketch. Several complications
arise at this point. For example, each vector that we find is only approximately contained
in $R(A).$ We need to rule out that this approximation error could help Alice.
We do so by adding a sufficient amount of global Gaussian noise to our query
distribution. This has the effect of making the distribution statistically
indistinguishable from a query distribution defined by vectors that are exactly
contained in Alice's space. Of course, we then also need a generalized
conditional expectation lemma for such distributions.

\subsection*{Paper Outline}
We start with some technical preliminaries in \sectionref{prelims}.
We then prove the conditional expectation lemma in \sectionref{conditional}. The
proof of this lemma requires rather detailed information about averages of 
$\chi^2$-distributions in certain intervals. The development of these
bounds is contained in \sectionref{chi-squared}. 
In \sectionref{attack} we present and analyze our 
complete adaptive attack. The proof again requires several technical
ingredients. One tool (given in \sectionref{subspaces}) 
relates a distance function between two subspaces to the
statistical distance of certain distributions that we use in our attack. 
The other tool in
\sectionref{net} analyzes the top singular vector of certain biased Gaussian
matrices arising in our attack.
In \sectionref{applications} we give applications
to compressed sensing, data streams, and distributed functional monitoring. 

\section{Preliminaries}
\sectionlabel{prelims}

\paragraph{Notation.}
Given a subspace $V\subseteq\R^n,$ we denote by $P_V$ the orthogonal
projection operator onto the space~$V.$ The orthogonal complement of a linear
space~$V$ is denoted by $V^\bot.$
When $X$ is a distribution we use $x\sim X$ to indicate that $x$ is a random
variable drawn according to the distribution~$X.$

\paragraph{Linear Sketches.}
A linear sketch is given by a distribution~$\cM$ over $r\times n$ matrices 
and an evaluation mapping $F\colon\R^{r\times n}\times \R^r\to R$ where $R$
is some output space which we typically choose to be $R=\bits.$ The
algorithm initially samples a matrix $A\sim\cM.$ The answer to each
query $x\in\R^n$ is then given by $F(A,Ax).$ Since the evaluation map $F$ is not 
restricted in any way, the concrete representation of $A$ as a matrix is not
important. We will therefore identify $A$ with its image, an $r$-dimensional subspace of
$\R^n$ (w.l.o.g. $A$ has full row rank). In this case, we can write an instance of
a sketch as a mapping $f\colon\R^n\to R$ satisfying the identity $f(x) =
f(P_Ax).$ In this case we may write $f\colon A\to\bits$ even though $f$ is
defined on all of $\R^n$ via orthogonal projection onto~$A.$

\paragraph{Distributions.}
We denote the $d$-dimensional Gaussian distribution with mean $\mu\in\R^d$ and
independent coordinates with variance $\sigma^2\in\R$ by $N(\mu,\sigma^2)^d.$  
The statistical distance (or total variation distance) between two
distributions $X,Y$ is denoted by $\|X-Y\|_\tv.$

\section{Certain Averages of $\chi^2$-distributions}
\sectionlabel{chi-squared}

In this section we develop the main technical ingredients for our conditional
expectation lemma. Specifically, we will work in $\R^d$ and consider weighted 
averages of the $\chi^2$-distribution in certain intervals. 
The density function of the squared Euclidean norm of a $d$-dimensional
standard Gaussian variable is given by $\nu(s)=
s^{d/2-1}e^{-s/2}/2^{d/2}\Gamma(d/2).$ 
We let
$\nu_{\tau,d}\colon[0,\infty)\to[0,1]$ be the density function of a $\chi^2$-distribution with
$d$-degrees of freedom and expectation $\tau.$ Note that this coincides with
the density function of the squared norm of a
$d$-dimensional Gaussian variable $N(0,\tau/d)^d$ which we will denote as:
\begin{equation}\equationlabel{nu-density}
\nu_{\tau,d}(s) 
= 
\frac{d\left(\frac{sd}{\tau}\right)^{d/2-1}e^{-\frac {sd}{2\tau}}}{\tau 2^{d/2}\Gamma(d/2)}
\mper
\end{equation}
Here we used that $\nu_{\tau,d}(s)=d\nu(sd/\tau)/\tau.$
We will omit the subscript $d$ whenever it is clear from the context.
Further, let $\GammaD_d$ denote the probability measure on $[0,\infty)$ of the
Gamma distribution given by the density $\gamma_d\colon(0,\infty)\to[0,1],$
\begin{equation}\equationlabel{gamma-density}
\gamma_d(x) = \frac{x^{d-1}e^{-x}}{\Gamma(d)}\mper
\end{equation}

\begin{lemma}\lemmalabel{taunu}
Let $0\le a \le b$ and let $s\ge 0.$ Then, 
\begin{align}
\int_a^b s\nu_\tau(s)\rd \tau
& = 
\frac{s}{1-2/d}\GammaD_{d/2-1}\left(\left[\frac {sd}{2b},\frac {sd}{2a}\right]\right) 
\\
\int_a^b \tau\nu_\tau(s)\rd \tau
& = \frac s{1-6/d+8/d^2}\cdot
\GammaD_{d/2-2}\left(\left[\frac {sd}{2b},\frac {sd}{2a}\right]\right) 
\mper
\end{align}
\end{lemma}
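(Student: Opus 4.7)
The plan is to reduce both integrals to incomplete integrals of the Gamma density by a single substitution, and then clean up the prefactors using the functional equation of $\Gamma$.

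Concretely, I would substitute $u = sd/(2\tau)$, so that $\tau = sd/(2u)$, $d\tau/\tau = -du/u$, and $d\tau = -(sd/(2u^2))\,du$. Under this substitution,
\begin{equation*}
\nu_\tau(s)
= \frac{d(sd/\tau)^{d/2-1}e^{-sd/(2\tau)}}{\tau\, 2^{d/2}\Gamma(d/2)}
= \frac{d\, u^{d/2-1}e^{-u}}{2\tau\,\Gamma(d/2)},
\end{equation*}
using $(sd/\tau)^{d/2-1} = (2u)^{d/2-1}$, which turns $\nu_\tau(s)$ into a clean multiple of $u^{d/2-1}e^{-u}/\tau$. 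The limits $\tau\in[a,b]$ become $u\in[sd/(2b),\,sd/(2a)]$ (with the orientation flipping, which absorbs the minus sign from $d\tau$).

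For the first integral I use $d\tau/\tau = -du/u$ directly, which gives
\begin{equation*}
s\,\nu_\tau(s)\,d\tau = -\frac{sd}{2\Gamma(d/2)}\, u^{d/2-2}e^{-u}\,du.
\end{equation*}
The resulting integral over $u$ is $\Gamma(d/2-1)\cdot\Gamma_{d/2-1}\!\left([sd/(2b),\,sd/(2a)]\right)$ by the defining density $\gamma_{d/2-1}$ in \eqref{eq:gamma-density}. Then the prefactor simplifies via $\Gamma(d/2) = (d/2-1)\Gamma(d/2-1)$:
\begin{equation*}
\frac{d\,\Gamma(d/2-1)}{2\Gamma(d/2)} = \frac{d}{d-2} = \frac{1}{1-2/d},
\end{equation*}
which yields the first formula.

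For the second integral I instead use $d\tau = -(sd/(2u^2))\,du$, so that
\begin{equation*}
\tau\,\nu_\tau(s)\,d\tau = \frac{d\, u^{d/2-1}e^{-u}}{2\Gamma(d/2)}\,d\tau
= -\frac{sd^2}{4\Gamma(d/2)}\, u^{d/2-3}e^{-u}\,du,
\end{equation*}
whose integral is $\Gamma(d/2-2)\cdot\Gamma_{d/2-2}\!\left([sd/(2b),\,sd/(2a)]\right)$. The prefactor simplifies via $\Gamma(d/2) = (d/2-1)(d/2-2)\Gamma(d/2-2)$:
\begin{equation*}
\frac{d^2\,\Gamma(d/2-2)}{4\Gamma(d/2)} = \frac{d^2}{(d-2)(d-4)} = \frac{1}{(1-2/d)(1-4/d)} = \frac{1}{1-6/d+8/d^2},
\end{equation*}
producing the second formula.

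There is no real obstacle here; the only thing to keep track of carefully is the choice between $d\tau/\tau = -du/u$ (which is cleaner for the $s$-weighted integral, since one factor of $1/\tau$ from $\nu_\tau$ gets absorbed) versus using $d\tau$ directly (which is cleaner for the $\tau$-weighted integral, since the $1/\tau$ from $\nu_\tau$ cancels against the explicit $\tau$). Everything else is bookkeeping with $\Gamma$-function identities and verifying that the limit swap cancels the minus sign. Implicitly $d>4$ is needed for the second identity so that $\Gamma(d/2-2)$ and $\gamma_{d/2-2}$ are defined; in our application $d$ will be taken large enough.
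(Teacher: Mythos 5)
Your proof is correct and is essentially the same as the paper's: the same substitution ($x=sd/(2\tau)$, which you call $u$), the same identification of the resulting integrand with the Gamma density via $\Gamma(d/2)=(d/2-1)\Gamma(d/2-1)$ and $\Gamma(d/2)=(d/2-1)(d/2-2)\Gamma(d/2-2)$, and the same bookkeeping with the flipped limits absorbing the sign. Your explicit remark that $d>4$ is needed for the second identity is a sound observation that the paper leaves implicit (it later assumes $d\ge d_0$ for a large constant $d_0$).
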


\begin{proof}
Applying \equationref{nu-density} and substituting
 $x=sd/2\tau,$ we have 
$\frac{\rd x}{\rd \tau} = -\frac {sd}{2\tau^2} = -\frac {2x^2}{sd}\mper$
It follows that $\frac 1\tau\rd \tau = -\frac 1x\rd x.$ Put $a'=sd/2b$ and
$b'=sd/2a.$
Thus,
\begin{align*}
\int_a^b \nu_\tau(s)\rd \tau
= \int_a^b 
\frac{d\left(\frac{sd}{\tau}\right)^{d/2-1}e^{-\frac{sd}{2\tau}}}{\tau
2^{d/2}\Gamma(d/2)}\rd \tau
= \int_{a'}^{b'} \frac{dx^{d/2-2}e^{-x}}{2\Gamma(d/2)}\rd x\mper
\end{align*}
On the other hand, $\Gamma(d/2)=(d/2-1)\Gamma(d/2-1)\mper$ Hence,
\[
\int_a^b \nu_\tau(s)\rd \tau
= \frac d{2(d/2-1)}
\GammaD_{d/2-1}\left(\left[a',b'\right]\right)
= \left(1+\frac 2{d-2}\right)
\GammaD_{d/2-1}\left(\left[a',b'\right]\right) \mper
\]
The second equation is shown similarly, again substituting $x=sd/2\tau$ and
noting that $\rd\tau = -\frac {sd}{2x^2}\rd x,$
\begin{align*}
\int_a^b \tau\nu_\tau(s)\rd \tau
= 
\int_a^b\frac{d\left(\frac{sd}{\tau}\right)^{d/2-1}e^{-\frac
{sd}{2\tau}}}
{2^{d/2}\Gamma(d/2)}\rd \tau
= \int_{a'}^{b'} \frac {sd^2}4 \cdot\frac{x^{d/2-3}e^{-x}}{\Gamma(d/2)}\rd x
= \frac {sd^2 \GammaD_{d/2-2}\left(\left[a',b'\right]\right) }
{4(d/2-1)(d/2-2)}
\mper
\end{align*}
Furthermore, 
\[
\frac{d^2}{4(d/2-1)(d/2-2)}
= \frac 1{4(1/4-1/2d-1/d+2/d^2)}
= \frac 1{(1-6/d+8/d^2)}
\]
\end{proof}

Let us introduce the function $\Delta\colon[0,\infty)\to\R_{\ge0},$ defined as
\begin{equation}
\Delta(s)
\defeq\int_d^{Bd} (s-\tau)\nu_\tau(s)\rd\tau\mper
\end{equation}
Here $B>4$ is some parameter that we will choose later.
\figureref{delta} illustrates the behavior of this function.
\begin{figure}[ht]
\begin{center}
\includegraphics{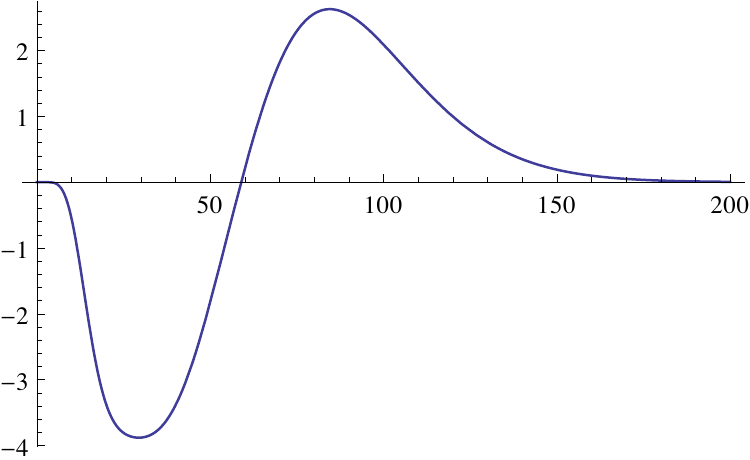}
\end{center}
\caption{$\Delta(s)$ plotted for $d=20$ and $B=4.$}
\figurelabel{delta}
\end{figure}
The next lemma states the properties of $\Delta$ that we will need.
\begin{lemma}\lemmalabel{negative}
Assume $d\ge 20.$
Then, for every $s\in[0,Bd/2],$ we have that $\Delta(s)<0.$
Moreover, for every $s\in[d,2d],$ we have $\Delta(s)<-s/3d.$
\end{lemma}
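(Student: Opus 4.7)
The plan is to apply \lemmaref{taunu} to rewrite $\Delta(s)$ as a single integral against $\gamma_{d/2-2}$ and then analyze that integral by exploiting concentration of $V\sim\GammaD_{d/2-2}$ about its mean $d/2-2$. Setting $a=d$, $b=Bd$ in \lemmaref{taunu} gives
\[
\Delta(s) = \frac{sd}{d-2}\GammaD_{d/2-1}(I)-\frac{sd^2}{(d-2)(d-4)}\GammaD_{d/2-2}(I),\qquad I:=[s/(2B),s/2].
\]
Using the pointwise identity $\gamma_{d/2-1}(v)=\tfrac{2v}{d-4}\gamma_{d/2-2}(v)$ (immediate from \equationref{gamma-density}), the two Gamma measures combine into a single integral, giving
\[
\Delta(s)=\frac{sd}{(d-2)(d-4)}\,J(s),\qquad J(s):=\int_{s/(2B)}^{s/2}(2v-d)\gamma_{d/2-2}(v)\rd v,
\]
so $\Delta(s)$ has the same sign as $J(s)$, and $J(s)<0$ is equivalent to $\E[V\mid V\in I]<d/2$ for $V\sim\GammaD_{d/2-2}$.

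For the first claim I would split on $s$. When $s\le d$, every $v\in I$ satisfies $v\le d/2$, so $(2v-d)\le 0$ pointwise and $J(s)\le 0$ immediately. When $s\in(d,Bd/2]$ I would use two monotonicity facts: $\E[V\mid V\in[a,b]]$ is nondecreasing in~$b$ and $\E[V\mid V\ge a]$ is nondecreasing in~$a$. Since $a=s/(2B)\le d/4$ throughout that range, it suffices to prove $\E[V\mid V\ge d/4]<d/2$. Combining the standard identity $\E[V\mathbf{1}_{V\ge a}]=k\prob{\GammaD_{k+1}\ge a}$ (with $k=d/2-2$) with the recurrence $\prob{\GammaD_{k+1}\ge a}=\prob{\GammaD_k\ge a}+\gamma_{k+1}(a)$ and $k\gamma_{k+1}(a)=a\gamma_k(a)$, this collapses to the scalar inequality $a\gamma_k(a)<2\prob{\GammaD_k\ge a}$ evaluated at $a=d/4=k/2+1$. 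For $k\ge 11$ this falls out of unimodality of $\gamma_k$ (mode at $k-1$): one has $\prob{\GammaD_k\ge a}\ge (k-1-a)\gamma_k(a)=(k/2-2)\gamma_k(a)$, so $2\prob{\GammaD_k\ge a}\ge (k-4)\gamma_k(a)>a\gamma_k(a)$ whenever $k>10$. The three remaining cases $k\in\{8,9,10\}$ (i.e.\ $d\in\{20,22,24\}$) I would handle by direct numerical evaluation using $\prob{\GammaD_k\ge a}=e^{-a}\sum_{j=0}^{k-1}a^j/j!$ and $\gamma_k(a)=a^{k-1}e^{-a}/(k-1)!$.

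For the quantitative second claim, integration by parts based on $\tfrac{\rd}{\rd v}v^{d/2-2}e^{-v}=v^{d/2-3}e^{-v}[(d/2-2)-v]$ puts $J(s)$ into the closed form
\[
J(s)=2a\gamma_{d/2-2}(a)-2b\gamma_{d/2-2}(b)-4\GammaD_{d/2-2}(I),\qquad a=s/(2B),\ b=s/2.
\]
For $s\in[d,2d]$ and $B>4$, the interval $I$ contains $[d/B,d/2]$, which straddles the mean $k=d/2-2$ of $\GammaD_k$, so $\GammaD_{d/2-2}(I)$ is bounded below by an absolute positive constant (essentially via the elementary fact $\prob{\GammaD_k\le k}\ge 1/2$ for integer~$k$, equivalent to $\prob{\mathrm{Poisson}(k)\ge k}\ge 1/2$). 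Meanwhile $a\le d/4$ sits well below the mean, so $a\gamma_k(a)$ is exponentially small in~$k$ by a Stirling estimate, and the term $-2b\gamma_k(b)$ only helps. Combining yields $J(s)\le -2+o(1)$, comfortably beating the required threshold $-(d-2)(d-4)/(3d^2)\ge -1/3$ that corresponds to $\Delta(s)<-s/(3d)$. The hardest part of the argument will be making the tail estimates on $\GammaD_k$ explicit and uniform over all $d\ge 20$, which is needed both for the three boundary cases $k\in\{8,9,10\}$ of the first claim and for turning the $o(1)$ term in the second claim into a bound that is strictly below $(d-2)(d-4)/(3d^2)$ at $d=20$.
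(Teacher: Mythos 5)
Your proposal is correct and takes a genuinely different route from the paper. The paper works directly with the two-term expression $\Delta(s)=\frac{s}{1-2/d}\GammaD_{d/2-1}(I)-\frac{s}{1-6/d+8/d^2}\GammaD_{d/2-2}(I)$ and handles three $s$-regimes ($[0,d-4]$, $[d-4,2d]$, $[2d,Bd/2]$) by asserting concentration and comparison estimates for the two Gamma measures separately (e.g.\ $\GammaD_{d/2-2}([d/4,d])\ge 1-1/d$, $\GammaD_{d/2-2}(I)\ge\GammaD_{d/2-1}(I)$), several of which are declared ``verified directly.'' Your algebraic observation $\gamma_{d/2-1}(v)=\frac{2v}{d-4}\gamma_{d/2-2}(v)$, which collapses $\Delta(s)$ to $\frac{sd}{(d-2)(d-4)}\int_I(2v-d)\gamma_{d/2-2}(v)\rd v$, is a real simplification: the sign condition becomes the transparent probabilistic statement $\E[V\mid V\in I]<d/2$ for $V\sim\GammaD_{d/2-2}$, and your integration by parts yields an exact closed form for $J(s)$ rather than a pair of tail estimates. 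This buys you a cleaner split: $s\le d$ is pointwise-trivial; $s>d$ reduces via two elementary monotonicity facts about truncated means to the scalar inequality $a\gamma_k(a)<2\Pr\{\GammaD_k\ge a\}$ at $a=k/2+1$, which you dispatch by unimodality for $k\ge 11$. The paper's approach buys brevity but at the cost of unverified constants; in fact a quick check at $d=20$ (i.e.\ $k=8$) gives $\GammaD_8([5,20])\approx 0.87$, which does \emph{not} satisfy the paper's claimed bound $\ge 1-1/d=0.95$, so the paper's proof appears to require a larger $d_0$ than stated while your route is sounder near the boundary. Two small things to watch: your collapse to ``three remaining cases $k\in\{8,9,10\}$'' implicitly assumes $d$ even so $k$ is an integer (the lemma permits odd $d$, hence half-integer $k$; the Poisson identities need replacing by incomplete-Gamma bounds there, though the unimodality argument extends), and the explicit numerics you defer are indeed necessary but routine---e.g.\ at $k=8$, $a\gamma_8(a)\le 5\gamma_8(5)\approx 0.52<2\Pr\{\GammaD_8\ge 5\}\approx 1.73$, and for the quantitative claim $J(s)\le 2a\gamma_8(a)-4\GammaD_8([5,10])\lesssim 1.0-2.6=-1.6$, comfortably below the required $-(d-2)(d-4)/3d^2=-0.24$.
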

\begin{proof}
First consider the case where $s\in [2d,Bd/2].$
By \lemmaref{taunu}, we have
\[
\Delta(s)
=
\frac{s}{1-2/d}\GammaD_{d/2-1}\left(\left[\frac {s}{2B},\frac {s}{2}\right]\right) 
- \frac s{1-6/d+8/d^2}\cdot
\GammaD_{d/2-2}\left(\left[\frac {s}{2B},\frac {s}{2}\right]\right)
\]
On the other hand, for this choice of $s,$ we have
\[
\GammaD_{d/2-2}\left(\left[\frac {s}{2B},\frac {s}{2}\right]\right)
\ge
\GammaD_{d/2-2}\left(\left[\frac {d}{4},d\right]\right)
\ge 1-1/d
\]
Here, the last step can be verified directly by using that $\GammaD_{d/2-2}$
is strongly concentrated around its mean $d/2-2$ and has variance bounded by
$\sqrt{d}.$ For $d\ge 20,$ the approximation we used is valid.
Hence,
\[
\Delta(s) \le s
\left( \frac{1}{1-2/d} 
- \frac {1-1/d}{1-6/d+8/d^2}\right)
= a\left( \frac{1-1/d}{1-3/d+2/d^2} 
- \frac {1-1/d}{1-6/d+8/d^2}\right)
< -\frac sd\mper
\]
Here we used our lower bound on $d$ again.

Now consider the case $s\in[d-4,2d].$ In this case we have
$\GammaD_{d/2-2}([s/2B,s/2])\ge \GammaD_{d/2-2}([d/B,d/2-2])\ge 1/2-1/d$ by
concentration bounds for $\GammaD$ and using that the median of
$\GammaD_{d/2-2}$ is at most $d/2-2.$ For the same reason,
$\GammaD_{d/2-2}([s/2B,s/2])\ge 1/2-1/d.$ Moreover, we have that
$\GammaD_{d/2-2}([s/2B,s/2])\ge \GammaD_{d/2-1}([s/2B,s/2])$ because
$\GammaD_{d/2-1}([d/2-2,\infty))\ge\GammaD_{d/2-2}([d/2-2,\infty]).$ This follows
because $\GammaD_{d/2-1}$ has larger mean and greater variance than
$\GammaD_{d/2-2}.$ Hence,
\[
\Delta(s)\ge s(1/2-1/d)
\left( \frac{1}{1-2/d} 
- \frac {1}{1-6/d+8/d^2}\right)\le - \frac s{2d}+\frac s{d^2}
\le -\frac s{3d}\mper
\]

Finally, let $s\in[0,d-4].$ In this case we have
$[s/2B,s/2]\subseteq[0,d/2-2].$ 
But, for every $x\in[0,d/2-2],$ we have 
\[
\gamma_{d/2-2}(x)
= \gamma_{d/2-1}(x)\left(\frac{d/2-2}{x}\right)
\ge \gamma_{d/2-1}(x)\mper
\]
Hence, $\Delta(s)< 0.$

\end{proof}

Our main lemma in this section is stated next.
\begin{lemma}
\lemmalabel{s-tau}
Let $d\ge d_0$ for a sufficiently large constant $d_0.$ Let $B>4.$
Let $h\colon[0,\infty)\to[0,1]$ be any 
function satisfying the properties: 
\begin{enumerate}
\item\itemlabel{u-lower} $\int_{Bd/2}^{2Bd} (1-h(s))\rd s\le 1/Bd,$
\item\itemlabel{l-upper} $\int_0^{2d}h(s)\rd s\le 1/d.$
\end{enumerate}
Then, we have
\begin{equation}
\int_{s=0}^\infty\int_{\tau=l}^u (s-\tau)\nu_\tau(s)h(s)\rd \tau\rd s \ge
\frac d4
\mper
\end{equation}
\end{lemma}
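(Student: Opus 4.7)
The plan is to compare the unknown function $h$ to a convenient indicator function $g(s)=\mathbf{1}[s\ge 2d]$ and exploit the key identity
\[
\int_0^\infty \Delta(s)\rd s \;=\; \int_d^{Bd}\int_0^\infty (s-\tau)\nu_\tau(s)\rd s\rd\tau \;=\; 0,
\]
which holds by Fubini because $\nu_\tau$ has mean $\tau$. Using this, I would write
\[
\int_0^\infty \Delta(s) h(s)\rd s
\;=\; \int_0^\infty \Delta(s) g(s)\rd s + \int_0^\infty \Delta(s)\bigl(h(s)-g(s)\bigr)\rd s
\;=\; -\int_0^{2d}\Delta(s)\rd s + \int_0^\infty \Delta(s)\bigl(h(s)-g(s)\bigr)\rd s.
\]
The first term is the ``main'' contribution. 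Since $\Delta<0$ on $[0,Bd/2]$ by \lemmaref{negative}, and in fact $\Delta(s)<-s/3d$ on $[d,2d]$, a direct integration gives $-\int_0^{2d}\Delta(s)\rd s \ge \int_d^{2d} s/(3d)\rd s = d/2$.

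The remaining task is to show that the error term $\int \Delta(h-g)\rd s$ is bounded below by $-O(1)$. I would partition $[0,\infty)$ into four regions dictated by the hypotheses on $h$ and the sign structure of $\Delta$. On $[0,2d]$ one has $h-g=h$, and the formula from \lemmaref{taunu} gives $|\Delta(s)|=O(s)=O(d)$, so hypothesis \itemref{l-upper} yields $|\int_0^{2d}\Delta h|\le O(d)\cdot 1/d=O(1)$. On $[2d,Bd/2]$ one has $h-g=h-1\le 0$ and $\Delta<0$ by \lemmaref{negative}, so $\Delta(h-g)\ge 0$ pointwise and this region contributes non-negatively — this is the critical observation that tames what could otherwise be a wildly negative term of order $B^2 d$. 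On $[Bd/2,2Bd]$ one has $|h-g|=1-h$ with $\int(1-h)\le 1/Bd$ by \itemref{u-lower}, and $|\Delta(s)|=O(Bd)$ on this range, giving again $O(1)$.

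The one region that needs a bit more care is the tail $[2Bd,\infty)$, where $h-g=h-1\le 0$ is not controlled by any hypothesis. Here I would show $\Delta\ge 0$, so $\Delta(h-g)\le 0$, and then bound $\int_{2Bd}^\infty \Delta\rd s$ directly. The non-negativity of $\Delta$ for $s\ge 2Bd$ follows from the representation
\[
\Delta(s)=\tfrac{s}{1-2/d}\,\GammaD_{d/2-1}([s/2B,s/2]) - \tfrac{s}{1-6/d+8/d^2}\,\GammaD_{d/2-2}([s/2B,s/2]),
\]
combined with the pointwise ratio $\gamma_{d/2-1}(x)/\gamma_{d/2-2}(x)=2x/(d-4)\ge 2$ on the interval $[d,s/2]\supseteq[s/2B,s/2]$, which forces the first Gamma measure to dominate. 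The upper bound $\int_{2Bd}^\infty \Delta(s)\rd s = o(1)$ follows by swapping order of integration and using standard $\chi^2$-tail bounds: for each $\tau\le Bd$, the event $\{s\ge 2\tau\}$ under $\nu_\tau$ has probability exponentially small in $d$. The expected obstacle is exactly this tail estimate, since one must verify it holds uniformly in $B$ under the hypothesis $d\ge d_0$; for $B$ polynomially bounded (as is the case throughout the paper) the exponential-in-$d$ decay easily dominates any polynomial factors.

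Putting the four bounds together yields $\int \Delta(s)h(s)\rd s \ge d/2 - O(1) \ge d/4$ once $d$ exceeds the absolute constant~$d_0$ hidden in the $O(1)$ terms.
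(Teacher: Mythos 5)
Your proof is correct and takes essentially the same route as the paper: both exploit the identity $\int_0^\infty\Delta=0$, the sign structure of $\Delta$ (negative on $[0,Bd/2]$, with $\Delta<-s/3d$ on $[d,2d]$ supplying the main gain), the two hypotheses to tame the outer regions, and $\chi^2$ tail bounds beyond $2Bd$; your comparison to $g=\mathbf{1}[s\ge 2d]$ is a clean repackaging of the paper's region-by-region estimate over $L=[0,2d)$, $M=[2d,Bd/2)$, $U=[Bd/2,\infty)$. One small difference: to control the far tail you prove $\Delta\ge0$ on $[2Bd,\infty)$, whereas the paper instead bounds only the negative part $-\int_d^{Bd}\tau\nu_\tau(s)\rd\tau$ of $\Delta$ there; both tail arguments use the same concentration estimate, and both (the paper's included) implicitly need $\log B=O(d)$ for that estimate to hold, a constraint the lemma statement leaves tacit but which your write-up correctly flags.
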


\begin{proof}
First observe that
\[
\int_{s=0}^\infty\int_{\tau=l}^u (s-\tau)\nu_\tau(s)h(s)\rd \tau\rd s
= \int_{s=0}^\infty h(s)\Delta(s)\rd s
\]
Moreover, since $\int_0^\infty s\nu_\tau(s)\rd s = \tau,$ we have
\begin{equation}\equationlabel{zero}
\int_{s=0}^\infty \Delta(s) = 0\mper
\end{equation}
Let us consider the three intervals 
\[
L=[0,2d),\quad M=[2d,Bd/2),\quad U=[t,\infty).
\]
\begin{claim}
Without loss of generality,
$\int_M h(s)\Delta(s)\rd s \ge \int_M \Delta(s)\rd s$
\end{claim}
\begin{proof}
\lemmaref{negative} tells us that $\Delta(s)<0$ for all  $s\in M.$ Since we're
interested in lower bounding $\int_{s=0}^\infty h(s)\Delta(s),$ we can
therefore assume without loss of generality that $h(s)=1$ for all $s\in M.$ 
\end{proof}
\begin{claim}
$\int_U h(s)\Delta(s)\rd s \ge \int_U \Delta(s)\rd s - 6\mper$
\end{claim}
\begin{proof}
The claim follows from the first condition on $h$ which implies that
$h(s)=1$ almost everywhere in
the interval $I=[Bd/2,2Bd].$ In particular,
\begin{align*}
\int_I h(s)\Delta(s)\rd s
& = \int_I\Delta(s)\rd s + \int_I(1-h(s))\Delta(s)\rd s\\
& \ge \int_I \Delta(s)\rd s - \frac1{Bd}\max_{s\in I}|\Delta(s)|
\ge \int_I \Delta(s)\rd s- 4\mper
\end{align*}
Here we used that $|\Delta(s)|\le 4Bd.$
Moreover, for every $\tau\in[d,Bd]$ we have that
$\int_{[2Bd,\infty)}\tau \nu_\tau(s)\rd s\le 1/2Bd$ by standard tail bounds for
$\nu_\tau$ and sufficiently large $d_0.$
This implies
\[
\int_U h(s)\Delta(s)\rd s \ge \int_I h(s)\Delta(s)\rd s - 1
\mper
\]
Similarly, $\int_I\Delta(s) \ge \int_U\Delta(s) - 1.$ The claim follows by
combining these statements.
\end{proof}

\begin{claim}
$\int_L h(s)\Delta(s)\rd s \ge \int_L \Delta(s) + d/3 - 4.$
\end{claim}
\begin{proof}
Here we use the second condition on the claim which implies
\[
\int_L h(s)\Delta(s)\rd s \ge -\max_L|\Delta(s)|\int_L h(s)\ge -4\mcom
\]
where we used that $|\Delta(s)|\le 4d$ in this range.
On the other hand, by \lemmaref{negative}, $\Delta(s)<0$ for all $s\in L$ and
for $s\in[d,2d]$ we have $\Delta(s)<-s/3d.$ Hence,
\[
\int_L \Delta(s)\rd s
\le -\frac1{3d}\int_{d}^{2d}s\rd s
\le - \frac d3\mper
\]
\end{proof}
Combining all three claims we get
\[
\int_0^\infty h(s)\Delta(s)\rd s
\ge \int_0^\infty \Delta(s)\rd s + \frac d3 - 10
= \frac d3 - 10 \mcom
\]
where we used \equationref{zero} in the last step. For sufficiently large $d,$ 
we have $d/3-10\ge d/4$ and the lemma follows.
\end{proof}

\section{Conditional Expectation Lemma}
\sectionlabel{conditional}

The key tool in our algorithm is what we call the \emph{conditional
expectation lemma}.
Informally, it shows that we can always find a distribution over inputs that
have a non-trivially large correlation with the unknown subspace used by the
linear sketch. Our presentation here, however, will not need the
interpretation in terms of linear sketches.
Fix a $d$-dimensional linear subspace $U\subseteq\R^n.$ Throughout this
section, we think of $d$ as being lower bounded by a sufficiently large
constant.
We will consider functions of the type
$f\colon \R^n\to\bits$ which satisfy the identity $f(x) = f(P_Ux)$ for all
$x\in\R^n.$ To indicate that this identity holds we will write $f\colon
U\to\bits.$
As explained in \sectionref{prelims}, we can think of these
functions as instances of a linear sketch. Our presentation here will not need
this fact though. 

\begin{definition}[Subspace Gaussian]
Let $U\subseteq \R^n$ be a linear subspace of $\R^n.$ 
We say that a family of distributions $\cG(U)=\{g_\tau\}_{\tau\in(0,\infty)}$ 
is a \emph{subspace Gaussian} family if
\begin{enumerate}
\item $P_Ug_\tau$ is distributed like a standard Gaussian variable inside $U$
satisfying $\E\|P_Ug_\tau\|^2=\tau.$
\item $P_{U^\bot}g_\tau$ is a spherical Gaussian distribution 
that does not depend on $\tau$ and is moreover 
statistically independent of $P_Ug_\tau.$
\end{enumerate}
\end{definition}

\begin{lemma}
\lemmalabel{suffstat}
The norm $\|P_Ug_\tau\|^2$ is a sufficient statistic for 
a subspace Gaussian family $\cG(U)=\{g_\tau\}_\tau.$
Formally, for every $s>0,$ the distribution of $g_\tau$ is independent 
of $\tau$ under the condition that $s=\|P_Ug_\tau\|^2.$
\end{lemma}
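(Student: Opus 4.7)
The plan is to decompose $g_\tau$ into its two orthogonal components and exploit the spherical symmetry of the Gaussian distribution inside $U$. Specifically, we write
\[
g_\tau = P_U g_\tau + P_{U^\bot} g_\tau,
\]
where by the definition of a subspace Gaussian family these two summands are statistically independent, and $P_{U^\bot} g_\tau$ has a distribution that does not depend on $\tau$ at all.

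Hence it suffices to show that, conditional on $\|P_U g_\tau\|^2 = s$, the law of $P_U g_\tau$ is independent of $\tau$. First I would observe that $P_U g_\tau$ is a standard Gaussian inside $U$ with $\E\|P_U g_\tau\|^2 = \tau$, which means (choosing any orthonormal basis of $U$) it has density on $U$ proportional to $\exp(-d\|y\|^2/2\tau)$, a function only of $\|y\|$. Since this density is spherically symmetric on $U$, the conditional distribution of $P_U g_\tau$ given $\|P_U g_\tau\|^2 = s$ is the uniform distribution on the sphere of radius $\sqrt{s}$ inside $U$. Crucially, that uniform distribution has no dependence on $\tau$.

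Combining the two pieces: conditional on $\|P_U g_\tau\|^2 = s$, the component $P_U g_\tau$ is uniform on the radius-$\sqrt{s}$ sphere in $U$, and independently (by the independence clause in the definition) $P_{U^\bot} g_\tau$ is drawn from the same fixed spherical Gaussian on $U^\bot$ that it always has. Neither factor depends on $\tau$, so the joint conditional distribution of $g_\tau$ does not depend on $\tau$, which is the claim.

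I do not anticipate a genuine obstacle here; the lemma is essentially a restatement of the spherical symmetry of a Gaussian, together with the independence built into the definition of $\cG(U)$. The only step that requires mild care is justifying that the regular conditional distribution given the event $\{\|P_U g_\tau\|^2 = s\}$ is well-defined and agrees with the intuitive ``uniform on the sphere'' description; this follows from a standard disintegration argument (or from a direct change to polar coordinates on $U$), and it is independent of $\tau$ because the radial density factors out.
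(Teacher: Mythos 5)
Your proof is correct and follows essentially the same route as the paper's: decompose $g_\tau$ into its $U$ and $U^\bot$ components, use the independence and $\tau$-invariance of the $U^\bot$ part from the definition, and then use the fact that the Gaussian density on $U$ depends on its argument only through the squared norm. The paper packages this last step as an appeal to the factorization theorem for sufficient statistics (citing Chang and Pollard for the disintegration technicality), whereas you make the same fact explicit by identifying the conditional law of $P_U g_\tau$ given $\|P_U g_\tau\|^2 = s$ as the uniform distribution on the radius-$\sqrt{s}$ sphere in $U$; these are two phrasings of the same argument, and you already flag the only delicate point (conditioning on a measure-zero event) and resolve it the same way the paper does.
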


\begin{remark}
The reader concerned about the condition $\|P_Ug_\tau\|^2=s$ (which has
probability $0$ under $g_\tau$) is referred to the excellent
article of Chang and Pollard~\cite{ChangP97} (Example 6) where it is shown how to 
formally justify this conditional distribution using the notion of a \emph{disintegration}.
Specifically, here we mean that the distributions $g_\tau$ have a joint
disintegration in terms of the variable $\|P_Ug_\tau\|^2.$ As explained
in~\cite{ChangP97}, we can prove the above lemma by appealing to the
factorization theorem for sufficient statistics described therein.
\end{remark}

\begin{proof}[Proof of \lemmaref{suffstat}]
Note that $g_\tau=g_1+g_2$ where $g_2$ is some distribution independent of
$\tau$ supported on $U^\bot$ and $g_1$ is supported on $U.$
By spherical symmetry of both $g_1$ and $g_2$ we
may assume without loss of generality that $U$ is a coordinate subspace, say,
the first $d=\dim(U)$ coordinates of the standard basis. Since $g$ is
independent of $g_\tau$ and supported on a disjoint set of coordinates, it
suffices to verify the claim for $g_1.$
Specifically, by the Factorization theorem for sufficient 
statistics~(see \cite{ChangP97}), we need to
show that the density of $g_1$ can be factored into the product of two
functions such $f(x)$ and $h_\tau(x)$ that $f$ does not depend on $\tau$ and
$h_\tau(x)$ depends on $\tau$ but is a function of the parameter
$\|x\|^2.$ This follows directly from the fact that the Gaussian
density at a point $x$ depends only on $\|x\|^2.$
\end{proof}

The following definition captures the condition that $f$ should evaluate to
$1$ on inputs that have large norm and should evaluate to $0$ on
inputs that have small norm.

\begin{definition}[Soundness]
\definitionlabel{soundness}
We say that a function $f\colon U\to\bits$ is \emph{$B$-sound} for a subspace Gaussian family
$\cG(U),$ where $\dim(U)=d,$ if it satisfies the requirements:
\begin{enumerate}
\item $\int_{Bd/2}^{2Bd} \E\left[ f(g_\tau)\mid \|P_Ug_\tau\|^2=s\right]\rd s
\le 1/Bd\mper$
\item $\int_0^{2d}\E\left[ f(g_\tau)\mid \|P_Ug_\tau\|^2= s\right]\rd s\le 1/d.$
\end{enumerate}
\end{definition}

We are ready to state and prove the Conditional Expectation Lemma.

\begin{lemma}
\lemmalabel{conditional}
Let $B\ge 4.$
Let $\cG(U)$ be a subspace Gaussian family where $U$ has dimension
sufficiently large dimension $d\ge d_0.$
Suppose $f\colon U\to\bits$ is $B$-sound for $\cG(U).$ 
Then, there exists $\tau\in[d,Bd]$ such that 
\begin{enumerate}
\item $\E \left[\|P_Ug_\tau\|^2 \, \Big|\, f(g_\tau)=1\right]
\ge \E\left[\|P_Ug_\tau\|^2\right] + \frac1{4B}$
\item $\Pr\Set{f(g_\tau)=1}\ge \frac1{40B^2d}\mper$
\end{enumerate}
\end{lemma}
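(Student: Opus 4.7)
The plan is to apply \lemmaref{s-tau} to the function
\[
h(s) \;:=\; \E\bigl[f(g_\tau) \;\big|\; \|P_Ug_\tau\|^2 = s\bigr],
\]
which, by the sufficient-statistic property (\lemmaref{suffstat}), does not depend on $\tau$ and hence defines a well-defined map $h\colon[0,\infty)\to[0,1]$. The two $B$-soundness requirements on $f$ translate directly into the two hypotheses on $h$ required by \lemmaref{s-tau} (recalling that $f=1$ on large norms makes $1-h$ small on $[Bd/2,2Bd]$, while $f=0$ on small norms makes $h$ small on $[0,2d]$). This yields
\[
\int_0^\infty\!\int_d^{Bd}(s-\tau)\nu_\tau(s)h(s)\rd\tau\rd s \;\ge\; d/4.
\]

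Next I would swap the order of integration via Fubini. Since $\nu_\tau$ is the density of $\|P_Ug_\tau\|^2$ and has mean $\tau$, the inner integral over $s$ simplifies:
\[
\int_0^\infty (s-\tau)\,\nu_\tau(s)\,h(s)\rd s \;=\; p(\tau)\cdot D(\tau),
\]
where $p(\tau):=\Pr\{f(g_\tau)=1\}$ and $D(\tau):=\E\bigl[\|P_Ug_\tau\|^2 \mid f(g_\tau)=1\bigr]-\tau$. Thus $\int_d^{Bd}p(\tau)D(\tau)\rd\tau\ge d/4$, and since the interval has length $(B-1)d$, a simple averaging step produces some $\tau^{*}\in[d,Bd]$ with $p(\tau^{*})D(\tau^{*})\ge 1/(4(B-1))\ge 1/(4B)$. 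Because $p(\tau^{*})\le 1$, this \emph{immediately} forces $D(\tau^{*})\ge 1/(4B)$, which is the first conclusion of the lemma.

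To deduce the second conclusion, I would bound the same integral from above in terms of $p(\tau^{*})$ via a cutoff: for any $T>0$,
\[
p(\tau^{*})D(\tau^{*}) \;=\; \int_0^\infty(s-\tau^{*})\,\nu_{\tau^{*}}(s)\,h(s)\rd s \;\le\; T\cdot p(\tau^{*}) \;+\; \int_T^\infty s\,\nu_{\tau^{*}}(s)\rd s.
\]
Choosing $T$ to be a small constant multiple of $Bd$, the tail integral is exponentially small in $d$ by standard $\chi^2$ concentration (since $\tau^{*}\le Bd$ so $T/\tau^{*}$ is bounded below by a constant $>1$). Combined with the lower bound $p(\tau^{*})D(\tau^{*})\ge 1/(4B)$ from the averaging step, this rearranges to $p(\tau^{*})\ge 1/(\mathrm{const}\cdot B^2 d)$, and tracking constants (using that $d\ge d_0$ makes the tail truly negligible relative to $1/(4B)$) recovers the stated $1/(40B^2 d)$.

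The main obstacle is exactly this second step: the averaging bound $p(\tau^{*})D(\tau^{*})\ge 1/(4B)$ alone does not separately control $p(\tau^{*})$ and $D(\tau^{*})$, and a priori $D(\tau^{*})$ could be as large as the conditional norm allows while $p(\tau^{*})$ is vanishingly small. The quantitative $\chi^2$-tail bound that caps $D(\tau^{*})=O(Bd)$ is what rules out this pathology and is the only nontrivial ingredient beyond \lemmaref{s-tau} and the sufficient-statistic property.
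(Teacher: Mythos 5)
Your proposal is correct and follows essentially the same path as the paper's proof: define $h(s)=\E[f(g_\tau)\mid\|P_Ug_\tau\|^2=s]$ via the sufficient-statistic property, invoke \lemmaref{s-tau} to get $\int_d^{Bd}\int_0^\infty (s-\tau)\nu_\tau(s)h(s)\,\rd s\,\rd\tau\ge d/4$, average over the interval of length at most $Bd$ to find $\tau^*$ with $\int_0^\infty(s-\tau^*)\nu_{\tau^*}(s)h(s)\,\rd s\ge 1/(4B)$, deduce the conditional-expectation gap from $p(\tau^*)\le 1$, and then use $\chi^2$ concentration above $\Theta(Bd)$ to convert the same integral lower bound into the lower bound $p(\tau^*)\ge \Omega(1/(B^2d))$. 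The paper phrases the last step as $\int_0^\infty\nu_\tau h\ge\frac{1}{10Bd}\int_0^\infty s\nu_\tau h$ using the fact that at most half the mass of $s\nu_\tau h$ lies above $5Bd$, while you phrase it with a cutoff $T=\Theta(Bd)$ applied to $(s-\tau^*)$; these are the same estimate in two dresses, and your constant tracking matches the paper's $1/(40B^2d)$.
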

\begin{proof}
Define the function $h\colon(0,\infty)\to\R$ by putting
\begin{equation}
h(s) = \E\left[f(g_\tau)\mid \|P_Ug_\tau\|^2=s\right]\mper
\end{equation}
Note that this is well-defined by \lemmaref{suffstat}.
Let us first rewrite the conditional expectation as follows.
\begin{align*}
\E\left[\|P_Ug_\tau\|^2 \, \Big|\, f(g_\tau)=1\right]
& = \int_0^\infty s\Pr\Set{ \|P_Ug_\tau\|^2=s \mid f(g_\tau)=1}\rd s\\
& = \int_0^\infty s\Pr\Set{ f(g_\tau)=1 \mid \|P_Ug_\tau\|^2=s }
\cdot \frac{\nu_\tau(s)}{\Pr\Set{ f(g_\tau)=1}} \rd s
\tag{by Bayes' rule}\\
& = \int_0^\infty  \frac{sh(s)\nu_\tau(s)}{\Pr\Set{ f(g_\tau)=1}} \rd s
\end{align*}
Note that here and in the following $\nu_\tau=\nu_{\tau,d},$ i.e., the
$\chi^2$-distribution has $d$ degrees of freedom corresponding to the
dimension of $U.$
\begin{claim}
The lemma follows from
follows from the following inequality:
\begin{equation}\equationlabel{subgoal}
\int_l^u\int_0^\infty(s-\tau)\nu_\tau(s)h(s)\rd s\rd \tau
\ge \frac{d}4.
\end{equation}
\end{claim}
\begin{proof}
Indeed, assuming the above inequality, it follows that there must be a
$\tau\in[d,Bd]$ such that 
\[
\int_0^\infty s\nu_\tau(s)h(s)\rd s
\ge 
\tau \int_0^\infty \nu_\tau(s)h(s)\rd s
+ \frac{d}{4Bd}
=\tau\Pr\Set{f(g_\tau)=1} 
+ \frac 1{4B}
\mper
\]
In particular,
\[
\E\left[\|P_Ug_\tau\|^2 \, \Big|\, f(g_\tau)=1\right]
=\int_0^\infty  \frac{sh(s)\nu_\tau(s)}{\Pr\Set{ f(g_\tau)=1}} \rd s
\ge \frac{\tau\Pr\Set{f(g_\tau)=1}+1/4B}{\Pr\Set{ f(g_\tau)=1}}
\ge \tau + \frac{1/4B}{\Pr\Set{f(g_\tau)=1}}.
\]
Since $\Pr\Set{f(g_\tau)=1}\le1,$ this gives us the first
conclusion of the lemma.
It remains to lower bound $\Pr\Set{f(g_\tau)=1}.$
Here we use that $\int_{5Bd}^\infty
s\nu_\tau(s)h(s)\rd s\le \frac12\int_0^\infty s\nu_\tau(s)h(s)\rd s,$ by standard 
concentration properties of $\nu_\tau.$
Hence,
\[
\Pr\Set{f(g_\tau)=1} 
= \int_0^\infty \nu_\tau(s)h(s)\rd s
\ge \frac1{10Bd}\int_0^\infty s\nu_\tau(s)h(s)\rd s
\ge \frac{1/4B}{10Bd}
= \frac{1}{40B^2d}\mper
\]
\end{proof}
By the previous claim, it suffices to prove \equationref{subgoal}. To do so we
will apply \lemmaref{s-tau}. The lemma in fact directly implies the claim, if
we can show that $h$ satisfies the properties required in \lemmaref{s-tau}.
It is easily verified that these properties coincide with the soundness
assumption on~$f.$ Thus,
\[
\int_l^u\int_0^\infty(s-\tau)\nu_\tau(s)h(s)\rd s\rd \tau
\ge \frac d3-2\ge \frac d4\mper
\]
This concludes the proof of \lemmaref{conditional}.
\end{proof}

The following corollary is a direct consequence of \lemmaref{conditional} that
states that there is one direction in the subspace that has increased
variance.

\begin{corollary}
\corollarylabel{conditional}
Let $\cG(U)$ satisfy the assumptions of \lemmaref{conditional}. 
Then, there is $\tau\in[d,Bd]$ and a vector $u\in U,$ satisfying,
\begin{enumerate}
\item $\E \left[\langle u,g_\tau\rangle^2 \, \Big|\, f(g_\tau)=1\right]
\ge \E\left[\langle u,g_\tau\rangle^2\right] + \frac 1{4Bd}$
\item $\Pr\Set{f(g_\tau)=1}\ge \frac1{40B^2d}\mper$
\end{enumerate}
\end{corollary}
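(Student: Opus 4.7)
The plan is to derive the corollary from \lemmaref{conditional} by a simple averaging (pigeonhole) argument over an orthonormal basis of $U$, so all the real work has already been done in the conditional expectation lemma.

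First, let $\{u_1, \dots, u_d\}$ be any orthonormal basis of $U$. Then by Parseval's identity,
\[
\|P_U g_\tau\|^2 \;=\; \sum_{i=1}^d \langle u_i, g_\tau\rangle^2,
\]
and this identity holds pointwise, so it remains true after taking any conditional expectation. Let $\tau \in [d, Bd]$ be the value guaranteed by \lemmaref{conditional}, and note that the second conclusion of the corollary (the lower bound $\Pr\{f(g_\tau)=1\}\ge 1/(40B^2d)$) then immediately matches the second conclusion of the lemma; so it only remains to produce the vector~$u$.

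The key step is to apply linearity of expectation to both sides of the first inequality in \lemmaref{conditional}:
\[
\sum_{i=1}^d \Bigl( \E\left[\langle u_i,g_\tau\rangle^2 \mid f(g_\tau)=1\right]
- \E\left[\langle u_i,g_\tau\rangle^2\right]\Bigr)
\;=\; \E\left[\|P_U g_\tau\|^2 \mid f(g_\tau)=1\right]
- \E\left[\|P_U g_\tau\|^2\right]
\;\ge\; \frac{1}{4B}.
\]
Since the left-hand side is a sum of $d$ terms whose total is at least $1/(4B)$, by pigeonhole at least one index~$i^*$ satisfies
\[
\E\left[\langle u_{i^*},g_\tau\rangle^2 \mid f(g_\tau)=1\right]
- \E\left[\langle u_{i^*},g_\tau\rangle^2\right]
\;\ge\; \frac{1}{4Bd},
\]
which is exactly the first conclusion of the corollary with $u = u_{i^*}$.

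There is essentially no obstacle here; the only thing to double-check is that Parseval survives the conditioning, which it does because the identity $\|P_Ug_\tau\|^2 = \sum_i\langle u_i,g_\tau\rangle^2$ holds for each sample point individually. The nontrivial content—finding a $\tau$ where the conditional second moment in $U$ is noticeably inflated—is already supplied by \lemmaref{conditional}, so the corollary is really just a statement that this inflation must be witnessed by at least one basis direction.
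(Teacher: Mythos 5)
Your proof is correct and matches the paper's proof exactly: both pick an orthonormal basis of $U$, use $\|P_U g_\tau\|^2 = \sum_i \langle u_i, g_\tau\rangle^2$, and apply an averaging (pigeonhole) argument to the first inequality of \lemmaref{conditional}, while the second conclusion carries over verbatim. Your version merely spells out the linearity-of-expectation and pigeonhole steps that the paper leaves implicit.
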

\begin{proof}
Pick an arbitrary orthonormal basis $u_1,\dots,u_d$ of $U.$ Since
$\|P_Ug_\tau\|^2=\sum_{i=1}^d\langle u_i,g_\tau\rangle^2,$ one of the basis
vectors must satisfy the conclusion of the lemma by an averaging argument.
\end{proof}

\subsection{Noisy orthogonal complements}

In this section we extend the conditional expectation lemma to a family of
distributions that will be important to us later on. We will fix a
$r$-dimensional subspace $A\subseteq\R^n.$
The family of
distributions we will define next isn't subspace Gaussian on $A,$ but rather 
subspace Gaussian on $A\cap V^\bot,$ where $V\subseteq A$ is a linear subspace
of $A$ of dimension $\dim(V)\le d-1.$

A distribution in this family is given by a subspace $V$ and a variance
$\sigma^2.$ Intuitively, the distribution corresponds to a Gaussian
distribution on the subspace $V^\bot$ of variance $\sigma^2$ 
plus a small Gaussian supported on all of $\R^n$ of constant variance
independent of $\sigma^2.$ The formal definition is given next.

\begin{definition}
\definitionlabel{complement-gaussian}
Let $\sigma>0.$
Given a subspace $V\subseteq A$ of dimension $t\le r-1$ and let $d=r-t.$ We
define the distribution $G(V^\bot,\sigma^2)$
as the distribution obtained from
sampling $g_1\sim N(0,\sigma^2)^n, g_2\sim N(0,1/4)^n$ independently outputting
$g=P_{V^\bot}g_1 + g_2.$ 

Further we define the family of distributions $\cG(A\cap V^\bot)=\{g_\tau\}$
by letting $g_\tau=P_A g $ where $g\sim G(V^\bot,\tau/d - 1/4)$ if
$\tau/d>1/4$ and otherwise we put $g_\tau = P_A  g$ where $g\sim N(0,\tau/d)^n$ otherwise.
\end{definition}

The next lemma confirms that $\cG(A\cap V^\bot)$ is subspace Gaussian.

\begin{lemma}\lemmalabel{subspace-gaussian}
$\cG(A\cap V^\bot)$ is a
subspace Gaussian family. 
\end{lemma}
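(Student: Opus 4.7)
The plan is to verify directly the two conditions in the definition of a subspace Gaussian family for $U := A \cap V^\bot$, which has dimension $d = r - t$. The structural fact I would lean on throughout is that since $V \subseteq A$, we have the internal orthogonal decomposition $A = V \oplus U$, so $P_A = P_V + P_U$ and, crucially, $A \cap U^\bot = V$. Together with $U \subseteq V^\bot$ (hence $P_U P_{V^\bot} = P_U$), this fixes all the projections we need.

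I would first handle the main case $\tau/d > 1/4$. Here $g_\tau = P_A g$ where $g = P_{V^\bot}g_1 + g_2$, with $g_1 \sim N(0,\sigma^2)^n$, $\sigma^2 = \tau/d - 1/4$, and $g_2 \sim N(0,1/4)^n$ independent. Since $U \subseteq V^\bot$,
\[
P_U g_\tau = P_U g = P_U g_1 + P_U g_2,
\]
a sum of independent spherical Gaussians on $U$ with variances $\sigma^2$ and $1/4$, hence $P_U g_\tau \sim N(0,(\tau/d)I_U)$, giving $\E\|P_Ug_\tau\|^2 = \tau$. For the complementary part, using $g_\tau \in A$ and $A \cap U^\bot = V$,
\[
P_{U^\bot}g_\tau = P_V g_\tau = P_V\bigl(P_{V^\bot}g_1 + g_2\bigr) = P_V g_2 \sim N(0,\tfrac14 I_V),
\]
a spherical Gaussian on $V$ whose distribution is independent of $\tau$, as required.

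For independence, note that $P_U g_\tau$ is a function of $(P_U g_1, P_U g_2)$ while $P_{U^\bot}g_\tau$ is a function of $P_V g_2$. Independence of $g_1$ and $g_2$ handles the $g_1$ contribution, and within $g_2$, isotropy together with the orthogonality $U \perp V$ makes $P_U g_2$ and $P_V g_2$ independent; combining these gives joint independence of $P_U g_\tau$ and $P_{U^\bot}g_\tau$.

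Finally, I would dispose of the degenerate branch $\tau/d \le 1/4$, where $g_\tau = P_A g$ with $g \sim N(0,\tau/d)^n$: the identical projection computations give $P_U g_\tau \sim N(0,(\tau/d)I_U)$ with the correct second moment, and $P_{U^\bot}g_\tau = P_V g$ is spherical on $V$ and independent of $P_U g_\tau$ by isotropy of $g$. This range only arises for $\tau < d/4$, well outside the regime $\tau \in [d,Bd]$ used in the conditional expectation lemma, so the mild $\tau$-dependence of the complement there is immaterial for downstream use. I don't expect any real obstacle: the proof is a short bookkeeping calculation in projections, with the one subtle point being the identification $A \cap U^\bot = V$, which is precisely what makes the $P_V$-noise in the definition end up playing the role of the ``orthogonal complement'' component in the subspace Gaussian structure.
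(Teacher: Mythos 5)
Your treatment of the main branch $\tau/d > 1/4$ follows the same route as the paper's proof: identify $U = A \cap V^\bot$, note $A = V \oplus U$ so that $P_{U^\bot}P_A = P_V$, and then compute $P_U g_\tau = P_U g_1 + P_U g_2 \sim N\bigl(0,(\tau/d)I_U\bigr)$ and $P_{U^\bot}g_\tau = P_V g_2 \sim N(0,\tfrac14 I_V)$, the latter being $\tau$-free because $P_V P_{V^\bot} = 0$. You simply spell out the projection algebra (and the independence of $P_U g_2$ and $P_V g_2$) more carefully than the paper, which writes $P_{U^\bot}g$ instead of $P_{U^\bot}g_\tau$ at one point and drops an expectation symbol.

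The genuine point of divergence is the degenerate branch $\tau/d \le 1/4$. The paper asserts $\|P_{U^\bot}g_\tau\| = 0$ there, which is false whenever $V \neq \{0\}$: the same identity $P_{U^\bot}P_A = P_V$ gives $P_{U^\bot}g_\tau = P_V g$ with $g \sim N(0,\tau/d)^n$, a non-degenerate spherical Gaussian on $V$ whose variance depends on $\tau$, so the second condition of the subspace Gaussian definition is not literally satisfied on that branch. You correctly identify this $\tau$-dependence and correctly observe that it is harmless: the conditional expectation lemma and its corollary only ever invoke the family for $\tau \in [d,Bd]$, i.e.\ $\tau/d \ge 1$, so the degenerate branch never arises downstream. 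On this point your analysis is more accurate than the paper's own proof.
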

\begin{proof}
Let $U=A\cap V^\bot.$ For $\tau\le 1/4,$ it is clear that
$\E\|P_Ug_\tau\|^2=d\tau/d=\tau$ as is required and $\|P_{U^\bot}g_\tau\|=0.$  
For $\tau> 1/4,$ recall that
$g=P_{V^\bot}g_1+g_2.$ Hence, inside $P_Ug$ is distributed like a spherical
Gaussian with variance $\tau/d$ in each direction. In particular, 
\[ \|P_U
g_\tau\|^2 = \|P_U (g_1 + g_2)\|^2 = d\tau/d=\tau\mper 
\] 
On the other hand, $P_{U^\bot}g$ only depends on $g_2$ and is hence
independent of $\tau.$ This shows the second property of subspace Gaussian. 
\end{proof}

The next definition captures the correctness requirement on~$f$ for inputs
drawn from the distribution $G(V^\bot,\sigma^2).$

\begin{definition}[Correctness]
\definitionlabel{correctness}
We say that a function $f\colon A\to\bits$ is \emph{$(\epsilon,B)$-correct} on $V^\bot$
with $d=\dim(V^\bot\cap A)$ if:
\begin{enumerate}
\item for all $\sigma^2\in[B/2,2B]$ and $g\sim G(V^\bot,\sigma^2)$ we have 
$\Pr\Set{f(g)=1}\ge 1-\epsilon$
\item for all $\sigma^2\in[0,2]$ and $g\sim G(V^\bot,\sigma^2)$ we have
$\Pr\Set{f(g)=1}\le \epsilon.$ 
\end{enumerate}
We say that $f$ is \emph{$B$-correct} on $V^\bot$ if it is
$(\epsilon,B)$-correct for some $\epsilon\le 1/10(Bd)^2.$ 
\end{definition}

We will now relate the correctness definition to our earlier soundness
definition.

\begin{lemma}\lemmalabel{correct2sound}
If $f$ is $B$-correct on $V^\bot,$ then $f$ is
$B$-sound for $\cG(A\cap V^\bot).$ 
\end{lemma}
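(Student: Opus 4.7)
The plan is to verify both integral conditions in Definition~\definitionref{soundness} by translating the pointwise probability bounds from correctness into integrated bounds on $h$, via \lemmaref{suffstat} and \lemmaref{taunu}.

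First I match up the distributions. By Definition~\definitionref{complement-gaussian}, for $\tau > d/4$ the subspace Gaussian sample is $g_\tau = P_A g$ with $g \sim G(V^\bot, \tau/d - 1/4)$. Since $f\colon A\to\bits$ satisfies $f(x) = f(P_A x)$, we have $f(g_\tau) = f(g)$. Setting $\sigma^2 = \tau/d - 1/4$, $B$-correctness (with $\epsilon \le 1/(10(Bd)^2)$) yields $\Pr\{f(g_\tau)=1\} \le \epsilon$ for $\tau \in [d/4,9d/4]$ and $\Pr\{f(g_\tau)=1\} \ge 1-\epsilon$ for $\tau \in [Bd/2 + d/4, 2Bd + d/4]$.

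Next, by \lemmaref{suffstat} the function $h(s) \defeq \E[f(g_\tau) \mid \|P_U g_\tau\|^2 = s]$ is independent of $\tau$, and $\Pr\{f(g_\tau) = 1\} = \int_0^\infty h(s)\nu_\tau(s)\,\rd s$. To prove each integral inequality, I integrate the corresponding pointwise bound over $\tau$ in the appropriate interval, swap the order of integration, and apply \lemmaref{taunu} to rewrite $\int \nu_\tau(s)\,\rd \tau$ as the measure of an interval under $\GammaD_{d/2-1}$. The key quantitative fact is that for $s$ in the relevant bulk range---namely $s \in [Bd/2, 2Bd]$ for the upper condition and $s \in [\Omega(d), 2d]$ for the lower condition---the resulting Gamma interval contains the mean $d/2 - 1$, so by concentration of $\GammaD_{d/2-1}$ its measure is lower bounded by a positive constant. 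After substituting $\epsilon \le 1/(10(Bd)^2)$, this gives $\int_{Bd/2}^{2Bd}(1-h(s))\,\rd s \le 1/(Bd)$ and $\int_{\Omega(d)}^{2d} h(s)\,\rd s \le O(1/d)$.

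The main obstacle is the low-$s$ range $[0,\Omega(d)]$ of the second condition, where the Gamma interval $[2s/9, 2s]$ lies well below the mean $d/2 - 1$ and its measure decays rapidly, so the $\tau$-integration trick alone is insufficient. To cover this regime, I use the structural form $h(s) = \E_{z,w}[f(z+w)]$, where $z$ is uniform on the sphere of radius $\sqrt s$ in $U$ and $w \sim N(0, 1/4)^{\dim(A\cap V)}$ is independent; this exhibits $h$ as a spherical average of a Gaussian-convolved function and so $h$ is continuous in $s$ with $h(s) \to h(0)$ as $s \to 0$. Applying correctness at $\sigma^2 = 0$ gives $\int h(s)\nu_{d/4}(s)\,\rd s \le \epsilon$, and since $\nu_{d/4}$ places $\Omega(1)$ mass on a neighborhood of its median $\approx d/4$, a Markov argument forces $h$ to be small on that neighborhood; propagating this via the continuity of $h$ pins down $h$ on all of $[0,\Omega(d)]$ tightly enough to conclude $\int_0^{\Omega(d)} h(s)\,\rd s \le O(1/d)$, completing the verification of both soundness conditions.
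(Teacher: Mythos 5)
Your direct approach to the first soundness condition is sound and is a genuinely different route from the paper's: the paper proves the contrapositive by averaging $\sigma^2$ uniformly over $[B/2,2B]$ and arguing the resulting marginal of $\|P_U g_\tau\|^2$ is within a constant factor of uniform on $[Bd/2,2Bd]$, whereas you integrate the pointwise correctness bound over $\tau$, swap integrals, and invoke \lemmaref{taunu} plus concentration of $\GammaD_{d/2-1}$. Both give the first inequality with the stated constants. You also correctly isolate the real difficulty: for $s$ well below $d/4$, the Gamma interval $[2s/9,2s]$ lies far below the mean $d/2-1$ and the $\tau$-integrated kernel $\int\nu_\tau(s)\rd\tau$ is exponentially small, so neither your integration trick nor the paper's density-ratio claim (which the paper dismisses as an ``analogous argument'') actually constrains $h$ on $[0,\Omega(d)]$.

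However, your remedy for that low-$s$ regime does not close the gap. Continuity of $h$ (which does hold, as $h$ is a spherical average of a Gaussian-smoothed $f$) gives no quantitative modulus, and ``propagating'' smallness from a neighborhood of $d/4$ down to $s=0$ is not possible: take $f(x)=1$ iff $\|P_U x\|^2\in[0,d/10]\cup[3d,\infty)$. Then $h=\mathbf 1_{[0,d/10]\cup[3d,\infty)}$ is continuous on $(0,d/10)\cup(d/10,\infty)$ (and one can easily smooth the jump without changing the estimates), $\int_0^{2d}h=d/10\gg1/d$, yet for, e.g., $B=8$ and $d$ on the order of $10^3$ one checks via Chernoff bounds for $\chi^2$ that the correctness conditions $\Pr\{f(g)=1\}\le\epsilon$ on $\sigma^2\in[0,2]$ and $\Pr\{f(g)=1\}\ge1-\epsilon$ on $\sigma^2\in[B/2,2B]$ hold with $\epsilon\le1/10(Bd)^2$. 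So $h$ can be identically $1$ on an interval of length $\Omega(d)$ near the origin while $f$ remains $B$-correct, and no Markov-plus-continuity argument can rule this out. In fact this shows that soundness condition (2) \emph{as written}, with the integral starting at $0$, does not follow from $B$-correctness; what one actually needs downstream in \lemmaref{s-tau} is only control of $\int h(s)\,|\Delta(s)|\,\rd s$ over $[0,2d]$, and since $|\Delta(s)|$ is itself exponentially small for $s\ll d$ (by the formulas of \lemmaref{taunu} applied at $s$ far below the Gamma mean), the contribution from $[0,\Omega(d)]$ is harmless. A correct statement and proof of \lemmaref{correct2sound} should therefore replace $\int_0^{2d}h$ by something like $\int_{d/4}^{2d}h$ (which your $\tau$-integration argument does prove), and the small-$s$ range should be handled inside \lemmaref{s-tau} via the decay of $\Delta$ rather than via a bound on $h$.
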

\begin{proof}
We will prove the claim in its contrapositive. Indeed suppose that $f$ is not
$B$-sound for $\cG(A\cap V^\bot).$ This means that one of the two
requirements in \definitionref{soundness} is not satisfied. Suppose it is the
first one. In this case we know that for $I=[Bd/2,2Bd]$
and $h(s)=\Pr\Set{f(g_\tau)=1\mid \|g_\tau\|^2=s},$ we have 
$\E_{s\in I} (1-h(s)) > 1/2(Bd)^2.$ Suppose we sample $g\sim G(V^\bot,\sigma^2)$
where $\sigma^2$ is chosen uniformly at random from $B/2,2B.$ We claim that
that the distribution of $\|g\|^2$ is pointwise within a factor~$5$ of the uniform
distribution inside the interval $[B/2,2B].$ Hence, $\E h(\|g\|^2) >
1/10(Bd)^2.$ This violates the first condition of correctness. 

The case where the second requirement of \definitionref{soundness} is violated
follows from an analogous argument.
\end{proof}

Below we state a variant of the conditional expectation lemma for
distributions of the above form. Moreover, we will remove the requirement that
$t\le r-d_0$ and obtain a result that applies to any $t\le \dim(A).$

\begin{lemma}\lemmalabel{conditional-main}
Let $A\subseteq\R^n$ be a subspace of dimension $\dim(A)=r\le n-d_0$ for some
sufficiently large constant $d_0.$
Let $V\subseteq A$ be a subspace of $A$ of dimension $t\le r.$ 
Suppose that $f\colon A\to\bits$ is $(1/10(d_0B)^2,B)$-correct on $V^\bot.$
Then, there exists a scalar $\sigma^2\in[3/4,B],$ and a vector $u\in
A\cap V^\bot$ such that for $g\sim G(V^\bot,\sigma^2)$ we have for
$d=\max\{r-d,d_0\}:$
\begin{enumerate}
\item $\E \left[\langle u,g\rangle^2 \, \Big|\, f(g)=1\right]
\ge \E\left[\langle u,g\rangle^2\right] + \frac1{4Bd}$
\item $\Pr\Set{f(g)=1}\ge \frac1{40B^2d}\mper$
\end{enumerate}
\end{lemma}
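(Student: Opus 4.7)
The plan is to reduce this lemma to \corollaryref{conditional} applied to the subspace Gaussian family on $U \defeq A \cap V^\bot$, using \lemmaref{correct2sound} to translate the correctness hypothesis into the soundness hypothesis that \corollaryref{conditional} requires.

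First, by \lemmaref{subspace-gaussian}, the family $\cG(U) = \{g_\tau\}_\tau$ from \definitionref{complement-gaussian} is a subspace Gaussian family on $U$. The assumed $(1/10(d_0 B)^2, B)$-correctness of $f$ on $V^\bot$ matches the requirement of \definitionref{correctness} for $B$-correctness; the $d_0$-based tolerance is chosen precisely so that \lemmaref{correct2sound}'s hypothesis is verified regardless of whether $\dim U = r - t$ is below, equal to, or above $d_0$. Invoking \lemmaref{correct2sound} then delivers that $f$ is $B$-sound for $\cG(U)$.

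Next, I invoke \corollaryref{conditional} on $\cG(U)$ with $d = \max\{r-t, d_0\}$, producing some $\tau \in [d, Bd]$ and a unit vector $u \in U \subseteq A \cap V^\bot$ such that
\[
\E\!\left[\langle u, g_\tau\rangle^2 \,\Big|\, f(g_\tau) = 1\right] \ge \E\!\left[\langle u, g_\tau\rangle^2\right] + \frac{1}{4Bd}
\quad\text{and}\quad
\Pr\Set{f(g_\tau) = 1} \ge \frac{1}{40 B^2 d}.
\]
Finally I translate this conclusion from $g_\tau$ back to the target distribution by unpacking \definitionref{complement-gaussian}: for $\tau/d > 1/4$ one has $g_\tau = P_A g$ with $g \sim G(V^\bot, \tau/d - 1/4)$. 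Setting $\sigma^2 = \tau/d - 1/4$ and using $\tau \in [d, Bd]$ gives $\sigma^2 \in [3/4, B - 1/4] \subseteq [3/4, B]$. Because $u \in A$, we have $\langle u, g\rangle = \langle u, P_A g\rangle = \langle u, g_\tau\rangle$, and because the sketch $f$ factors through $P_A$, $f(g) = f(g_\tau)$; the two inequalities above therefore carry over verbatim to $g \sim G(V^\bot, \sigma^2)$ as required.

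The main obstacle I anticipate is the book-keeping around the unified parameter $d = \max\{r - t, d_0\}$. When $r - t \ge d_0$ the reduction above runs on $U$ of dimension $d = r - t$ with no further effort. When $r - t < d_0$, the subspace $U$ is too small to feed directly into \lemmaref{conditional}, whose underlying $\chi^2$-concentration estimates in \lemmaref{s-tau} require at least $d_0$ degrees of freedom; here one must exploit the ambient $N(0, 1/4)^n$ noise built into \definitionref{complement-gaussian}, which contributes $\Omega(d_0)$ effective degrees of freedom outside $V$ and is precisely the reason the correctness tolerance was phrased using $d_0$ rather than $\dim U$. A minor additional care is that $\sigma^2 = \tau/d - 1/4$ is only meaningful when $\tau/d > 1/4$, which is automatic because $\tau \ge d$.
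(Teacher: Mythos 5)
Your high-level route is the same as the paper's: reduce via \lemmaref{correct2sound} from correctness to $B$-soundness and then invoke \corollaryref{conditional} on the appropriate subspace Gaussian family, finally translating $g_\tau = P_A g$ back to $g\sim G(V^\bot,\sigma^2)$ with $\sigma^2 = \tau/d - 1/4 \in [3/4, B]$. That translation step and the identity $\langle u,g\rangle = \langle u,g_\tau\rangle$, $f(g)=f(g_\tau)$ are argued exactly as in the paper.

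However, there is a genuine gap at the step you yourself flag. You write that you ``invoke \corollaryref{conditional} on $\cG(U)$ with $d=\max\{r-t,d_0\}$'', where $U=A\cap V^\bot$, but \corollaryref{conditional} (through \lemmaref{conditional}) is only applicable when the parameter $d$ \emph{is} $\dim(U)$ and this dimension is at least $d_0$. When $r-t<d_0$, the corollary simply does not apply to $\cG(U)$ for any choice of $d$; you cannot pretend $U$ has $d_0$ degrees of freedom. Your closing remark that ``one must exploit the ambient $N(0,1/4)^n$ noise'' is pointing in the right direction but is not an argument; as written, the case $r-t<d_0$ has no proof.

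The paper's resolution is a concrete construction that you should spell out: extend $A$ to a larger subspace $A'\supseteq A$ of dimension $r+d_0$ (possible since $n\ge r+d_0$), and define $f'(x)=f(P_Ax)$. Then $f'$ is still $(1/10(d_0B)^2,B)$-correct on $V^\bot$, but now $\dim(A'\cap V^\bot)\ge d_0$, so \lemmaref{correct2sound} and \corollaryref{conditional} may be applied to $\cG(A'\cap V^\bot)$. This yields $\tau$ and a vector $u\in A'\cap V^\bot$. One then observes that the conditioning $f'(g_\tau)=1$ depends only on $P_Ag_\tau$ and hence does not bias $g_\tau$ along any direction orthogonal to $A$; by an averaging argument over an orthonormal basis of $A'\cap V^\bot$ adapted to $A$, the extremal direction $u$ may therefore be taken inside $A\cap V^\bot$. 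Without this extension-and-restriction step, the lemma is unproved precisely in the regime $r-t<d_0$, which is the regime that actually occurs at the end of the iterative attack (as $t\to r$), so it cannot be dismissed as a corner case.
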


\begin{proof}
Before we proceed we would like to ensure that $\dim(A\cap V^\bot)$ is at
least a sufficiently large constant $d_0.$ This can be ensured without loss of 
generality by considering instead a subspace $A'\supseteq A$ of dimension $r+d_0$
obtained by extending $A$ arbitrarily to $r+d_0$ dimensions. This can be done
since $n\ge r+d_0.$ Define the function $f'(x)=f(P_Ax).$ Note that $f'(x)=f(x)$ 
on all $x\in\R^n.$ Hence, $f'$ is still $(1/10(d_0B)^2,B)$-correct on
$V^\bot.$ Moreover, now $\dim(V^\bot\cap A)=d_0.$ Hence, by
\lemmaref{correct2sound}, we have that $f$ is sound for the subspace Gaussian
family $\cG(A'\cap V^\bot).$ Let $U=A'\cap V^\bot.$
We can apply \corollaryref{conditional} to $\cG(U)$ to conclude
that there is $\tau\in[d,Bd]$ and $u\in U$ such that
\[
\E \left[\langle u,g_\tau\rangle^2 \, \Big|\, f'(g_\tau)=1\right]
\ge \E\left[\langle u,g_\tau\rangle^2\right] + \frac1{4Bd}
\]
and $\Pr\Set{f'(g_\tau)=1}\ge \frac1{40B^2d}\mper$ By definition of $f'$ the
condition $f'(g_\tau)=1$ is equivalent to $f(g_\tau)=1.$ 
The condition $f(g_\tau)=1$ does not affect any vector that is orthogonal to
$A.$ Hence we may assume that $u\in A\cap V^\bot.$ Also note that
$g_\tau=P_{A'}g$ for some $g\sim G(V^\bot,\sigma^2)$ with
$\sigma^2\in[3/4,B].$ Moreover, $f(g)=f'(g_\tau),$ and also
$\langle u,g_\tau\rangle = \langle u,g\rangle$ since $u\in U.$
Hence, we have
\[
\E \left[\langle u,g\rangle^2 \, \Big|\, f(g)=1\right]
\ge \E\left[\langle u,g\rangle^2\right] + \frac1{4Bd}
\]
with $\Pr\Set{f(g)=1}\ge \frac1{40B^2d}\mper$ This is what we wanted to
show.
\end{proof}

\subsection{Distance between subspaces}
\sectionlabel{subspaces}

Our goal is to relate distributions of the form $G(V^\bot,\sigma^2)$ to
$G(W^\bot,\sigma^2)$ where $V$ and $W$ are subspaces. For this purpose, 
we consider the following distance function $d(V,W)$ between two subspaces
$V,W\subseteq\R^n:$
\begin{equation}
d(V,W)=\|P_V-P_W\|_2=\sup_{v\in\R^n}\frac{\|P_Vv-P_Wv\|}{\|v\|}\mper
\end{equation}
We will show that if $V$ and $W$ are close in this distance measure, then the
two distributions $G(V^\bot,\sigma^2)$ and $G(W^\bot,\sigma^2)$ are
statistically close. Recall that we denote the statistical distance between
two distributions $X,Y$ by $\|X-Y\|_\tv.$ We need the following well-known
fact.
\begin{fact}\factlabel{shifted-gaussians}
Let $v\in\R^n.$ Then,
\[
\|N(0,\sigma^2)^n-N(v,\sigma^2)^n\|_\tv \le \frac{\|v\|}{\sigma}\mper
\]
\end{fact}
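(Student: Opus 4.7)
The plan is to reduce to the univariate case by rotational invariance and then either compute the total variation distance directly or apply Pinsker's inequality.

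First I would observe that the orthogonal group acts on $\R^n$ and preserves both $N(0,\sigma^2)^n$ (since it is spherically symmetric) and the total variation distance. Let $R$ be any rotation taking $v$ to $(\|v\|,0,\dots,0).$ Then $R_\ast N(v,\sigma^2)^n = N((\|v\|,0,\dots,0),\sigma^2)^n,$ so
\[
\|N(0,\sigma^2)^n - N(v,\sigma^2)^n\|_\tv = \|N(0,\sigma^2)^n - N((\|v\|,0,\dots,0),\sigma^2)^n\|_\tv.
\]
The two product distributions on the right have identical marginals on coordinates $2,\dots,n,$ so by the tensorization property of statistical distance (equivalently, integrating out the identical marginals), the TV distance reduces to the one-dimensional quantity $\|N(0,\sigma^2)-N(\|v\|,\sigma^2)\|_\tv$ with $t\defeq\|v\|.$

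Next I would handle the univariate case by a direct integration of $|\phi_0-\phi_t|/2,$ where $\phi_\mu$ denotes the density of $N(\mu,\sigma^2).$ By symmetry, $\phi_0(x)=\phi_t(x)$ iff $x=t/2,$ with $\phi_0>\phi_t$ on $(-\infty,t/2)$ and $\phi_0<\phi_t$ on $(t/2,\infty).$ Therefore the TV distance equals
\[
\Pr_{X\sim N(0,\sigma^2)}\left\{X\le t/2\right\} - \Pr_{X\sim N(t,\sigma^2)}\left\{X\le t/2\right\} = 2\Phi\!\left(\frac{t}{2\sigma}\right)-1,
\]
where $\Phi$ is the standard normal CDF. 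Using the elementary bound $\Phi(u)-1/2\le u/\sqrt{2\pi}$ (which follows from $\phi\le 1/\sqrt{2\pi}$ together with the mean value theorem), we get $2\Phi(t/(2\sigma))-1\le t/(\sigma\sqrt{2\pi})\le t/\sigma=\|v\|/\sigma,$ as required.

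There is no real obstacle here; the only mild care is justifying the reduction from $n$ dimensions to $1$ dimension. The cleanest alternative, if one wishes to avoid the direct integral, is Pinsker's inequality together with the closed form $D_{\mathrm{KL}}(N(0,\sigma^2)^n\|N(v,\sigma^2)^n)=\|v\|^2/(2\sigma^2),$ which yields the sharper bound $\|v\|/(2\sigma)$ and hence $\|v\|/\sigma$ a fortiori. Either route gives the stated fact in a few lines.
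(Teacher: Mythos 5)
Your proof is correct. The paper simply states this as a ``well-known fact'' and supplies no proof of its own, so there is no argument in the paper to compare against. Both of your routes are sound: the rotation-plus-marginal reduction to the one-dimensional case followed by direct integration gives $2\Phi\bigl(\|v\|/(2\sigma)\bigr)-1\le \|v\|/(\sigma\sqrt{2\pi})$, and the Pinsker route with $D_{\mathrm{KL}}=\|v\|^2/(2\sigma^2)$ gives the sharper $\|v\|/(2\sigma)$; either is more than enough for the stated bound.
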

Using this fact we can express the statistical distance between
$G(V^\bot,\sigma^2)$ and $G(W^\bot,\sigma^2)$ for two subspaces $V,W$ in
terms of the distance $d(V,W).$
\begin{lemma}
\lemmalabel{stat-dist}
For every $\sigma^2\in(0,B],$ we
have
\[
\|G(V^\bot,\sigma^2) - G(W^\bot,\sigma^2)\|_\tv 
\le 20\sqrt{Bn\log(Bn)}\cdot d(V,W)+\frac1{(Bn)^5}\mper
\]
\end{lemma}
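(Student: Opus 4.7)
The plan is to prove the lemma via a coupling argument combined with Fact~\ref{fact:shifted-gaussians}. Couple the two distributions by drawing a single $g_1\sim N(0,\sigma^2)^n$ together with an independent $g_2\sim N(0,1/4)^n$, so that under the coupling the two outputs are $P_{V^\bot}g_1+g_2$ and $P_{W^\bot}g_1+g_2$. Conditional on $g_1$, both distributions are spherical Gaussians with covariance $\tfrac14 I_n$ whose means differ by $(P_{V^\bot}-P_{W^\bot})g_1=(P_W-P_V)g_1$. Fact~\ref{fact:shifted-gaussians} (applied with noise scale $1/2$) therefore yields
\[
\bigl\|\,G(V^\bot,\sigma^2)\text{ given }g_1 \;-\; G(W^\bot,\sigma^2)\text{ given }g_1\,\bigr\|_\tv
\;\le\; 2\|(P_W-P_V)g_1\|
\;\le\; 2\,d(V,W)\,\|g_1\|,
\]
where the last step uses the definition $d(V,W)=\|P_V-P_W\|_2$.

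Next, by convexity of total variation (equivalently, the data-processing inequality applied to the map that takes $g_1$ to the full sample), I average the conditional bound over $g_1$:
\[
\bigl\|G(V^\bot,\sigma^2)-G(W^\bot,\sigma^2)\bigr\|_\tv
\;\le\; \E_{g_1}\min\bigl\{1,\;2\,d(V,W)\,\|g_1\|\bigr\}.
\]
It remains to control $\|g_1\|$. Because $g_1$ is an $n$-dimensional Gaussian with coordinate variance $\sigma^2\le B$, standard Gaussian concentration gives a constant $C$ with
\[
\Pr\bigl[\|g_1\|>10\sqrt{Bn\log(Bn)}\bigr]\;\le\;(Bn)^{-5},
\]
(adjusting the constant in front so that the tail exponent is at least $5$). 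Splitting the expectation according to this event — using the linear bound $2d(V,W)\|g_1\|$ on the high-probability event and the trivial bound~$1$ on its complement — produces the desired
\[
\bigl\|G(V^\bot,\sigma^2)-G(W^\bot,\sigma^2)\bigr\|_\tv
\;\le\; 20\sqrt{Bn\log(Bn)}\,d(V,W)+\frac{1}{(Bn)^5}.
\]

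I do not expect any real obstacle. The only places that call for care are the algebraic identity $P_{V^\bot}-P_{W^\bot}=P_W-P_V$, which is what allows the operator-norm bound $\|(P_V-P_W)g_1\|\le d(V,W)\|g_1\|$ to appear, and the substitution $\sigma\le\sqrt{B}$ that converts a $\sigma\sqrt{n\log(Bn)}$ tail bound for $\|g_1\|$ into the claimed $\sqrt{Bn\log(Bn)}$ factor in front of $d(V,W)$.
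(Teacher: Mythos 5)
Your proof is correct and follows essentially the same route as the paper: couple through a shared $g_1\sim N(0,\sigma^2)^n$, observe that conditional on $g_1$ the two distributions are shifted spherical Gaussians of variance $1/4$ whose means differ by $(P_W-P_V)g_1$, invoke \factref{shifted-gaussians} to get the conditional TV bound $2\,d(V,W)\,\|g_1\|$, and then handle $\|g_1\|$ by Gaussian concentration, paying $1/(Bn)^5$ on the tail event. The only cosmetic difference is that the paper draws two independent copies $g_2,g_2'$ rather than sharing a single $g_2$, and phrases the final step as conditioning on the good event and removing it, rather than as $\E_{g_1}\min\{1,\cdot\}$; these are the same computation.
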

\begin{proof}
Sample $g_1\sim N(0,\sigma^2)^n$ and $g_2,g_2'\sim N(0,1/4)^n$ independently.
Let us denote by $x = P_{V^\bot}g_1 + g_2$ and by $y=P_{W^\bot}g_1 + g_2'.$
Note that $x$ is distributed like a random draw from $G(V^\bot,\sigma^2)$ and
$y$ like a draw from $G(W^\bot,\sigma^2).$ However, we introduced a
dependence throw $g_1.$ Note that it is sufficient to bound the statistical
distance of these coupled variables. On the one hand,
\[
\| P_{V^\bot}g_1 - P_{W^\bot} g_1 \|
= \| P_{V}g_1 - P_{W} g_1 \|
\le \|g_1\|\cdot d(V,W)
\]
On the other hand, by Gaussian concentration bounds,
\[
\Pr\Set{ \|g_1\|\le 10\sqrt{Bn\log(Bn)} } \le \frac 1{(Bn)^5}\mper
\]
Condition on $\|g_1\|\le 10\sqrt{Bn\log(Bn)}.$
Under this condition,
for every possible value $u=P_{V^\bot}g_1 -P_{W^\bot}g_1,$ we have
\begin{align*}
\|N(u,1/4)^n - N(0,1/4)^n\|_\tv 
& \le 2\|u\| \tag{by \factref{shifted-gaussians}} \\
& \le 2\|g_1\|\cdot d(V,W)
\le 20\sqrt{Bn\log(Bn)}\cdot d(V,W)\mper
\end{align*}
Noting that $u + N(0,1/4)^n = N(u,1/4)^n,$ it follows
\begin{align*}
\|P_V^{\bot}g_1 + N(0,1/4)^n - P_W^{\bot}g_1 +  N(0,1/4)^n\|_\tv
&= \|N(u,1/4)^n - N(0,1/4)^n\|_\tv \\
&\le 20\sqrt{Bn\log(Bn)}\cdot d(V,W)\mper
\end{align*}
Finally, since the condition $\|g_1\|\le 10\sqrt{Bn\log(Bn)}$ has probability
$1-1/(Bn)^5,$ removing it can only increase the statistical distance of the two
variables by additive $1/(Bn)^5.$ 
\end{proof}

\section{An Adaptive Reconstruction Attack}
\sectionlabel{attack}

We next state and prove our main theorem.
It shows that no function $f\colon\R^n\to\bits$ that depends
only on a lower dimensional subspace can correctly predict the $\ell_2^2$-norm
up to a factor~$B$ on a polynomial number of adaptively chosen inputs. Here,
$B$ can be any factor and the complexity of our attack will depend on~$B$ and
the dimension of the subspace.  We will in fact show a more powerful
distributional result. This result states that no such function can predict
the $\ell_2^2$-norm on a rather natural sequence of distributions even if we
allow the function to err on each distribution with inverse polynomial
probability. This distributional strengthening will be useful in our
application to compressed sensing later on.
The next definition formalizes the way in which a linear sketch will fail
under our attack.

\begin{definition}[Failure certificate]
Let $B\ge 8$ and let $f\colon\R^n\to\bits.$ We say that a pair $(V,\sigma^2)$
is a \emph{$d$-dimensional failure certificate for $f$} if $V\subseteq\R^n$ is $d$-dimensional
subspace and $\sigma^2\in[0,2B]$ such that for some constant $C>0,$ 
we have $n\ge d+10C\log(Bn)$ and moreover:
\begin{itemize}
\item Either $\sigma^2\in[B/2,50B]$ and 
$\Pr_{g\sim G(V^\bot,\sigma^2)}\Set{f(g)=1}\le 1- (Bn)^{-C},$
\item or $\sigma^2\le 2$ and
$\Pr_{g\sim G(V^\bot,\sigma^2)}\Set{f(g)=1}\ge n^{-C}.$
\end{itemize}
\end{definition}
The motivation for the previous definition is given by the next simple fact
showing that a failure certificate always gives rise to a distribution over
which $f$ does not decide the {\sc GapNorm} problem up to a factor~$\Omega(B)$
on a polynomial number of queries. We note that in \sectionref{high-error} we
strengthen this concept to give a distribution where~$f$ errs with constant
probability.
\begin{fact}
Given a $d$-dimensional failure certificate for~$f,$ we can find with
$\poly(Bn)$ non-adaptive queries with probability $2/3$ an
input $x$ such that either $\|x\|^2\ge B(n-d)/3$ and $f(x)=0$ or $\|x^2\|\le
3(n-d)$ and $f(x)=1.$
\end{fact}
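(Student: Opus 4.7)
The plan is to exploit the failure certificate directly: draw polynomially many i.i.d.\ samples from $G(V^\bot,\sigma^2)$, query $f$ on each, and return any sample on which $f$ errs in the direction promised by the certificate. All queries are manifestly non-adaptive, since the sampling distribution depends only on the certificate.

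First I would establish concentration of $\|g\|^2$ for $g\sim G(V^\bot,\sigma^2)$. Writing $g=P_{V^\bot}g_1+g_2$ as in \definitionref{complement-gaussian}, one has $\|g\|^2=\|P_{V^\bot}g_1\|^2+\|g_2\|^2+2\langle P_{V^\bot}g_1,g_2\rangle$, with expectation $\sigma^2(n-d)+n/4$; the first two summands are scaled $\chi^2$-variables with $n-d$ and $n$ degrees of freedom respectively, and the cross-term is a mean-zero sub-exponential variable. Because $n-d\ge 10C\log(Bn)$, standard Laurent--Massart bounds give $\|g\|^2\in(1\pm\alpha)\E\|g\|^2$ with failure probability at most $(Bn)^{-3C}$ for a small absolute constant $\alpha$. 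In the first case of the certificate ($\sigma^2\in[B/2,50B]$) this yields $\|g\|^2\ge B(n-d)/3$ with high probability, while in the second case ($\sigma^2\le 2$) it yields the upper bound $\|g\|^2\le 3(n-d)$ with high probability.

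Next I would amplify the per-sample failure probability by independent repetition. In the first case of the certificate, $\Pr\{f(g)=0\}\ge (Bn)^{-C}$, so drawing $N=\Theta((Bn)^{2C})$ i.i.d.\ samples $g^{(1)},\dots,g^{(N)}$ and querying $f$ on each, a union bound that combines the $\chi^2$-concentration event across all $N$ samples with the fact that at least one $g^{(i)}$ has $f(g^{(i)})=0$ (by independence and the failure bound) shows that, with probability at least $2/3$, some $g^{(i)}$ simultaneously satisfies $f(g^{(i)})=0$ and $\|g^{(i)}\|^2\ge B(n-d)/3$. The second case of the certificate is handled identically with $N=\Theta(n^{2C})$ samples, yielding some $g^{(i)}$ with $f(g^{(i)})=1$ and $\|g^{(i)}\|^2\le 3(n-d)$.

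The only substantive technical point is making the $\chi^2$-concentration tight in the regime where $n-d$ may be as small as $\Theta(\log(Bn))$; this is exactly what the hypothesis $n\ge d+10C\log(Bn)$ baked into the definition of a failure certificate is chosen to guarantee, so that the multiplicative deviation $\alpha$ can be made a sufficiently small absolute constant (e.g.\ $\alpha<1/6$) while keeping the failure probability at most $(Bn)^{-3C}$. Everything else—the union bound, the amplification, and the non-adaptivity of the queries—is routine.
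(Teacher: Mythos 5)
Your strategy — sample $\poly(Bn)$ i.i.d.\ queries from $G(V^\bot,\sigma^2)$, control $\|g\|^2$ by concentration, and union-bound against the per-sample failure probability guaranteed by the certificate — is exactly the paper's two-sentence argument, spelled out in somewhat more detail.

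However, your explicit computation $\E\|g\|^2=\sigma^2(n-d)+n/4$ (correct, via the decomposition $g=P_{V^\bot}g_1+g_2$ with $g_2\sim N(0,1/4)^n$) actually surfaces an obstacle that you then step over. In the case $\sigma^2\le 2$ the mean can be as large as $2(n-d)+n/4$, which already \emph{exceeds} the target $3(n-d)$ whenever $d>3n/4$. The hypothesis you lean on, $n-d\ge 10C\log(Bn)$, only controls the concentration \emph{rate} (the degrees of freedom of the relevant $\chi^2$ variables); it does nothing to make the \emph{mean} small relative to $3(n-d)$. The paper implicitly closes this gap by asserting ``$n-d$ is sufficiently large compared to $d$'' — i.e.\ $d$ is bounded away from $n$ by a constant factor — which is what guarantees $2(n-d)+n/4\le(3-\Omega(1))(n-d)$ and leaves room for a $(1\pm\alpha)$ deviation. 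Without that extra assumption, the bound in the conclusion does not follow; you should state it. (A secondary, purely cosmetic issue: with $n-d\ge 10C\log(Bn)$ the Laurent--Massart bound gives failure probability $e^{-t}$ only for $t=O(\alpha^2(n-d))$, so getting $\alpha<1/6$ with tail $(Bn)^{-3C}$ needs a larger constant than $10$; this is just bookkeeping and doesn't change the argument.)
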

\begin{proof}
Sample $O((Bn)^C)$ queries from $G(V^\bot,\sigma^2).$ Suppose
$\sigma^2\le 2.$ Since $n-d$ is
sufficiently large compared to $d,$ by a union bound and Gaussian
concentration, we have that with high probability simultaneously for all queries $x,$
$\|x\|^2\le 3(n-d).$ On the other hand, with high probability, $f$ outputs~$1$
on one of the queries. The case where $\sigma^2\ge B/2$ follows with the
analogous argument.
\end{proof}

Our next theorem shows that we can always find a failure certificate with a
polynomial number of queries. 

\begin{theorem}[Main]\theoremlabel{attack}
Let $B\ge 8.$ Let $A\subseteq\R^n$ be a $r$-dimensional subspace of $\R^n$
such that $n\ge r+90\log(Br).$ Assume that $B\le\poly(n).$ Let $f\colon \R^n\to\bits$ satisfying
$f(x)=f(P_A x)$ for all $x\in\R^n.$ Then, there is an algorithm that given
only oracle access to $f$ finds with probability $9/10$ a failure certificate for $f.$ 
The time and query complexity of the algorithm is bounded by $\poly(B,r).$
Moreover, all queries that the algorithm makes are sampled from
$G(V^\bot,\sigma^2)$ for some $V\subseteq\R^n$ and $\sigma^2\in(0,B].$ 
\end{theorem}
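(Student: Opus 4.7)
The plan is to construct iteratively a nested sequence of subspaces $V_0 \subset V_1 \subset \cdots \subset V_T$ with $\dim(V_t)=t$, each of which is ``almost contained'' in $A$ in the sense that $d(V_t, P_A V_t) \le \epsilon_t$ for a controlled error budget $\epsilon_t$. Starting from $V_0=\{0\}$, at every iteration we either exhibit a failure certificate and stop, or we mine one additional direction $v^*$ out of $A\cap V_t^\bot$ and append it to $V_t$. After at most $T \le r - O(\log(Bn))$ iterations, the termination must occur, because once $V_t$ exhausts $A$ up to a tiny perturbation, no $g\sim G(V_t^\bot,\sigma^2)$ with $\sigma^2\in[B/2,2B]$ has appreciable projection onto $A$, so $f(g)=f(P_Ag)$ must output~$0$ almost surely, violating the large-$\sigma^2$ half of correctness.

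At iteration $t$, I would first test whether $f$ is $(1/10(Bd)^2, B)$-correct on $V_t^\bot$ (with $d=\max\{r-t,d_0\}$) by drawing $\poly(Bn)$ samples from $G(V_t^\bot,\sigma^2)$ for a few values of $\sigma^2\in[0,2]\cup[B/2,2B]$ and empirically estimating $\Pr\{f(g)=1\}$; if either half fails beyond the tolerance of the definition, output $(V_t,\sigma^2)$ as the certificate. Otherwise, $f$ is correct on $V_t^\bot$, so \lemmaref{conditional-main} supplies a $\sigma^2\in[3/4,B]$ and a unit vector $u\in A\cap V_t^\bot$ with $\E[\langle u,g\rangle^2 \mid f(g)=1] \ge \E[\langle u,g\rangle^2] + 1/(4Bd)$ and $\Pr\{f(g)=1\} \ge 1/(40B^2d)$. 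The caveat is that our algorithm cannot directly sample from $G(W^\bot,\sigma^2)$ for the genuine $W=P_A V_t\subseteq A$ required by the lemma; it can only sample from $G(V_t^\bot,\sigma^2)$. Here \lemmaref{stat-dist} saves us: since $d(V_t, W)=\epsilon_t$ is small, the two distributions are within statistical distance $O(\sqrt{Bn\log(Bn)}\cdot \epsilon_t)$, which we choose (via the accounting below) small enough that the conclusion of the Conditional Expectation Lemma transports to $G(V_t^\bot,\sigma^2)$ with only a negligible loss in the correlation.

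To convert the per-coordinate correlation into a single almost-planted direction, I would then draw $m=\poly(Bn)$ fresh samples from $G(V_t^\bot,\sigma^2)$, keep the $\approx m/(40B^2d)$ samples on which $f=1$, stack them as rows of a matrix $M$, and take $v^*$ to be the top right singular vector of $M$. Because the conditioning $f(g)=1$ has biased the covariance in the direction $u\in A\cap V_t^\bot$ by an additive $1/(4Bd)$, the analysis of biased Gaussian ensembles from \sectionref{net} will show that $v^*$ lies within $\ell_2$-distance $1/\poly(n)$ of $A\cap V_t^\bot$, provided $m$ is polynomially large compared to $Bd$. Setting $V_{t+1}=\mathrm{span}(V_t,v^*)$ and updating $\epsilon_{t+1}\le \epsilon_t+1/\poly(n)$ via a standard perturbation bound on projectors applied to an orthonormal basis completes the step.

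The main obstacle, and the part where care is needed, is the joint error accounting. Two budgets must be balanced: the subspace error $\epsilon_t$, which grows additively over $\le r$ iterations and must stay much smaller than the statistical-distance slack tolerated by the Conditional Expectation Lemma (roughly $\epsilon_t \cdot \sqrt{Bn\log(Bn)} \ll 1/(Bd)^{O(1)}$), and the sample count $m$, which must be large enough that the top singular vector of the biased Gaussian matrix concentrates within $1/\poly(n)$ of the planted subspace despite the weak planted correlation $1/(4Bd)$. Choosing the degrees of these polynomials consistently, and verifying that the correctness assumption required by \lemmaref{conditional-main} actually holds at every iteration until termination (rather than being an hypothesis of the outer theorem), is the core technical content; once these bookkeeping steps go through, the existence of a failure certificate within $\poly(B,r)$ queries follows by the pigeonhole argument outlined above.
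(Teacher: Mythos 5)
Your proposal reproduces the paper's own proof in structure and in every key ingredient: the iterative subspace growth with empirical correctness tests that either yield a failure certificate or license an application of \lemmaref{conditional-main}, the transfer from $G(V_t^\bot,\sigma^2)$ to the genuine subspace $W_t\subseteq A$ via \lemmaref{stat-dist}, the extraction of a new direction by the top singular vector of the $f=1$-conditioned samples (\lemmaref{net}), and the additive error accounting in the subspace-distance invariant. Your termination bound $T\le r-O(\log(Bn))$ is slightly misstated (the paper runs to round $t=r+1$, at which point $W_t=A$ and $f$'s output becomes essentially independent of $\sigma^2$, forcing a failure on one side of the correctness definition), but this does not change the structure or validity of the argument.
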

We next describe the algorithm promised in \theoremref{attack} in
\figureref{attack}.

\begin{figure}[h]
\begin{boxedminipage}{\textwidth}
\noindent \textbf{Input:} Oracle~$\cA$ providing access to a function
$f\colon\R^n\to\bits,$ parameter $B\ge 4.$

\medskip
\noindent \textbf{Attack:}
%
Let $V_1 = \{0\}, m=O(B^{13}n^{11}\log^{15}(n)),$ and $S=[3/4,B]\cap
\epsilon\mathbb{Z}$ where $\epsilon=1/20(Bn)^2\log(Bn).$ 

\medskip
\noindent {\bf For} $t=1$ {\bf to} $t=r+1:$
\begin{enumerate}
\item {\bf For each} $\sigma^2\in S:$
\begin{enumerate}
\item Sample $g_1,\dots,g_m\sim G(V_t^\bot,\sigma^2).$ Query $\cA$ on each $g_i.$ Let $a_i=\cA(g_i).$
\item\itemlabel{empirical} Let $s(t,\sigma^2)=\frac1m\sum_{i=1}^m a_i$ 
denote the fraction of samples that are positively labeled. 
\begin{itemize}
\item
If either $\sigma^2\ge B/2$ and $s(t,\sigma^2)\le 1-\epsilon,$ or, $\sigma^2\le
2$ and $s(t,\sigma^2)\ge\epsilon,$ then
terminate and {\bf output} $(V_t^\bot,\sigma^2)$ as a purported failure certificate.
\item Else let $g_1',\dots,g_{m'}'$ be the vectors such that $\cA(g_i)=1$ 
for all $i\in[m'].$ 
\end{itemize}
\item 
If $m'<m/100B^2n,$ proceed to the next $\sigma.$
Else, compute $v_\sigma\in\R^n$ as the maximizer
of the objective function 
$z(v)=\frac1{m'}\sum_{i=1}^{m'}\langle v,g_i'\rangle^2.$
\end{enumerate}
\item Let $v^*$ denote the first vector $v_\sigma$ that achieved 
objective function $(\sigma^2+1/4) + \Delta$ where $\Delta=1/7Br.$
\begin{itemize}
\item If no such $v_\sigma$ was found, let
$V_{t+1}=V_t$ and proceed to the next round.
\item Else let $v_t = v^* -\frac{\sum_{v\in V_t}v\langle v,v^*\rangle}
{\|\sum_{v\in V_t}v\langle v,v^*\rangle \|}$
and put $V_{t+1} = V_t\cup\Set{v_t}.$
\end{itemize}
\end{enumerate}
\end{boxedminipage}
\caption{Reconstruction Attack on Linear Sketches. The algorithm
iteratively builds a subspace $V_t$ that is approximately contained in the
unknown subspace $A.$ In each round the algorithm queries $\cA$ on a sequence
of queries chosen from the orthogonal complement of $V_t.$ As the dimension of
$V_t$ grows larger, the oracle must make a mistake.}
\figurelabel{attack}
\end{figure}

\subsection{Proof of \theoremref{attack}}

\begin{proof}
We may assume without loss of generality that $n=r+90\log(Br)$ by working with
the first $r+90\log(Br)$ coordinates of~$\R^n.$ This ensures that a polynomial
dependence on $n$ in our algorithm is also a polynomial dependence on $r.$

For each $1\le t\le t,$ let $W_t\subseteq A$ be the closest $(t-1)$-dimensional 
subspace to $V_t$ that is contained in $A.$ Formally, $W_t$ satisfies
\begin{equation}
d(V_t,W_t) = \min\set{ d(V_t,W) \colon \dim(W)=t-1, W\subseteq A}\mper
\end{equation}
Note that here we identify
$V_t$ with the subspace that is spanned by the vectors contained in $V_t.$ 
We will maintain (with high probability) that the following invariant is true 
during the attack:
\begin{description}
\item[Invariant at step $t$:] 
\begin{equation}\equationlabel{invariant}
\dim(V_t) = t-1\qquad\text{and}\qquad
d(V_t,W_t)\le 
\frac {t}{20(Bn)^{3.5}\log(Bn)^{2.5}}.
\end{equation}
\end{description}
Note that the invariant holds vacuously at step~$1,$ since $V_1=\{0\}\subseteq
A.$ Informally speaking, our goal is to show that either the algorithm
terminates with a failure certificate or the invariant continues to hold.
Note that whenever the invariant holds in a step~$t,$ we must have
\[
d(V_t,W_t)\le \frac1{20B^{3.5}n^{2.5}\log(Bn)^{2.5}}.
\]
Hence, \lemmaref{stat-dist} shows that for every $\sigma^2\in(0,B],$
\begin{equation}\equationlabel{VtWt}
\|G(V_t^\bot,\sigma^2)-G(W_t^\bot,\sigma^2)\|_\tv \le 
20\sqrt{B n\log(Bn)}\cdot d(V_t,W_t)+\frac1{(Bn)^5}
\le \frac{1}{B^3n^2\log(Bn)^2}\mper
\end{equation}
This observation leads to the following useful lemma.
\begin{lemma}\lemmalabel{correct}
Assume that the invariant holds at step $t.$ Then, if $f$ is
$(\alpha,B)$-correct on $V_t^\bot,$ then $f$ is $(\alpha+\epsilon,B)$-correct on
$W_t^\bot.$
\end{lemma}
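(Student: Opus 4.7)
The plan is to deduce this lemma as a direct consequence of the statistical distance bound in \eqref{VtWt}. The definition of $(\alpha,B)$-correctness is a statement about the probability of a single event, namely $\{f(g)=1\}$, under the distributions $G(V^\bot,\sigma^2)$ for $\sigma^2$ in two specified ranges. Since the total variation distance between two distributions upper bounds the change in probability of any event, it suffices to show that, under the invariant, $\|G(V_t^\bot,\sigma^2)-G(W_t^\bot,\sigma^2)\|_\tv\le\epsilon$ for every relevant $\sigma^2\in(0,B]$. Recall $\epsilon = 1/(20(Bn)^2\log(Bn))$ as set up in \figureref{attack}.

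First, I would invoke \eqref{VtWt}, which gives the sharper bound $\|G(V_t^\bot,\sigma^2)-G(W_t^\bot,\sigma^2)\|_\tv \le 1/(B^3 n^2\log(Bn)^2)$ whenever the invariant holds. A quick arithmetic check confirms this quantity is at most $\epsilon$ as long as $B\log(Bn)\ge 20$, which holds because $B\ge 8$ and $n$ is large enough under the theorem's hypothesis $n\ge r+90\log(Br)$.

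Second, I would verify the two clauses of \definitionref{correctness} for $W_t^\bot$ one at a time. For the first clause, fix any $\sigma^2\in[B/2,2B]$. By the $(\alpha,B)$-correctness hypothesis on $V_t^\bot$ we have $\Pr_{g\sim G(V_t^\bot,\sigma^2)}\{f(g)=1\}\ge 1-\alpha$, and by the standard inequality $|\Pr_P(E)-\Pr_Q(E)|\le \|P-Q\|_\tv$ applied with $E=\{f(g)=1\}$, this probability changes by at most $\epsilon$ when we switch to $G(W_t^\bot,\sigma^2)$. Hence $\Pr_{g\sim G(W_t^\bot,\sigma^2)}\{f(g)=1\}\ge 1-\alpha-\epsilon$, which is the first condition of $(\alpha+\epsilon,B)$-correctness. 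The same argument, with the inequality reversed, handles the second clause for $\sigma^2\in[0,2]$.

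There is no real obstacle here; the substantive content is already packaged in \eqref{VtWt} (which in turn rests on \lemmaref{stat-dist}). The only care needed is the arithmetic verification that the distance bound from the invariant is comparable to the slack $\epsilon$ used in the algorithm's empirical thresholds, and this is immediate from the definitions.
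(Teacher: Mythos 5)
Your proof is correct and is essentially the same as the paper's: both invoke \equationref{VtWt} to get that $\|G(V_t^\bot,\sigma^2)-G(W_t^\bot,\sigma^2)\|_\tv\le\epsilon$ for all $\sigma^2\in(0,B]$, then observe that the two conditions of \definitionref{correctness} are each a probability of the event $\{f(g)=1\}$, which can shift by at most the total variation distance. You merely spell out the arithmetic comparison $1/(B^3n^2\log^2(Bn))\le\epsilon$ that the paper leaves implicit.
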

\begin{proof}
\equationref{VtWt} implies that for every $\sigma^2\in(0,B],$ 
the statistical distance between $G(V_t^\bot,\sigma^2)$ and
$G(W_t^\bot,\sigma^2)$ is at most $\epsilon.$ Hence, the correctness
conditions from \definitionref{correctness} hold up to an $\epsilon$-loss in
the probabilities.
\end{proof}

Let $E$ denote the event that the empirical estimate $s(t,\sigma^2)$ is
accurate at all steps of the algorithm. Formally:
\[
\forall t\forall\sigma^2\in S:\left| s(t,\sigma^2) - \Pr_{G(V_t^\bot,\sigma^2)}\Set{f(g)=1}\right|
\le\epsilon\mper
\]
\begin{lemma}
$\Pr\Set{E}\ge 1-\exp(-n).$
\end{lemma}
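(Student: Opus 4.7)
The plan is to prove the concentration statement via a routine Chernoff/Hoeffding argument, followed by a union bound over all pairs $(t,\sigma^2)$ at which the empirical estimate is computed. The only subtlety is that $V_t$ is itself a random variable that depends on prior queries and responses, so the argument must be conditioned on the filtration generated by rounds $1,\dots,t-1$.

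First I would fix a round $t$ and a scalar $\sigma^2\in S,$ and condition on everything observed through round $t-1$ (in particular, conditioning fixes $V_t$). Given this conditioning, the samples $g_1,\dots,g_m$ drawn in \itemref{empirical} are i.i.d.\ from $G(V_t^\bot,\sigma^2),$ so the labels $a_i=f(g_i)\in\bits$ are i.i.d.\ Bernoulli with mean $p(t,\sigma^2)=\Pr_{g\sim G(V_t^\bot,\sigma^2)}\{f(g)=1\}.$ Hoeffding's inequality then yields
\[
\Pr\bigl\{|s(t,\sigma^2)-p(t,\sigma^2)|>\epsilon\,\bigm|\,V_t\bigr\}\le 2\exp(-2m\epsilon^2).
\]
Taking expectation over $V_t$ removes the conditioning and preserves the bound.

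Next I would plug in the parameters. With $\epsilon=1/(20(Bn)^2\log(Bn))$ and $m$ of order $B^{13}n^{11}\log^{15}(n),$ one has $m\epsilon^2=\Omega(B^9 n^7\log^{13}(n)),$ so each individual failure probability is at most $\exp(-\Omega(n^7))$ and in particular at most $\tfrac12\exp(-2n).$ Then I union bound over the at most $r+1\le n$ choices of $t$ and the $|S|=O(B/\epsilon)=\poly(Bn)$ choices of $\sigma^2,$ giving a total number of $(t,\sigma^2)$ pairs bounded by $\poly(Bn).$ Since $\poly(Bn)\cdot\exp(-\Omega(n^7))\le\exp(-n)$ (using $B\le\poly(n),$ as assumed in \theoremref{attack}), the event $E$ holds with probability at least $1-\exp(-n).$

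There is no genuine obstacle here; the only point that deserves a careful sentence is the dependence between rounds, which is handled by conditioning on $V_t$ before applying Hoeffding. Note also that $f$ is assumed throughout \sectionref{attack} to be deterministic, so given the samples, the labels $a_i$ are a deterministic function of $g_i$ and hence genuinely i.i.d.\ conditional on $V_t;$ the extension to randomized $f$ is deferred to \sectionref{rand} and does not affect this lemma.
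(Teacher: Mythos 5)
Your proof is correct and takes the same approach as the paper, which simply invokes a standard Chernoff bound given the choice $m\gg (Bn/\epsilon)^2$; you supply the conditioning-on-$V_t$ detail and the union bound over $(t,\sigma^2)$ pairs that the paper leaves implicit, but there is no substantive difference in method.
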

\begin{proof}
The claim follows from a standard application of the Chernoff bound, since we
chose the number of samples $m\gg (Bn/\epsilon)^2\mper$
\end{proof}
\begin{lemma}
\lemmalabel{empirical}
Under the condition that $E$ occurs, the following is true:
If the algorithm terminates in round~$t$ and outputs $G(V_t^\bot,\sigma^2),$
then $G(V_t^\bot,\sigma^2)$ is a failure certificate for~$f.$ Moreover, if the
algorithm does not terminate in round~$t$ and the invariant holds in
round~$t,$ then $f$ is $B$-correct on $W^\bot.$
\end{lemma}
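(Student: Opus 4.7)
The plan is to derive both conclusions from event $E$ (which ties empirical frequencies $s(t,\sigma^2)$ to the true probabilities $\Pr_{g\sim G(V_t^\bot,\sigma^2)}\{f(g)=1\}$) combined with \lemmaref{correct} and \lemmaref{stat-dist}, which transfer correctness from $V_t^\bot$ to $W_t^\bot$ once the invariant bounds $d(V_t,W_t)$.

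For the first part, if the algorithm terminates in round $t$ and outputs $(V_t^\bot,\sigma^2)$, then by the termination rule either $\sigma^2\in S\cap[B/2,B]$ with $s(t,\sigma^2)\le 1-\epsilon$, or $\sigma^2\in S\cap[3/4,2]$ with $s(t,\sigma^2)\ge\epsilon$. Under $E$ the true probability $p=\Pr_{g\sim G(V_t^\bot,\sigma^2)}\{f(g)=1\}$ is within $\epsilon$ of $s(t,\sigma^2)$, so the first case yields $p\le 1-\Omega(\epsilon)$ (after an arbitrarily mild sharpening of the Chernoff constants) and the second yields $p\ge\Omega(\epsilon)$. Picking $C$ large enough that $(Bn)^{-C}\le\epsilon/2$, both inequalities match the failure-certificate definition; since $\sigma^2$ lies in $[B/2,50B]$ or $[0,2]$ respectively, $(V_t^\bot,\sigma^2)$ is a valid failure certificate.

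For the second part, non-termination means that at every $\sigma^2\in S$ the termination condition fails. In particular, $s(t,\sigma^2)>1-\epsilon$ for $\sigma^2\in S\cap[B/2,B]$ and $s(t,\sigma^2)<\epsilon$ for $\sigma^2\in S\cap[3/4,2]$, so under $E$ the true probabilities satisfy $p>1-2\epsilon$ and $p<2\epsilon$ at every grid point. This establishes $(2\epsilon,B)$-correctness of $f$ on $V_t^\bot$ at every $\sigma^2\in S$. I would then extend grid correctness to the continuous ranges $[B/2,2B]$ and $[0,2]$ required by \definitionref{correctness} by exploiting the ambient $N(0,1/4)^n$ noise component of $G(V^\bot,\sigma^2)$: consecutive grid points (spaced by $\epsilon$) are $\poly(\epsilon)$-close in total variation, and a direct Gaussian smoothness calculation with the $1/4$ noise floor controls the extrapolations from $B$ up to $2B$ and from $3/4$ down to $0$. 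Finally, the invariant together with \lemmaref{stat-dist} gives $\|G(V_t^\bot,\sigma^2)-G(W_t^\bot,\sigma^2)\|_\tv\le\epsilon$ uniformly in $\sigma^2\in(0,B]$, and \lemmaref{correct} then converts $O(\epsilon)$-correctness on $V_t^\bot$ into $O(\epsilon)$-correctness on $W_t^\bot$. The choice $\epsilon=1/(20(Bn)^2\log(Bn))$ keeps the accumulated parameter below $1/(10(Bd)^2)$, which is precisely $B$-correctness.

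The main technical obstacle is the grid-to-interval extension in Part~2: $S\subseteq[3/4,B]$ leaves the sub-intervals $[0,3/4]$ and $[B,2B]$ untested. The cleanest route is to observe that the fixed $N(0,1/4)^n$ noise component of $G(V^\bot,\sigma^2)$ keeps the Fisher information bounded, so $\sigma^2\mapsto G(V^\bot,\sigma^2)$ is total-variation Lipschitz in $\sigma^2$ with modulus $\poly(Bn)$ throughout the relevant range; interval correctness then follows from grid correctness with only a $\poly(\epsilon)$ additive loss. Beyond this extension, the rest is triangle-inequality bookkeeping across three $\epsilon$-sized losses (the $E$ deviation, the grid-to-interval extension, and the $V_t\to W_t$ transfer), checking that they sum to at most $1/(10(Bd)^2)$.
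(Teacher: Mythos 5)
Your Part~1 argument is the same as the paper's, and you correctly flag a point the paper elides: the event $E$ as stated only gives $|s(t,\sigma^2)-p|\le\epsilon$, which combined with the termination threshold $s\le 1-\epsilon$ yields only the vacuous $p\le 1$. Your fix (sharpen the concentration constant; $m = \Theta(B^{13}n^{11}\log^{15}n)$ gives deviation $\ll\epsilon$ anyway) is the right one and is indeed ``mild.''

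For Part~2 you again use the paper's route (non-termination $\Rightarrow$ grid correctness on $V_t^\bot$, then \lemmaref{correct} to move to $W_t^\bot$), and you correctly notice something the paper's one-paragraph proof glosses over: the grid $S=[3/4,B]\cap\epsilon\mathbb{Z}$ does not cover the ranges $[0,2]$ and $[B/2,2B]$ demanded by \definitionref{correctness}. But your proposed repair does not actually close this gap. A total-variation Lipschitz bound with modulus $\poly(Bn)$ only controls extrapolation over a $\sigma^2$-distance of order $\epsilon/\poly(Bn)$; the sub-intervals $[0,3/4]$ and $[B,2B]$ are at constant distance from the nearest grid point, so the Lipschitz bound yields a TV estimate of order $\poly(Bn)\cdot\Omega(1)$, which is useless. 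Concretely, $G(V^\bot,0)=N(0,1/4)^n$ and $G(V^\bot,3/4)$ differ in the $V^\bot$-directions by a variance ratio of $4$, so their TV distance is $1-o(1)$ in high dimension, regardless of the ``$1/4$ noise floor''; the bounded-Fisher-information observation gives only local (infinitesimal) control, not control across an interval of length $3/4$ or $B$. So the claim that ``interval correctness follows from grid correctness with only a $\poly(\epsilon)$ additive loss'' is false on the untested sub-intervals, and correctness there simply cannot be inferred from the grid tests. The honest resolution is to enlarge $S$ to cover $[0,2B]$ (with spacing tuned so that consecutive grid points are TV-close relative to the threshold $1/(10(Bd)^2)$, which also needs the factor $\sqrt{d}$ you are implicitly ignoring in your Lipschitz modulus), rather than to try to extrapolate; the within-grid TV argument you sketch is the right tool once the grid actually reaches the endpoints, but it is not a substitute for testing there.
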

\begin{proof}
The first claim follows directly from the definition of a failure certificate
and the condition that the empirical error given by $s(t,\sigma^2)$ is
$\epsilon$-close to the actual error.

The second claim follows from \lemmaref{correct}. Indeed by the condition $E$
and the assumption that the algorithm did not terminate,
we must have that $f$ is $(2\epsilon,B)$-correct on $V_t^\bot.$ By
\lemmaref{correct}, this implies that $f$ is $(3\epsilon,B)$-correct on
$W_t^\bot.$ Note that $3\epsilon\le 1/10(Bn)^2$ and hence $f$ is correct on
$W_t^\bot.$
\end{proof}

The next lemma is crucial as it shows that the invariant continues to hold
with high probability assuming that $f$ continues to be correct. 
\begin{lemma}[Progress]\lemmalabel{progress}
Let $t\le r.$ Assume that the invariant holds in round $t$ and that $f$ is
$B$-correct on $W_t^\bot.$ Then,  
with probability $1-1/n^2$ the invariant holds in round $t+1.$
\end{lemma}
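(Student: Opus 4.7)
The plan is an inductive step driven by the conditional expectation lemma. Since $f$ is $B$-correct on $W_t^\bot$ and $W_t\subseteq A$ has dimension $t-1$, \lemmaref{conditional-main} (applied with $V=W_t$) yields a scalar $\sigma_*^2\in[3/4,B]$ and a unit vector $u\in A\cap W_t^\bot$ such that for $g\sim G(W_t^\bot,\sigma_*^2)$,
\[
\E\brac{\iprod{u,g}^2\mid f(g)=1}\ge (\sigma_*^2+\tfrac14)+\frac1{4Br},\qquad \Pr\set{f(g)=1}\ge\frac1{40B^2r}\mper
\]
Rounding $\sigma_*^2$ to the nearest $\sigma^2\in S$ costs $O(\epsilon)$ in total variation, and by the invariant combined with \equationref{VtWt}, $G(V_t^\bot,\sigma^2)$ is $O(1/(Bn)^2)$-close in total variation to $G(W_t^\bot,\sigma_*^2)$, so the above two quantitative conclusions survive (with negligibly worse constants) for our actually queried distribution $G(V_t^\bot,\sigma^2)$.

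Next, I would argue that the algorithm commits to some $v^*$ in round $t$, and that this $v^*$ is nearly contained in $A\cap V_t^\bot$. Since $m=\poly(Bn)$ is chosen large enough, a standard Chernoff argument gives with probability $1-\exp(-n)$ that the number $m'$ of accepted samples in the chosen $\sigma^2$-subround is at least $m/(100B^2n)$ and that the empirical objective $z(u)=\tfrac1{m'}\sum_i\iprod{u,g_i'}^2$ is within $\Delta/3$ of its conditional expectation; thus $z(u)\ge(\sigma^2+\tfrac14)+\Delta$ and the empirical maximizer $v_\sigma$ certifies $v^*$. The main obstacle is then showing that any unit vector $v$ achieving $z(v)\ge(\sigma^2+\tfrac14)+\Delta$ must lie almost entirely in $A\cap V_t^\bot$; this is exactly the content of the biased-Gaussian top-singular-vector analysis of \sectionref{net}. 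The intuition I would use there is that writing $v=\alpha y+\beta w$ with $y\in A$ unit, $w\in A^\bot$ unit, and $\alpha^2+\beta^2=1$, the coordinate $\iprod{w,g}$ is independent of both $\iprod{y,g}$ and the event $f(g)=1$ (because $f(g)=f(P_Ag)$), so $\beta^2\iprod{w,g}^2$ concentrates around $\beta^2(\sigma^2+\tfrac14)$ with no conditional boost, whereas $\alpha^2\iprod{y,g}^2$ contributes at most $\alpha^2(\sigma^2+\tfrac14+O(1/Br))$; the uniform-over-$v$ version gives $\|P_{A^\bot}v^*\|\le 1/(30(Bn)^{3.5}\log(Bn)^{2.5})$. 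Applying the same reasoning inside $A$ to the direction $V_t\cap A$ (which by the invariant is very close to $V_t$), $v^*$ is also nearly orthogonal to $V_t$, so after subtracting $P_{V_t}v^*$ and renormalizing we obtain a well-defined $v_t$ within the same distance of a unit vector $y^*\in A\cap V_t^\bot$.

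To close the induction I would choose $y_t$ to be the unit vector obtained by projecting $y^*$ onto $W_t^\bot$ and normalizing (well-defined because $\|P_{W_t}y^*\|\le d(V_t,W_t)$ is tiny), and set $W_{t+1}=W_t\oplus\mathrm{span}(y_t)\subseteq A$, which has dimension $t$. The routine operator-norm calculation
\[
\|P_{V_{t+1}}-P_{W_{t+1}}\|_2\le \|P_{V_t}-P_{W_t}\|_2+\|v_tv_t^\top - y_ty_t^\top\|_2
\]
combined with $\|v_tv_t^\top-y_ty_t^\top\|_2\le 2\|v_t-y_t\|\le 2\|v^*-y^*\|+O(d(V_t,W_t))$ yields
\[
d(V_{t+1},W_{t+1})\le d(V_t,W_t)+\frac1{20(Bn)^{3.5}\log(Bn)^{2.5}},
\]
matching the invariant at step $t+1$. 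A union bound over the $|S|=\poly(Bn)$ values of $\sigma^2$ and the constantly many concentration events of this round gives a total failure probability at most $1/n^2$, as claimed.
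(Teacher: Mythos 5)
Your proposal follows essentially the same route as the paper: apply the conditional expectation lemma (\lemmaref{conditional-main}) to $W_t^\bot$, transfer the boost to the actually-queried $G(V_t^\bot,\sigma^2)$ using the total-variation bound implied by the invariant, invoke the biased-Gaussian top-singular-vector analysis of \sectionref{net} (\lemmaref{net}) to place $v^*$ near $A\cap V_t^\bot$, and close with the projector-difference estimate. The paper is more careful in two spots you compress — the transfer of a \emph{conditional} expectation across a TV bound requires paying a factor $\Theta(B^2 r)$ and a truncation of $\langle u,g\rangle^2$, and one must rule out that a "bad" $\sigma^2$ earlier in the scan clears the objective threshold — but both are handled by exactly the decomposition you sketch, so your plan is correct.
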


We will carry out the proof of \lemmaref{progress} in \sectionref{progress}.
But before we do so we will conclude the proof of \theoremref{attack} assuming
that the previous lemma holds.  In order to do so, we argue that if we reach
the final round and the invariant holds, we have effectively reconstructed all
of $A.$ Hence, it must be the case that $f$ is no longer correct on $W_t^\bot$
as shown next.
\begin{lemma}\lemmalabel{final}
Suppose that the invariant holds for $t=r+1.$ Then, $f$ is not $B$-correct on $W_t.$ 
\end{lemma}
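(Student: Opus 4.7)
The plan is to show that once $V_{t}$ has full dimension $r$, the nearest subspace $W_t\subseteq A$ of the same dimension must coincide with $A$ itself, at which point the definition of $G(W_t^\bot,\sigma^2)$ forces the acceptance probability to be independent of~$\sigma^2$, contradicting the two-sided correctness requirement.

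First, I would use the dimension bookkeeping in the invariant. For $t=r+1$, the invariant gives $\dim(V_t)=r$. By definition $W_t$ is an $(t-1)=r$-dimensional subspace contained in $A$, and $\dim(A)=r$, so necessarily $W_t=A$ and therefore $W_t^\bot = A^\bot$.

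Next, I would unpack the distribution $G(W_t^\bot,\sigma^2)=G(A^\bot,\sigma^2)$ as specified in \definitionref{complement-gaussian}: a sample has the form $g=P_{A^\bot}g_1+g_2$ with $g_1\sim N(0,\sigma^2)^n$ and $g_2\sim N(0,1/4)^n$ independent. Since $f$ factors through the projection onto $A$, i.e.\ $f(g)=f(P_Ag)$, and $P_A P_{A^\bot}g_1=0$, we have
\[
f(g) \;=\; f(P_A g_2),
\]
which does not depend on $\sigma^2$ at all. Hence $p\defeq \Pr\Set{f(g)=1}$ is the same number for every $\sigma^2\in(0,B]$.

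Finally, I would invoke \definitionref{correctness} to derive the contradiction. If $f$ were $B$-correct on $W_t^\bot=A^\bot$, it would be $(\epsilon,B)$-correct for some $\epsilon\le 1/10(Bd)^2<1/2$, which by the two conditions would require $p\ge 1-\epsilon$ (using any $\sigma^2\in[B/2,2B]$) and simultaneously $p\le\epsilon$ (using any $\sigma^2\in[0,2]$). These inequalities are incompatible, so $f$ cannot be $B$-correct on $W_t^\bot$. No step here is really an obstacle; the only subtlety is checking that the extension of \definitionref{complement-gaussian} to the boundary case $V=A$ (i.e., $\dim(V)=r$ rather than $\le r-1$) behaves as written, which is immediate since the definition uses only $P_{V^\bot}$ and makes sense for any subspace $V$.
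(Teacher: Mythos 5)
Your proof is correct and follows essentially the same route as the paper's: identify $W_t=A$ by dimension counting, then observe that $f$ cannot distinguish the distributions $G(A^\bot,\sigma^2)$ across different $\sigma^2$, which is incompatible with two-sided correctness. You spell out the mechanism more explicitly than the paper — since $f$ factors through $P_A$ and $P_AP_{A^\bot}g_1=0$, the output $f(g)=f(P_Ag_2)$ is literally independent of $\sigma^2$ — where the paper just asserts indistinguishability in one sentence.

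One small wrinkle you should be aware of: with $W_t=A$, the parameter $d=\dim(W_t^\bot\cap A)$ in \definitionref{correctness} is zero, so the threshold $1/10(Bd)^2$ is $+\infty$ and the inequality ``$\epsilon\le 1/10(Bd)^2<1/2$'' in your last paragraph does not literally hold. This is a defect of the paper's definition at the boundary case $V=A$ (the paper's own lemma statement has the same issue), not of your argument: what is actually used downstream, via \lemmaref{empirical}, is that $f$ is $(3\epsilon,B)$-correct for the concrete small $\epsilon$ from the discretization, and your constancy-of-$p$ observation rules that out just as well.
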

\begin{proof}
Since $t=r+1$ and the invariant holds, we have $\dim(V_t)=\dim(W_t)=r.$ On the
other hand $W_t\subseteq A$ and $\dim(A)=r.$ Hence, $W_t=A.$ Therefore,
the function $f$ cannot distinguish between samples from $G(W_t^\bot,2)$ and
samples from $G(W_t^\bot,B).$ Thus, $f$ must make a mistake with constant
probability on one of the distributions. 
\end{proof}
Condition on the event that $E$ occurs. Since $E$ has probability
$1-\exp(-n),$ this affects the success probability of our algorithm only by a
negligible amount. Under this condition, if the algorithm terminates in a
round~$t$ with $t\le r,$ then by \lemmaref{empirical}, the algorithm actually
outputs a failure certificate for $f.$ On the other hand, suppose that we do
not terminate in any of the rounds $t\le r.$ By the second part of
\lemmaref{empirical}, this means that in each round $t$ it must be the
case that $f$ is correct on $W_t^\bot$ assuming that the invariant holds at
step~$t.$ In this case we can apply
\lemmaref{progress} to argue that the invariant continues to hold in round
$t+1.$ Since the invariant holds in step~$1,$ it follows that if the algorithm
does not terminate prematurely, then with probability $(1-1/n^2)^r\ge 1-1/n$ the
invariant still holds at step $r+1.$ But in this case, $W_{r+1}$ is not
correct for $f$ by \lemmaref{final} and hence by \lemmaref{empirical} we output
a failure certificate with probability $1-\exp(-n).$ Combining the two
possible cases, it follows that the algorithm successfully finds a failure
certificate for $f$ with probability $1-2/n.$ This is what is required by
\theoremref{attack}.

It therefore only remains to argue about query complexity and running time. The query
complexity is polynomially bounded in $n$ and hence also in $r$ since we may
assume that $n\le O(r)$ as previously argued.
Computationally, the only non-trivial step is finding the vector $v_\sigma$
that maximizes $z(v)=\frac1{m'}\sum_{i=1}^{m'}\langle v_\sigma,g_i\rangle^2.$
We claim that this vector can be found efficiently using singular vector
computation. Indeed, let $G$ be the $m'\times n$ matrix that has
$g_1,\dots,g_m'$ as its rows. The top singular vector $v$ of $G,$ by definition,
maximizes $\|Gv\|^2= \sum_{i=1}^{m'}\langle g_i,v\rangle^2.$ Hence, it must
also maximize the $z(v).$ This shows that the attack can be implemented in
time polynomial in~$r.$
This concludes the proof of \theoremref{attack}
\end{proof}

\subsection{Proof of the Progress Lemma (\lemmaref{progress})}
\sectionlabel{progress}
Let $t\le r.$ Assume that the invariant holds in round $t.$ Further assume
that $f$ is $B$-correct on $W_t^\bot.$ 
Recall that under these assumptions, by \equationref{VtWt}, for 
every $\sigma^2\in(0,B],$
\begin{equation}\equationlabel{VtWt}
\delta\defeq \|G(V_t^\bot,\sigma^2)-G(W_t^\bot,\sigma^2)\|_\tv 
\le \frac{1}{B^3n^2\log(Bn)^2}\mper
\end{equation}
Our goal is to show that with probability $1-1/n^2,$ the invariant holds in round $t+1.$ 
To prove this claim we will invoke the conditional expectation lemma
(\lemmaref{conditional-main}) in the analysis of our algorithm. 

\begin{lemma}\lemmalabel{sigmatilde}
Assume that $f$ is correct on $W_t^\bot.$
There exists a $\tilde\sigma^2\in S,$ $\Delta\ge
\frac 1{7Br}$ and $u\in V_t^\bot\cap A$
such that for $g\sim G(V_t^\bot,\tilde\sigma^2),$ we have
$\Pr\Set{f(g)=1}\ge 1/60B^2r$ and 
\[
\E \left[\langle u,g\rangle^2 \, \Big|\, f(g)=1\right]
\ge \E\left[\langle u,g\rangle^2\right] + \Delta\mper
\]
\end{lemma}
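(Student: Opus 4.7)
The plan is to apply the conditional expectation lemma (\lemmaref{conditional-main}) to the subspace $W_t\subseteq A$ (which has dimension $t-1\le r$), and then transfer the resulting gap from $G(W_t^\bot,\sigma^2)$ to $G(V_t^\bot,\tilde\sigma^2),$ the distribution actually used by the algorithm. This transfer is legitimate because the invariant forces $d(V_t,W_t)$ to be polynomially small in $Bn,$ so by \equationref{VtWt} the TV distance $\delta$ between the two distributions is at most $1/(B^3n^2\log^2(Bn)).$ Note that \lemmaref{conditional-main} requires the ambient subspace to lie in $A,$ which is why we must route through $W_t$ rather than apply the lemma directly to $V_t.$

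Since $f$ is $B$-correct on $W_t^\bot,$ \lemmaref{conditional-main} hands us a $\sigma^2\in[3/4,B]$ and a unit vector $u'\in A\cap W_t^\bot$ satisfying $\Pr\Set{f(g')=1}\ge 1/(40B^2r)$ and
\[
\E\left[\langle u',g'\rangle^2\mid f(g')=1\right]\ \ge\ \E\left[\langle u',g'\rangle^2\right]+\frac{1}{4Br},
\]
where $g'\sim G(W_t^\bot,\sigma^2).$ I would then perform three perturbations to match the statement: (i) round $\sigma^2$ to the nearest grid point $\tilde\sigma^2\in S,$ which shifts the distribution by a negligible amount since $S$ has spacing $\epsilon;$ (ii) swap $G(W_t^\bot,\tilde\sigma^2)$ for $G(V_t^\bot,\tilde\sigma^2)$ via the TV bound $\delta;$ and (iii) replace $u'$ by the normalized projection $u=P_{V_t^\bot\cap A}u'/\|P_{V_t^\bot\cap A}u'\|,$ so that $u$ lies in $V_t^\bot\cap A$ as required. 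A standard principal-angle argument bounds $\|u-u'\|$ by $O(d(V_t,W_t)),$ so step (iii) changes $\langle u,g\rangle^2$ only by $O(\|u-u'\|\cdot\|g\|^2),$ which is negligible for typical $g.$

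The delicate step is (ii), because the observable $\langle u',g\rangle^2$ is unbounded and the TV bound cannot be applied as a black box. I would handle this by conditioning on the high-probability event $\cE=\Set{\|g\|\le M}$ with $M=O(\sqrt{Bn\log(Bn)}),$ which has measure at least $1-1/(Bn)^5$ under both distributions; inside $\cE$ we have $\langle u',g\rangle^2\le M^2,$ so the transfer error on the conditional expectation is at most $O(\delta M^2)=O(1/(B^2n\log(Bn))),$ while Gaussian tails outside $\cE$ contribute a lower-order amount. Because the baseline $\E_{g\sim G(V_t^\bot,\tilde\sigma^2)}[\langle u,g\rangle^2]=\tilde\sigma^2+1/4$ is computed exactly (it depends only on $\|u\|=1$ and $u\in V_t^\bot$), combining all three small losses leaves a gap of at least $1/(7Br)$ and a success probability at least $1/(60B^2r),$ as claimed.
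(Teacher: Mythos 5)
Your overall framework matches the paper's own proof exactly: apply \lemmaref{conditional-main} on $W_t^\bot$ (where $f$ is certified correct), then transfer the resulting gap to $G(V_t^\bot,\tilde\sigma^2)$ via the TV bound \equationref{VtWt}, handling the unboundedness of $\langle u',g\rangle^2$ by truncation, and finally project $u'$ into $V_t^\bot\cap A$. Steps (i) and (iii) of your decomposition are fine, but step (ii) has a genuine quantitative gap.

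Your claim that the transfer error on the conditional expectation is $O(\delta M^2)$ omits the amplification coming from the conditioning on $f(g)=1$, which is an event of probability only $p=\Theta(1/(B^2 r))$. If $\|P-Q\|_\tv\le\delta$ and $0\le X\le D$, then $\bigl|\E_P[X\mid E]-\E_Q[X\mid E]\bigr|=O(D\delta/p)$, not $O(D\delta)$: the conditional distributions can be $\Theta(\delta/p)$ apart in TV even when the unconditional distributions are $\delta$-close. The paper makes this explicit when it states that ``the statistical distance between the two distributions under the condition that $f(g)=1$ can only increase by a factor of $50B^2r$.'' With your truncation $D=M^2=\Theta(Bn\log(Bn))$ and $\delta=O\bigl(1/(B^3n^2\log^2(Bn))\bigr)$, the resulting error is $\Theta(B^2r\cdot\delta M^2)=\Theta\bigl(r/(n\log(Bn))\bigr)$, which can be as large as $\Theta(1/\log(Bn))$ and hence is nowhere near the needed $o(1/(Br))$.

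The paper avoids this by truncating the observable $\langle u',g\rangle^2$ directly rather than $\|g\|^2$. Since $\langle u',g\rangle$ is a one-dimensional Gaussian of variance $\Theta(B)$, the truncation threshold $D=\Theta(B\log(rB))$ already captures all but an exponentially small tail, and this is tighter than your $M^2$ by a factor of $\Theta(n)$. The resulting transfer error $O(B^2r\cdot\delta\cdot D)=O\bigl(r/(n^2\log(Bn))\bigr)$ is then small enough. Without this refinement—both the correct $1/p$ factor and the tighter, one-dimensional truncation—your estimate cannot close the gap.
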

\begin{proof}
By our assumption, $W_t^\bot$ satisfies the assumptions of the
conditional expectation lemma (\lemmaref{conditional-main}) so that
there exists $u\in U=W_t^\bot\cap A$ and $\sigma\in[3/4,B],$ such
that 
\[
\E_{G(W_t^\bot,\sigma^2)} \left[\langle u,g\rangle^2 \, \Big|\, f(g)=1\right]
\ge \E\left[\langle u,g\rangle^2\right] + \frac 1{4Br}
\]
and $\Pr\Set{f(g)=1}\ge \frac1{40B^2r}\mper$ 
On the other hand, by \equationref{VtWt}, we know that $G(W_t^\bot,\sigma^2)$
is $\delta$-statistically close to $G(V_t^\bot,\sigma^2)$ 
and that $\delta=o(1/B^2n).$
We claim that therefore
\begin{equation}\equationlabel{condex}
\E_{G(V_t^\bot\cap A,\sigma^2)} \left[\langle u,g\rangle^2 \, \Big|\, f(g)=1\right]
\ge \E\left[\langle u,g\rangle^2\right] + \frac 1{6Br}
\end{equation}
and $\Pr\Set{f(g)=1}\ge \frac1{50B^2r}\mper$ The latter
statement is immediate because $f(g)\in\bits$ and hence $\Pr\Set{f(g)=1}$ can
differ by at most $\delta = o(1/B^3r\log(rB))$ between the two distributions. 
This further implies that the statistical distance between the two
distributions under the condition that $f(g)=1$ can only increase by a factor
of $50B^2r.$
Formally, for every distinguishing function $R\colon\R^n\to[0,D],$ we have
\begin{equation}
\equationlabel{expectation-diff}
\left|\E_{G(W_t^\bot,\sigma^2)} \left[R(g) \, \Big|\, f(g)=1\right]
-\E_{G(V_t^\bot\cap A,\sigma^2)} \left[R(g) \, \Big|\, f(g)=1\right]
\right|\le 50B^2r D\delta \mper
\end{equation}
On the other hand,
$\Pr\Set{ \langle u,g\rangle^2>10\ell C}\le \exp\left(-\ell\right).$ 
Hence, we can truncate $\langle u,g\rangle^2$ at $D=10B\log(rB)$ without
affecting either expectation by more than $o(1/Br).$ Hence, by
\equationref{expectation-diff} and the fact that 
$O(B^3r\log(rB))\delta=o(1/Br),$ we have
\begin{equation}
\equationlabel{expectation-diff2}
\left|\E_{G(W_t^\bot,\sigma^2)} \left[\langle u,g\rangle^2 \, \Big|\, f(g)=1\right]
-\E_{G(V_t^\bot\cap A,\sigma^2)} \left[\langle u,g\rangle^2 \, \Big|\, f(g)=1\right]
\right|\le  o\left(\frac 1{Br}\right)\mper
\end{equation}
A similar argument shows that if we change $\sigma^2$ by only $o(1/B^2n^2)$
additively, \equationref{expectation-diff2} continues to hold up to a
insignificant loss in the parameters.  Hence, there exists $\tilde\sigma^2$ in
our discretization for which this claim is true.  Finally, since $u\in W_t\cap
A$ and the invariant holds for $(V_t,W_t),$ we have that $\|P_{V_t}u\|\ge
1-1/B^2n^2.$ Hence, the conclusion of the lemma also holds for some $u\in V_t\cap
A$ up to an additive $o(1/Bn)$ loss in the expectation.
\end{proof}

Informally speaking, the previous lemma suggests that for $\tilde\sigma\in S,$
the vector $v_{\tilde\sigma}$ has exceptionally high objective value.
Moreover, we need to show that \emph{any} vector that has high objective value
must be very close to subspace $V_t^\bot\cap A.$ This is formally argued next.
The result follows from analyzing the top singular vector of the biased
Gaussian matrix obtained by arranging all positively labeled examples into a
matrix. This analysis uses standard techniques which we defer to
\sectionref{net} but rely on in the next lemma.
\begin{lemma}
With probability $1-\exp(-n),$ the vector $v^*$ found by the algorithm in step
$t$ satisfies
\begin{equation}\equationlabel{PUvt}
\|P_{V_t^\bot\cap A}v^*\|^2\ge 1-\frac{1}{20(Bn)^{3.5}\log^4(Bn)}\mper
\end{equation}
\end{lemma}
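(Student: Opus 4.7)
The plan is to interpret $v^*$ as the top eigenvector of the empirical second moment matrix $\widehat M = \frac{1}{m'} G^\top G$, whose rows are the positively labeled samples $g_1',\dots,g_{m'}'$, and to relate it to its population counterpart $M = \E\brac{gg^\top \mid f(g)=1}$ for $g\sim G(V_t^\bot,\sigma^2)$. Since $f$ depends only on $P_A g$, the component $P_{A^\bot}g$ remains a spherical Gaussian of variance $\sigma^2+\tfrac14$ that is independent both of $P_A g$ and of the conditioning event. Hence $M$ is block-diagonal in the splitting $\R^n = A \oplus A^\bot$,
\begin{equation*}
M \;=\; N \,\oplus\, (\sigma^2+\tfrac14)\,I_{A^\bot}, \quad\text{where}\quad N \defeq \E\brac{(P_A g)(P_A g)^\top \mid f=1}.
\end{equation*}
By \lemmaref{sigmatilde} we have $\lambda_{\max}(N) \ge (\sigma^2+\tfrac14)+\Delta$ with $\Delta = 1/7Br$, which exceeds the $(\sigma^2+\tfrac14)$-eigenspace of $M$ supported on $A^\bot$ by a genuine spectral gap of $\Delta$.

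The core structural step is to show that any unit vector $v \in \R^n$ satisfying $v^\top M v \ge (\sigma^2+\tfrac14) + \Delta - o(\Delta)$ must already obey $\|P_U v\|^2 \ge 1 - 1/\poly(Bn)$, where $U = V_t^\bot \cap A$. Decompose $v = v_U + v_{V_t} + v_{A^\bot}$ orthogonally. The $A^\bot$-component contributes exactly $(\sigma^2+\tfrac14)\|v_{A^\bot}\|^2$ by independence of $P_{A^\bot} g$ from $f$, so every boost beyond $(\sigma^2+\tfrac14)$ must come from the $A$-part. The key observation within $A$ is that $\langle v_{V_t}, g\rangle$ is a function of $g_2$ alone and has unconditional variance only $(1/4)\|v_{V_t}\|^2$, strictly smaller than the $(\sigma^2+\tfrac14)\|v_{V_t}\|^2$ that would be required to reach the target. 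Combining the trivial bound $\E[X^2 \mid f=1] \le \E[X^2]/\Pr\set{f=1}$ with the lower bound $\Pr\set{f=1} \ge 1/(40B^2 r)$ from \lemmaref{sigmatilde}, together with a Cauchy--Schwarz bound on the $U$--$V_t$ cross term, prevents any appreciable $\|v_{V_t}\|^2$ from pushing $v^\top M v$ up to $(\sigma^2+\tfrac14)+\Delta$.

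Concentration of the empirical $\widehat M$ around $M$ is handled by the matrix-concentration analysis for biased Gaussian rows developed in \sectionref{net}: with $m' \ge m/(100B^2n)$ samples (which the algorithm ensures, otherwise proceeding to the next $\sigma$), one obtains $\|\widehat M - M\|_{\mathrm{op}} \le 1/\poly(Bn)$ with probability $1-\exp(-n)$, and a Davis--Kahan perturbation transfers the structural bound from the population top eigenvector to the empirical $v^*$. Finally, the invariant only provides $d(V_t, W_t) \le 1/\poly(Bn)$ rather than literal containment $V_t \subseteq A$; \lemmaref{stat-dist} replaces $G(V_t^\bot,\sigma^2)$ by $G(W_t^\bot,\sigma^2)$ at the cost of $1/\poly(Bn)$ total variation, so all preceding expectations incur only a negligible additive error.

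The main obstacle is the spectral-localization step within $A$. The conditioning $f(g)=1$ is essentially arbitrary, and one must rule out that it concentrates the conditional second moment in the low-variance $V_t$-direction beyond what a Markov-type bound permits. The conditional expectation lemma gives \emph{existence} of a boost in some $U$-direction but not a \emph{uniqueness} of that direction; tightening the inverse-polynomial exponent to the particular $1/(20(Bn)^{3.5}\log^4(Bn))$ quoted in the lemma is precisely what the biased-Gaussian top-singular-vector analysis of \sectionref{net} is designed to deliver.
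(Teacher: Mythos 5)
Your high-level plan matches the paper's: arrange the positively labeled samples into a matrix, interpret $v^*$ as a top singular vector, and use a concentration-plus-spectral-gap argument (the paper's \lemmaref{net}) to localize $v^*$ to $U = V_t^\bot \cap A.$ Both arguments also agree on the easy directions: the $A^\bot$-component is unbiased by conditioning, the $U$--$A^\bot$ cross term vanishes, and the $V_t\leftrightarrow W_t$ mismatch is absorbed via \lemmaref{stat-dist} at negligible total-variation cost.

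The gap is exactly where you flag it, and your Markov bound does not close it. To run the spectral-gap argument you need, for every unit $v\in V_t,$ that the conditional second moment $\E\bigl[\langle v,g\rangle^2 \mid f(g)=1\bigr]$ stays \emph{below} $\tau=\tilde\sigma^2+\tfrac14$ — this is condition~1 of \lemmaref{net} for $W=V_t+A^\bot.$ The bound $\E[X^2\mid f=1]\le \E[X^2]/\Pr\{f(g)=1\}$ together with $\Pr\{f(g)=1\}\ge 1/(40B^2r)$ from \lemmaref{sigmatilde} only gives $\E[\langle v,g\rangle^2\mid f=1]\le 10B^2r,$ which dwarfs $\tau+\Delta\le B+1.$ Therefore a unit vector lying almost entirely in a $V_t$-direction could, for all this Markov argument shows, achieve an objective value far above $\tau+\Delta,$ and nothing forces $v^*$ into $U.$ The Cauchy--Schwarz bound on the $U$--$V_t$ cross term cannot rescue this: it is only useful once $\|v_{V_t}\|$ is already known to be small, which is what you are trying to establish. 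The paper does something stronger: it directly asserts $\E\bigl[\langle v,g\rangle^2\mid f(g)=1\bigr]\le \tfrac14$ for unit $v\in V_t$ (the \emph{unconditional} second moment of $\langle v,g_2\rangle$), which yields condition~1 immediately. Your proposal neither establishes this nor substitutes a workable alternative; indeed you explicitly concede this is ``the main obstacle,'' so the proposal as written is incomplete on the step the lemma actually hinges on. A secondary omission: the algorithm outputs the \emph{first} $v_\sigma$ crossing the threshold over all $\sigma^2\in S,$ so one must also argue — as the paper does via a ``bad $\sigma$'' case — that no $\sigma$ without a genuine $U$-direction boost can cross the threshold; that argument is absent from your sketch.
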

\begin{proof}
Let $z^*=\max_{\sigma^2\in S}z(v_\sigma)$ denote the maximum objective value
achieved by any $\sigma$ in round $t$ of the algorithm. We will first lower
bound $z^*$ using the information we have about $\tilde\sigma^*$ from the
previous lemma. To this end, we would next like to apply \lemmaref{net} to the 
conditional distribution of $g\sim G(V_t^\bot,\tilde\sigma^2)$ conditioned on the
event that $f(g)=1.$
Note that $g_1',\dots,g'_{m'}$ are uniformly sampled from this distribution.
Furthermore, the probability that $\cA$ outputs $1$ on each sample is at least
$p\ge\Omega(1/B^2n).$ This can be used to show that by a Chernoff bound,
with probability $1-\exp(-n),$ we have that 
\[
m'\ge \frac{pm}{10} \ge \Omega\left(B^{11}n^{10}\log^{15}(n)\right)
\]
We will apply \lemmaref{net} with the following setting of $\gamma$:
\[
\gamma = \frac{1}{(Bn)^{3.5}\log^4(Bn)}\mper
\]
We need to verify
the following conditions of \lemmaref{net}. In doing so let $V=V_t^\bot\cap A$ and
$W=V^\bot = V_t + A^\bot.$ Further, let $\tau=\tilde\sigma^2+1/4$ and
$\Delta$ be the parameter from \lemmaref{sigmatilde}. 
\begin{enumerate}
\item Any unit vector $w\in W$ can be written as $\alpha v + \beta w',$ where
$\alpha^2+\beta^2=1,$ and $v,w'$ are unit vectors with $v\in V_t$ and $w'\in
A^\bot$ is orthogonal to $v.$ But $\E\langle v,g\rangle^2\le 1/4.$ Moreover, the
condition $f(g)=1$ does not bias the distribution along directions inside
$A^\bot.$ Hence, $\E\langle w',g\rangle^2\le \tau.$ It follows that $\E\langle
w,g\rangle^2\le\tau.$ Here we used that $\E\langle w',g\rangle\langle
v,g\rangle=0$ since $g$ can be written as the sum of two independent spherical
gaussians and $v$ and $w'$ are orthogonal.
\item For every $v\in V,$ $w\in W,$ we have 
$\E\langle v,g\rangle\langle w,g\rangle=0.$ This is again because $v$ and $w$ are
orthogonal and $g$ can be written as the sum of two independent spherical
Gaussians.
\item By \lemmaref{sigmatilde}, there exists $v\in V$ such that $\E\langle
v,g\rangle^2\ge \tau + \Delta$ where $\Delta\ge 1/7Br.$
\item Finally, for every $u\in\R^n,$ we claim that
$\xi^2=\Var\langle u,g\rangle^2\le O(B^2\log^2 n)$ as it corresponds to
the fourth moment of a Gaussian with variance $C$ conditioned on an event
of probability $\Omega(1/\poly(n)).$ Any such event can increase the fourth
moment of $O(B^2)$ by at most an $O(\log^2 n)$ factor.
\end{enumerate}
Finally, to apply \lemmaref{net} for the given value of $\gamma,\Delta,$ and
$\xi^2,$ we need the number of samples to be
\[
\Theta\left(\frac{n\log^2(n)\xi^2}{\gamma^2\Delta^2}\right)
\le \Theta\left(B^{11}n^{10}\log^{14}(n)\right)
= o(m')\mper
\]
We have thus verified all conditions of \lemmaref{net}.
It follows that with probability
$1-\exp(-n\log n),$
\[
z^* \ge z(\tilde\sigma) \ge 1 + \frac{\Delta}{14}\mper
\]
On the other hand, let us call a $\sigma^2\in S,$ \emph{bad} if for every unit vector $u\in
V_t^\bot\cap A$ we have for $g\sim G(V_t^\bot,\sigma^2),$
\[
\E \left[\langle u,g\rangle^2 \, \Big|\, f(g)=1\right]
\le \E\left[\langle u,g\rangle^2\right] + \frac{\Delta}{20}\mcom
\]
where $\Delta$ is the parameter from the previous lemma.
We claim that every bad $\sigma^2$ will achieve strictly
smaller objective value, i.e., $z(\sigma)\le 1+\Delta/18,$ with probability
$1-\exp(-n).$ This follows from
\theoremref{bernstein} similarly to how it was used in \lemmaref{net} except
that we now use an upper bound on $\E\langle u,g\rangle^2$ also for $u\in
V_t^\bot\cap A.$ 
Hence, the maximizer of the objective function must correspond to a $\sigma^*$
that satisfies the assumptions of \lemmaref{net} as we previously verified
them for $\tilde\sigma$ up to a constant factor loss in $\Delta.$ Hence, by
\lemmaref{net}, we must have that
$v^*$ satisfies with probability $1-\exp(-n),$
\[
\|P_{V_t^\bot\cap A}v^*\|^2\ge 1-\gamma\mper
\]
\end{proof}

\begin{lemma}\lemmalabel{progress2}
Suppose $(V_t,W_t)$ satisfies the Invariant and suppose the
vector~$v^*$ found in round $t$ satisfies \equationref{PUvt}. Then, the
Invariant holds for $(V_{t+1},W_{t+1}).$
\end{lemma}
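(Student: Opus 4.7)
The plan is to construct $W_{t+1}$ explicitly as $W_t$ extended by one direction in $A$ that mimics $v_t,$ and then to control the subspace distance $d(V_{t+1},W_{t+1})$ by combining the Invariant at step~$t$ with the near-containment of $v_t$ in $V_t^\bot\cap A$ guaranteed by \equationref{PUvt}.

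First I would introduce the ideal direction $u^* = P_{V_t^\bot\cap A}v^*/\|P_{V_t^\bot\cap A}v^*\|,$ a unit vector inside $V_t^\bot\cap A.$ Writing $\gamma = 1/(20(Bn)^{3.5}\log^4(Bn))$ for the slack in \equationref{PUvt}, and viewing $v_t$ as the unit vector in the direction of the Gram--Schmidt residual $P_{V_t^\bot}v^*,$ a short calculation of the inner product $\langle v_t,u^*\rangle = \|P_{V_t^\bot\cap A}v^*\|/\|P_{V_t^\bot}v^*\|$ yields $\langle v_t,u^*\rangle\ge 1-O(\gamma),$ hence $\|v_t-u^*\|=O(\sqrt{\gamma}).$ In particular $v_t$ is a nonzero unit vector orthogonal to $V_t,$ so $\dim(V_{t+1})=t,$ which takes care of the dimension half of the Invariant. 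For the other half I would Gram--Schmidt $u^*$ against $W_t:$ setting $\tilde u=(u^*-P_{W_t}u^*)/\|u^*-P_{W_t}u^*\|,$ I define $W_{t+1}=W_t\oplus\mathrm{span}(\tilde u)\subseteq A.$ Because $u^*\in V_t^\bot,$ the identity $P_{W_t}u^* = (P_{W_t}-P_{V_t})u^*$ implies $\|P_{W_t}u^*\|\le d(V_t,W_t)=o(1),$ so $W_{t+1}$ is indeed $t$-dimensional and $\|\tilde u-u^*\|=O(d(V_t,W_t)).$

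For the distance bound I would start from
\[
P_{V_{t+1}} - P_{W_{t+1}} = (P_{V_t}-P_{W_t}) + (v_tv_t^\dagger - \tilde u\tilde u^\dagger),
\]
and combine the triangle inequality with the standard estimate $\|v_tv_t^\dagger-\tilde u\tilde u^\dagger\|_2\le 2\|v_t-\tilde u\|$ and $\|v_t-\tilde u\|\le \|v_t-u^*\|+\|u^*-\tilde u\|.$ The hard part will be the quantitative bookkeeping: a naive additive bound accumulates $r\cdot O(\sqrt{\gamma})$ over the rounds, which is not tight enough. Instead I expect to sharpen the single-step estimate by decomposing a unit $v\in V_{t+1}$ as $\alpha v_t + w$ with $w\in V_t,$ and applying Cauchy--Schwarz to $|\alpha|\cdot\|v_t-u^*\|+\|w\|\cdot d(V_t,W_t)$ to obtain a Pythagorean-style increment $d(V_{t+1},W_{t+1})^2\le d(V_t,W_t)^2+O(\gamma),$ so that the error telescopes across the $r$ rounds and remains within the Invariant's per-step budget of $1/(20(Bn)^{3.5}\log^{2.5}(Bn)).$
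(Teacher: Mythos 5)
Your construction of $W_{t+1}$ and the triangle-inequality decomposition of $P_{V_{t+1}}-P_{W_{t+1}}$ are the same as the paper's, and your computation of the inner product $\langle v_t,u^*\rangle=\|P_{V_t^\bot\cap A}v^*\|/\|P_{V_t^\bot}v^*\|\ge 1-O(\gamma)$ and hence $\|v_t-u^*\|=O(\sqrt{\gamma})$ is correct. The paper's own proof is much terser: it directly asserts, from $\|P_{V_t^\bot\cap A}v^*\|^2\ge 1-\gamma$, that there is a unit $w\perp W_t$ in $A$ with $\|v_t-w\|\le O(\gamma)$, and then closes with the additive telescope $d(V_{t+1},W_{t+1})\le d(V_t,W_t)+O(\gamma)$. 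Your more careful estimate exposes that the natural bound here is $O(\sqrt{\gamma})$, not $O(\gamma)$ (since $\|P_Av_t\|\ge 1-\gamma$ controls $1-\|P_Av_t\|^2\approx 2\gamma$, whence $\|v_t-P_Av_t\|\approx\sqrt{2\gamma}$), so you are right that the naive additive accumulation $r\cdot O(\sqrt{\gamma})$ is too large for the stated Invariant.

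However, where the paper stops (by asserting the $O(\gamma)$ increment), your proposal also stops, by leaving the Pythagorean sharpening $d(V_{t+1},W_{t+1})^2\le d(V_t,W_t)^2+O(\gamma)$ as something you ``expect to sharpen,'' without proof. This is a genuine gap, and the step you hope for does not follow from the decomposition you wrote. Setting $M=P_{V_t}-P_{W_t}$ and $N=v_tv_t^\dagger-\tilde u\tilde u^\dagger$, expanding $\|(M+N)x\|^2=\|Mx\|^2+2\langle Mx,Nx\rangle+\|Nx\|^2$ yields a cross term
\[
\langle Mx,Nx\rangle=-\langle v_t,x\rangle\langle x,P_{W_t}v_t\rangle-\langle\tilde u,x\rangle\langle x,P_{V_t}\tilde u\rangle,
\]
whose magnitude is bounded only by $\|P_{W_t}v_t\|+\|P_{V_t}\tilde u\|\le 2\,d(V_t,W_t)$, i.e. it is first order in $d(V_t,W_t)$, not second order. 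So the squared-distance recurrence you want, $d_{t+1}^2\le d_t^2+O(\gamma)$, is not established by the decomposition $\alpha v_t+w$ with $w\in V_t$ and Cauchy--Schwarz; the cross term dominates $O(\gamma)$ whenever $d(V_t,W_t)\gg\gamma$. Concretely, without controlling that cross term you only recover the triangle-inequality bound $d_{t+1}\le d_t+O(\sqrt{\gamma})$, which you already observed is insufficient. You need an additional argument (or a reparametrization of $\gamma$ and of the Invariant's per-step budget, which would propagate back into \lemmaref{net} and the sample count $m$) to close the gap; as written, the plan's final step is a conjecture rather than a proof.
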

\begin{proof}
Let $\gamma=1/(Bn)^{3.5}\log^4(Bn).$
By \equationref{PUvt}, we have that $\|P_Uv^*\|^2\ge 1-\gamma,$ where
$U=V_t^\bot\cap A.$ This means in particular that $\|P_Av^*\|^2\ge 1-\gamma$
and $\|P_{V_t}v^*\|^2\le \gamma.$ This implies that
\[
\|P_A v_t\| \ge 1- O(\gamma)\mper
\]
We claim that this implies that 
\[
d(V_{t+1},W_{t+1})\le d(V_t,W_t)+O(\gamma) \le 
\frac t{(Bn)^{3.5}\log^3(Bn)}+
O\left(\frac1{(Bn)^{3.5}\log^4(Bn)}\right) \le \frac
{t+1}{(Bn)^{3.5}\log(Bn)^3}\mcom
\]
for sufficiently large $n.$ Since this is what we want to show,
it only remains to show that the first
inequality holds. To that end, note that we can write $P_{V_{t+1}}=
P_{V_t}+v_tv_t^T.$ Moreover, we have $P_{W_{t+1}}=P_{W_t} + ww^T$ where $w$ is
some unit vector orthogonal to $W_t$ such that
$\|v_t-w\|\le O(\gamma).$ Hence,
\[
d(V_{t+1},W_{t+1})
= \|P_{V_{t+1}}- P_{W_{t+1}}\|_2
\le \|P_{V_{t+1}}- P_{W_{t+1}}\|_2
+ \|v_tv_t^T - ww^T\|_2
\le d(V_t,W_t)+O(\gamma) \mper
\]
\end{proof}

We have shown that with probability $1-O(\exp(-n)),$ the conditions of
\lemmaref{progress2} are met and in this case the Invariant holds for
$(V_{t+1},W_{t+1}).$ This is what we needed to show in order to conclude the proof of
\lemmaref{progress}.

\subsection{Finding high error certificates via direct products}
\sectionlabel{high-error}

In this section we derive a useful extension of our theorem. Specifically we
show how to find \emph{strong failure certificates}. These will be
distributions as before of the form $G(V^\bot,\sigma^2),$ but now the
algorithm fails with \emph{constant} probability over this distribution rather
than inverse-polynomial.  In fact, our later application to compressed sensing
will need precisely this extension of our theorem.  The proof of this
extension follows from a simple direct product construction which shows how to
reduce the error probability of the oracle at intermediate steps exponentially
while increasing the sample complexity only polynomially. 

\begin{definition}[Strong Failure Certificate]
Let $B\ge 8$ and let $f\colon\R^n\to\bits.$ We say that a pair $(V,\sigma^2)$
is a \emph{$d$-dimensional failure certificate for $f$} if $V\subseteq\R^n$ is $d$-dimensional
subspace and $\sigma^2\in[0,2B]$ such that for some constant $C>0,$ 
we have $n\ge d+10\log(n)$ and moreover:
\begin{itemize}
\item Either $\sigma^2\in[B/2,50B]$ and 
$\Pr_{g\sim G(V^\bot,\sigma^2)}\Set{f(g)=1}\le 2/3,$
\item or $\sigma^2\le 2$ and
$\Pr_{g\sim G(V^\bot,\sigma^2)}\Set{f(g)=1}\ge 1/3.$
\end{itemize}
\end{definition}

The next theorem states that we can efficiently find strong failure
certificates.

\begin{theorem}\label{thm:constant}
Let $B\ge 8.$ Let $A\subseteq\R^n$ be an $r$-dimensional subspace of $\R^n$
such that $n\ge r+90\log(Br).$ Assume that $B\le\poly(n).$ 
Let $f\colon \R^n\to\bits$ satisfying
$f(x)=f(P_A x)$ for all $x\in\R^n.$ Then, there is an algorithm that with
probability $1/3$ and $\poly(B,r)$ adaptive oracle queries to $f$ finds 
a strong failure certificate for $f.$
Moreover, all queries that the algorithm makes are sampled from
$G(V^\bot,\sigma^2)$ for some $V\subseteq\R^n$ and $\sigma^2\in(0,B].$ 
\end{theorem}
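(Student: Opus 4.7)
The plan is a direct product reduction to \theoremref{attack}. Set $k=\Theta(\log(Bn))$ with a sufficiently large implicit constant and define the product oracle $\tilde f\colon\R^{nk}\to\bits$ by
\[
\tilde f(x_1,\dots,x_k) \;=\; \mathrm{Maj}\bigl(f(x_1),\dots,f(x_k)\bigr)\mper
\]
Observe that $\tilde f$ depends only on the tuple $(Ax_1,\dots,Ax_k),$ and hence only on the projection of its input onto the $k$-fold direct sum $\tilde A\subseteq\R^{nk}$ of copies of $A,$ which has dimension $kr.$ The ambient-dimension gap $nk\ge kr+90\log(Bkr)$ required by \theoremref{attack} follows immediately from $n\ge r+90\log(Br)$ for this choice of~$k.$ Apply \theoremref{attack} to $\tilde f$ with approximation parameter $B$: with probability~$9/10$ and using $\poly(B,kr)=\poly(B,r)$ queries to $\tilde f$ (each resolved by $k$ queries to $f$, so $\poly(B,r)$ queries to $f$ in total), we obtain a weak failure certificate $(\tilde V,\sigma^2)$ for $\tilde f$ whose failure probability is at least $(Bnk)^{-C}$ for some absolute constant~$C.$

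The heart of the argument is then an extraction step converting the weak certificate for $\tilde f$ into a strong certificate for $f.$ Write $\tilde g=(g^{(1)},\dots,g^{(k)})\sim G(\tilde V^\bot,\sigma^2),$ where each $g^{(i)}\in\R^n.$ The event that $\tilde f(\tilde g)$ disagrees with the correct answer for $\sigma^2$ is precisely the event that a majority of the individual evaluations $f(g^{(i)})$ are wrong. If the blocks $g^{(i)}$ were i.i.d.\ with common marginal $\mu$ on $\R^n,$ a standard Chernoff bound would force
\[
\Pr[\,\text{Maj wrong}\,]\;\le\;\exp\bigl(-\Omega(k)\bigr)
\qquad\text{whenever}\qquad
\Pr_{g\sim\mu}\bigl[f(g)\text{ wrong}\bigr]\le\tfrac13\mper
\]
For our choice $k=C'\log(Bn),$ the right-hand side is much smaller than $(Bnk)^{-C},$ contradicting the weak-certificate guarantee. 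Therefore $\Pr_{g\sim\mu}[f(g)\text{ wrong}]\ge 1/3,$ which is exactly a strong failure certificate for $f$ provided $\mu$ is (close to) a distribution of the form $G(V^\bot,\sigma^2)$ on $\R^n.$

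The main obstacle is that the subspace $\tilde V$ produced by \theoremref{attack} need not be block-diagonal, and so a priori it can introduce cross-block correlations in $\tilde g$ whose marginals are not of the form $G(V^\bot,\sigma^2).$ I would handle this in one of two complementary ways. The first is to modify the attack of \theoremref{attack} so that in each round the new basis vector $v_t$ is constrained to lie on the symmetric diagonal $\{(u,u,\dots,u):u\in\R^n\}\subseteq\R^{nk};$ by the $S_k$-invariance of $\tilde f,$ the Conditional Expectation Lemma (\lemmaref{conditional-main}) still locates a symmetric direction of increased variance, so the modified attack makes the same progress per round. The second is to symmetrize the recovered $\tilde V$ under the permutation action of $S_k$ on the $k$ blocks after the fact; since $\tilde f$ is $S_k$-invariant, any permutation of $\tilde V$ remains a failure certificate, and one can take the span of a symmetric orbit together with \lemmaref{stat-dist} to reduce to the block-diagonal case. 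Under either route $\tilde V$ becomes a $k$-fold direct sum of a single $V\subseteq\R^n,$ the marginal of each $g^{(i)}$ is exactly $G(V^\bot,\sigma^2),$ and the Chernoff extraction above delivers the promised strong failure certificate within the claimed $\poly(B,r)$ query and time budget.
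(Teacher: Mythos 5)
Your high-level approach matches the paper's exactly: form the $q$-fold direct product $f^{\otimes q}$ with a Majority aggregator, invoke \theoremref{attack} on the blown-up instance, and then extract a strong failure certificate from the weak one by a concentration argument. You also correctly anticipate the step that needs care — the subspace $\tilde V$ (the paper's $W$) returned by \theoremref{attack} has no a priori reason to be block-diagonal. The paper handles this by asserting the decomposition $P_{W^\bot}=\sum_i P_{U_i\cap W^\bot}$ and concluding that $G(W^\bot,\sigma^2)$ factors as a product $(G_1,\dots,G_q)$ with each $G_i$ of the required form $G(V_i^\bot,\sigma^2)$; you flag this and propose two repairs. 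Neither repair works as stated, and this is where your proposal has a genuine gap.

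Concretely: constraining $\tilde V$ to the \emph{diagonal} $\{(u,\dots,u):u\in\R^n\}$ does \emph{not} make $\tilde V$ the $k$-fold direct sum $V^{\oplus k}$ of a single $V\subseteq\R^n$. If $\tilde V=\{(v,\dots,v):v\in V\},$ then $\dim(\tilde V)=\dim(V),$ whereas $\dim(V^{\oplus k})=k\dim(V);$ these are different objects. Computing the marginal of the $i$-th block of $G(\tilde V^\bot,\sigma^2)$ when $\tilde V$ is the diagonal, one finds its covariance is $\sigma^2\bigl(I-\tfrac1k P_V\bigr)+\tfrac14 I,$ which is not of the form $\sigma^2 P_{V^\bot}+\tfrac14 I$ required by \definitionref{complement-gaussian}; moreover the blocks are cross-correlated, with $\mathrm{Cov}\bigl(g^{(i)},g^{(j)}\bigr)=\tfrac{\sigma^2}{k}P_V$ for $i\ne j.$ So the i.i.d.\ hypothesis underlying your Chernoff step fails. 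Your second repair — taking the span of the $S_k$-orbit of $\tilde V$ — produces an $S_k$-invariant subspace, but $S_k$-invariant subspaces are not direct sums of a fixed $V;$ the diagonal is again a counterexample, and more generally the isotypic decomposition under $S_k$ mixes the trivial and standard representations in a way that does not collapse to a product. What the proof actually needs, and what the paper uses, is the specific projection identity $P_{U_i}P_{W^\bot}=P_{U_i\cap W^\bot}$ for the coordinate blocks $U_i,$ which both makes the marginals exactly of the form $G(V^\bot,\sigma^2)$ and makes the blocks independent; your argument must either establish that identity for the $W$ returned by the attack, or supply an alternative mechanism that yields genuine independence and the correct marginal form — symmetrization alone does neither.
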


\begin{proof}
Let $q=O(\log(Bn)).$ 
Given the function $f\colon\R^n\to\bits$ that is invariant under the subspace
$A\subseteq\R^n,$ consider the function $f^{\otimes q}\colon\R^{qn}\to\bits$
defined as
\[
f^{\otimes q}(x_1,x_2,\dots,x_q) =
\mathrm{Majority}(f(x_1),f(x_2),\dots,f(x_q))\mper
\]
Note that $f^{\otimes q}$ is invariant under the subspace $A^{\otimes q},$
i.e., the $q$-fold direct product of $A$ with itself. Further note that
$\dim(A^{\otimes q})=qr\le qn-90q\log(Br).$ 
Hence, $f^{\otimes q}$ satisfies the assumptions of \theoremref{attack}.

Now, let $G(W^\bot,\sigma^2)$ be the failure certificate returned by the
algorithm guaranteed by \theoremref{attack}. Recall from
\definitionref{complement-gaussian} that a sample $g\sim G(W^\bot,\sigma^2)$
satisfies $g=P_{W^\bot}g_1 + g_2$ where $g_1\sim N(0,\sigma^2)^{qn}$ and
$g_2\sim N(0,1/qn)^{qn}.$
Let $U_1,\dots,U_q$ be coordinate subspaces so that $U_i$ corresponds to the
$i$-th block of $n$ coordinates in $\R^{qn}.$ 
Clearly, we have the decomposition:
\[
P_{W^\bot} = \sum_{i=1}^q P_{U_i} P_{W^\bot}
=\sum_{i=1}^q P_{U_i\cap W^\bot} \mper
\]
Moreover, the subspaces $U_i\cap W^\bot$ are orthogonal for $i\ne j.$ Since
$g_1$ is a spherical Gaussian, this means that the random variables
$P_{U_i\cap W^\bot}g_1$ for $i\in[q]$ are statistically independent of each
other. The same is true for $P_{U_i}g_2.$ Consider therefore the distribution 
$G_i = P_{U_i\cap W^\bot} g_1 + P_{U_i} g_2$ restricted to its $n$ nonzero
coordinates, i.e., we think of $G_i$ as a random variable in $\R^n.$ In
particular, $G(W^\bot,\sigma^2)=(G_1,\dots,G_q)$ and we have established that
this is a product distribution. It remains to argue that each $G_i$ has the
form $G(V^\bot,\sigma^2;1/qn)$ for some subspace $V.$ This is immediate if we
take $V=U_i\cap W^\bot$ thought of as a subspace of $\R^n.$  
The following fact is now an immediate consequence of the Majority function.
\begin{claim}
If $f$ is $9/10$-correct on $G_i$ for a set of at least $2q/3$ indices
$i\in[q],$ then $f^{\otimes q}$ is $1-1/(Bn)^5$ correct on 
$G(W^\bot,\sigma^2).$ 
\end{claim}
\begin{proof}
For $x_1,\dots,x_q$ drawn from $(G_1,\dots,G_q),$ the expected number of
correct answers by $f$ on these samples is at least $2q/3 \cdot 9/10= 3/5.$ The
probability that the number is below $q/2$ is therefore at most $\exp(-\Omega(q))\le
(Bn)^{-5}$ for $q=O(\log(Bn))$ using a standard Chernoff bound. 
\end{proof}
Our claim implies that $f$ cannot be $9/10$-correct on a $1/3$ fraction of the
distributions $G_i,$ for $i\in[q].$ Outputting a random $G_i$ hence completes
the proof of theorem.
\end{proof}

\section{Top singular vector of biased Gaussian matrices}
\sectionlabel{net}

To analyze our algorithm it was necessary to understand the top singular
vector of certain biased Gaussian matrices. In this section we will prove the
necessary lemmas. We start with a standard discretization of the unit sphere.

\begin{lemma}[$\epsilon$-net for the sphere]
\lemmalabel{epsilon-net}
For every $c>0,$ there is a set $N\subseteq\mathbb{S}^{n-1}$ 
of size $|N|\le \exp(O(n\log (1/c)))$ such that
for every unit vector $x\in\R^n,$ there is a unit vector $v\in N$ 
satisfying $\langle x,v\rangle^2\le c.$ 
\end{lemma}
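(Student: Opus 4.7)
The plan is to let $N$ be a standard $\sqrt{c}$-net of the unit sphere $\mathbb{S}^{n-1}$; I may assume $c\le 1$, since otherwise any single unit vector trivially satisfies $\langle x,v\rangle^2\le 1\le c$. By the classical volume-packing argument, such a net can be chosen with $|N|\le (3/\sqrt c)^n = \exp(O(n\log(1/c)))$, which matches the required cardinality bound.

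To verify the desired orthogonality-type property, I fix any unit vector $x\in\R^n$ (assuming $n\ge 2$; the case $n=1$ is trivial). I pick any unit vector $v_0\in x^{\bot}\cap\mathbb{S}^{n-1}$. Applying the defining property of the $\sqrt c$-net to $v_0$, there exists $v\in N$ with $\|v-v_0\|\le\sqrt c$. Since $\langle x,v_0\rangle=0$, Cauchy--Schwarz yields
\[
|\langle x,v\rangle| \;=\; |\langle x,v-v_0\rangle| \;\le\; \|x\|\cdot\|v-v_0\| \;\le\; \sqrt c,
\]
so $\langle x,v\rangle^2\le c$, as required.

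The only auxiliary ingredient I would need to cite or reproduce is the existence of the $\sqrt c$-net of the stated size, which is the textbook greedy construction: iteratively choose points of $\mathbb{S}^{n-1}$ at mutual distance strictly greater than $\sqrt c$ until no more can be added; the open Euclidean balls of radius $\sqrt c/2$ around the chosen points are disjoint and contained in the ball of radius $1+\sqrt c/2$ in $\R^n$, and comparing volumes yields $|N|\le (1+2/\sqrt c)^n\le (3/\sqrt c)^n$. There is essentially no obstacle beyond recognizing that an approximation-type net immediately gives an orthogonality-type net via the elementary observation that a vector close to a direction orthogonal to $x$ is itself nearly orthogonal to $x$.
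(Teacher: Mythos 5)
The paper gives no proof of this lemma---it is cited as a standard discretization---so there is no argument in the text to compare against. Taking the statement at face value, your proof does verify it: a $\sqrt c$-net has the stated cardinality, and for any unit vector $x$ you can pick a unit $v_0\perp x$, find a net point $v$ with $\|v-v_0\|\le\sqrt c$, and conclude $|\langle x,v\rangle|=|\langle x,v-v_0\rangle|\le\sqrt c$. (One small edge case: for $n=1$ and $c<1$ the claim is actually false, since $\mathbb{S}^0=\{\pm1\}$ forces $\langle x,v\rangle^2=1$ for every pair of unit vectors; but $n=1$ is irrelevant to the paper.)

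That said, the statement as printed is almost certainly a typo, and your argument establishes the wrong property. In the only place the lemma is invoked, the proof of \lemmaref{net}, the set $\tilde M$ is obtained by replacing each $u\in M$ with its \emph{nearest} point in $N$, and the argument then needs $\langle g_i,u\rangle^2$ and $\langle g_i,\tilde u\rangle^2$ to be close; for that one needs $\|u-\tilde u\|$ small, i.e.\ the usual approximation property $\|x-v\|\le\sqrt c$ (equivalently $\langle x,v\rangle^2\ge 1-c$), not near-orthogonality $\langle x,v\rangle^2\le c$. Your detour through $v_0\perp x$ is precisely the step that converts the useful approximation fact into the unhelpful orthogonality fact. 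The fix is easy and shortens your proof: drop $v_0$ entirely and observe directly that the $\sqrt c$-net you constructed already supplies, for each unit $x$, a $v\in N$ with $\|x-v\|\le\sqrt c$, which is what the downstream argument actually uses.
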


We will need the following simple variant of the Chernoff-Hoeffding bound.

\begin{theorem}[Chernoff-Hoeffding]
\theoremlabel{bernstein}
\theoremlabel{chernoff}
Let the random variables $X_1,\dots,X_m$ be independent 
random variables.
Let $X = \sum_{i=1}^m X_i$ and let $\xi^2=\Var X.$
Then, for any $t>0,$
\[
\Pr\Set{ \left|X-\E X\right| > t} \le
\exp\left(-\frac{t^2}{4\xi^2}\right)\mper
\]
\end{theorem}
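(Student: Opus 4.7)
The plan is to prove this by the standard Cramér–Chernoff (moment generating function) method. First I would recenter by setting $Y_i := X_i - \E X_i$, so that $Y := X - \E X = \sum_{i=1}^m Y_i$ is a sum of independent, mean-zero random variables with $\Var Y = \xi^2$. It suffices to bound $\Pr\Set{Y > t}$ and $\Pr\Set{Y < -t}$ separately and combine them by a union bound, so I focus on the upper tail.

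For the upper tail, I would apply Markov's inequality to $e^{\lambda Y}$ for a free parameter $\lambda > 0$, obtaining
\[
\Pr\Set{Y > t} \le e^{-\lambda t}\,\E[e^{\lambda Y}] = e^{-\lambda t}\prod_{i=1}^m \E[e^{\lambda Y_i}],
\]
where independence is used in the last equality. The key intermediate step is to bound each moment generating factor by $\E[e^{\lambda Y_i}] \le \exp(\lambda^2 \sigma_i^2)$ where $\sigma_i^2 = \Var Y_i$. Under the implicit boundedness assumption needed to put us in the Hoeffding/Bernstein regime, this follows from a Taylor expansion of $e^{\lambda y}$ together with the cancellation of the linear term because $\E Y_i = 0$. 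Multiplying and summing variances gives $\prod_i \E[e^{\lambda Y_i}] \le \exp(\lambda^2 \xi^2)$, so $\Pr\Set{Y > t} \le \exp(\lambda^2 \xi^2 - \lambda t)$. Choosing $\lambda = t/(2\xi^2)$ optimally yields $\Pr\Set{Y > t} \le \exp(-t^2/(4\xi^2))$. Applying the identical argument to $-Y$ and taking a union bound gives the two-sided statement; the factor $4$ in the denominator (rather than $2$) is precisely what allows us to absorb the union-bound factor of $2$ into the exponent without any additional multiplicative constant out front.

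The main obstacle in a fully rigorous write-up is pinning down the regime in which $\E[e^{\lambda Y_i}] \le \exp(\lambda^2 \sigma_i^2)$ is legitimate: for uniformly bounded $Y_i$ this is Hoeffding's lemma, essentially via a convexity estimate on the exponential over an interval, whereas for unbounded variables one needs an explicit sub-Gaussian or Bernstein-type tail hypothesis (and in the latter case the quadratic MGF bound only holds for $|\lambda|$ below an appropriate threshold, which in turn restricts the range of $t$ for which the clean sub-Gaussian conclusion applies). In every invocation made earlier in the paper, for instance the variance bound on $\langle u, g\rangle^2$ with $g$ drawn from the truncated/conditioned Gaussian distributions appearing in the proof of \lemmaref{progress}, the summands are bounded (after the truncation at $D = 10B\log(rB)$ used there) and have the moment control demanded by the Cramér–Chernoff argument, so the stated bound applies as written.
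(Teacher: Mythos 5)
The paper states \theoremref{bernstein} without proof, treating it as a standard concentration inequality, so there is no internal proof to compare against. Your Cramér--Chernoff (moment generating function) route is the natural way to derive a bound of this form, and you are right to flag that the theorem, as literally written, cannot hold for arbitrary independent random variables with finite variance: a sub-Gaussian or boundedness hypothesis is being used implicitly, and in the paper's actual invocations (squared Gaussian inner products) the summands are sub-exponential rather than sub-Gaussian, so what one really wants is a Bernstein-type bound, which in the relevant regime (deviations of size $\tilde\Theta(\sqrt{n}\,\xi)$) gives the same asymptotics.

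However, your two-sided step contains a genuine arithmetic inconsistency. From the MGF bound $\E[e^{\lambda Y_i}] \le \exp(\lambda^2\sigma_i^2)$ you derive the one-sided bound $\Pr\Set{Y > t} \le \exp\bigl(-t^2/(4\xi^2)\bigr)$; you then assert that the factor~$4$ in the denominator ``absorbs'' the union-bound factor of~$2$. It does not: the union bound gives $\Pr\Set{|Y| > t} \le 2\exp\bigl(-t^2/(4\xi^2)\bigr)$, and you have no slack left in the exponent to consume the prefactor. For the absorption to work you would instead need the sharper bound $\E[e^{\lambda Y_i}] \le \exp(\lambda^2\sigma_i^2/2)$, which yields one-sided $\exp\bigl(-t^2/(2\xi^2)\bigr)$, and then $2\exp\bigl(-t^2/(2\xi^2)\bigr) \le \exp\bigl(-t^2/(4\xi^2)\bigr)$ holds only for $t \ge 2\xi\sqrt{\ln 2}$; for smaller $t$ the inequality can fail even when $X$ is a single Gaussian. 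So the clean form of the theorem requires either a leading constant or a lower bound on $t/\xi$ (neither of which matters in the paper, where $t/\xi$ is always polynomially large). A minor further note: the truncation at $D = 10B\log(rB)$ you cite as justifying boundedness occurs in the proof of \lemmaref{sigmatilde}, not in the application inside \lemmaref{net}, where the summands are genuinely unbounded.
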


The next lemma is an application of the previous bound. We used it earlier 
in the proof of \theoremref{attack}.

\begin{lemma}\lemmalabel{net}
Let $\tau\ge0.$ 
Let $V$ be a subspace of $\R^n.$
Let $G$ be distribution over $\R^n$ such that for $g\sim G$ we have:
\begin{enumerate}
\item For every unit vector $w\in V^\bot,$ we have $\E\langle w,g\rangle^2 \le
\tau.$
\item For every two unit vectors $v\in V,w\in V^\bot,$ we have $\E\langle
v,g\rangle\langle w,g\rangle = 0.$
\item 
The maximum of $\E\langle v,g\rangle^2$ over all unit vectors $v\in V$ is
equal to $\tau+\Delta$ for some $\Delta>1/\poly(n).$
\item For every unit vector $u\in\R^n,$ we have 
$\Var \langle u,g\rangle^2\le \xi^2.$
\end{enumerate}
Let $\gamma>1/\poly(n).$
Draw $m=O\left(\frac{n\log^2(n)\xi^2}{\gamma^2\Delta^2}\right)$ 
i.i.d. samples $g_1,\dots,g_m\sim G$ and let 
\[
u^* = \arg\max_{\|u\|=1}\sum_{i=1}^m \langle g_i,u\rangle^2\mper
\]
Then, with probability $1-\exp(-n\log^2{n}),$ we have $\|P_Vu^*\|^2\ge1-\gamma.$ 
Moreover,
\[
\frac1m\sum_{i=1}^m \langle g_i,u^*\rangle^2\ge \tau + \frac{\Delta}2\mper
\]
\end{lemma}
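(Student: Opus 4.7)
\textbf{Proof plan for \lemmaref{net}.} The plan is to reduce the statement to a statement about the operator-norm deviation of the empirical covariance matrix $M=\frac1m\sum_{i=1}^m g_ig_i^{T}$ from the population covariance $\Sigma=\E[gg^{T}]$, and then to bound this deviation by a classical $\epsilon$-net plus scalar Bernstein argument. Assumption~(2) says that $V$ is $\Sigma$-invariant, so $\Sigma$ splits as $\Sigma_V+\Sigma_{V^\bot}$ with $\Sigma_V$ supported on $V$ and $\Sigma_{V^\bot}$ supported on $V^\bot$. Assumption~(1) gives $\|\Sigma_{V^\bot}\|_2\le\tau$ and assumption~(3) gives $\|\Sigma_V\|_2=\tau+\Delta$ with the maximum attained by some unit vector $v^*\in V$. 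Writing a unit vector $u$ as $\alpha v+\beta w$ with $v\in V$, $w\in V^\bot$ unit and $\alpha^2+\beta^2=1$, the cross term $2\alpha\beta\E\langle v,g\rangle\langle w,g\rangle$ vanishes by assumption~(2), so
\[
\langle u,\Sigma u\rangle=\alpha^2\langle v,\Sigma v\rangle+\beta^2\langle w,\Sigma w\rangle\le\tau+\alpha^2\Delta.
\]

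Next I would establish that $\|M-\Sigma\|_2\le\eta$ with high probability, for $\eta:=\gamma\Delta/4$. For a fixed unit vector $u$, the random variables $Y_i:=\langle u,g_i\rangle^2$ are i.i.d., $\E Y_i=\langle u,\Sigma u\rangle$, and $\Var Y_i\le\xi^2$ by assumption~(4); hence $\Var(\sum_i Y_i)\le m\xi^2$ and \theoremref{bernstein} gives
\[
\Pr\Set{\bigl|\langle u,Mu\rangle-\langle u,\Sigma u\rangle\bigr|>\eta}\le\exp\Paren{-\tfrac{m\eta^2}{4\xi^2}}.
\]
With $m=\Theta\paren{n\log^2(n)\xi^2/(\gamma\Delta)^2}$ this exponent is $-\Omega(n\log^2 n)$. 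Taking an $\epsilon$-net $N$ of the unit sphere with $\epsilon=1/\poly(n)$ from \lemmaref{epsilon-net} (so $|N|\le\exp(O(n\log n))$) and union-bounding controls $|\langle u,(M-\Sigma)u\rangle|$ simultaneously on $N$; the standard extension lemma (for a symmetric operator, the sphere-supremum is at most $(1-2\epsilon)^{-1}$ times the $N$-supremum of the quadratic form) then gives $\|M-\Sigma\|_2\le\eta$ for \emph{all} unit vectors, with probability $1-\exp(-\Omega(n\log^2 n))$.

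Condition on this event. For the top singular vector $u^*$ of $\sqrt{m}\cdot\sqrt{M}$, writing $\alpha^2=\|P_Vu^*\|^2$,
\[
\langle u^*,Mu^*\rangle\le\langle u^*,\Sigma u^*\rangle+\eta\le\tau+\alpha^2\Delta+\eta,
\]
while on the other hand optimality of $u^*$ and the lower bound on $v^*$ give
\[
\langle u^*,Mu^*\rangle\ge\langle v^*,Mv^*\rangle\ge\langle v^*,\Sigma v^*\rangle-\eta=\tau+\Delta-\eta.
\]
Combining the two yields $(1-\alpha^2)\Delta\le 2\eta=\gamma\Delta/2$, i.e.\ $\alpha^2\ge 1-\gamma/2\ge 1-\gamma$, which is the first conclusion. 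The chain above also gives $\frac1m\sum_i\langle g_i,u^*\rangle^2=\langle u^*,Mu^*\rangle\ge\tau+\Delta-\gamma\Delta/4\ge\tau+\Delta/2$, which is the second conclusion.

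The main technical obstacle is the concentration step: we only have a second-moment (variance) bound $\xi^2$, so we must use the Bernstein-style form of \theoremref{bernstein} rather than a Hoeffding or sub-Gaussian bound, and we must pay enough samples in $\xi^2$ to offset the $\exp(O(n\log n))$ cardinality of the sphere net while still getting deviation $\gamma\Delta/4$. The stated sample complexity $m=\Theta\paren{n\log^2(n)\xi^2/(\gamma^2\Delta^2)}$ is exactly tuned for this trade-off, and the extra $\log^2 n$ factor is what allows the failure probability $\exp(-n\log^2 n)$ rather than just $\exp(-n)$.
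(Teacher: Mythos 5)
Your proof is correct and uses the same core ingredients as the paper---pointwise Bernstein concentration of $\langle u,g_i\rangle^2$ at an $\epsilon$-net of directions, a union bound over the net, and a deterministic argument about the argmax---but you package the net step as a uniform operator-norm bound $\|M-\Sigma\|_2\le\gamma\Delta/4$ on the empirical covariance, whereas the paper bounds $\max_{u}\sum_i\langle g_i,u\rangle^2$ directly over the bad set $\{u:\|P_Vu\|^2\le1-\gamma\}$. This is a reorganization rather than a genuinely different route, though one small advantage of your framing is that the net-to-sphere extension goes through the standard $(1-2\epsilon)^{-1}$ lemma for symmetric operators, sidestepping the paper's step of bounding the per-sample change $|\langle g_i,u\rangle^2-\langle g_i,u'\rangle^2|$, which implicitly needs a bound on $\|g_i\|$ that the paper never states.
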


\begin{proof}
Let $g_1,\dots,g_m\sim G$ be $m$ i.i.d. samples from $G.$
First consider the vector $v\in V$ guaranteed by the second assumption of the
lemma. Let $X = \sum_i \langle v,g_i\rangle^2.$ 
We have $\E X \ge \tau m + \Delta m$ and $\Var X \le n\xi^2.$
Hence, by \theoremref{bernstein},
\[
\Pr\Set{ X\le (\tau+\Delta) m - \frac{\gamma m\Delta}{4}  } \le
\exp\left(-\frac{\Delta^2\gamma^2 m^2}{O(\xi^2m)}\right)
\le \exp\left(-n\log^2 n\right)
\mper
\]
On the other hand, let $u=\alpha v +\beta w$ with $\alpha^2+\beta^2=1$
be any vector such that $v$ is a unit vector in $V$ and $w$ is a unit vector
in $V^\bot.$ Further assume that $\alpha^2\le 1-\gamma.$ Let $Y = \sum_{i=1}^m
\langle u,g_i\rangle^2.$ Note that
\[
\E \langle u,g\rangle^2 
= \alpha^2\E\langle v,g\rangle^2 + \beta^2\E\langle w,g\rangle^2
+\alpha\beta\E\langle v,g\rangle\langle w,g\rangle
= \alpha^2\E\langle v,g\rangle^2 + \beta^2\E\langle w,g\rangle^2\mper
\]
Here we used the assumption that $\E\langle v,g\rangle\langle w,g\rangle=0.$
Hence,
\[
\E \langle u,g\rangle^2 
= \alpha^2\E\langle v,g\rangle^2 + \beta^2\E\langle w,g\rangle^2
\le (1-\gamma)(\tau+\Delta) + \tau/n
= \tau + (1-\gamma)\Delta\mper
\]
In particular, $\E Y \le \tau +(1-\gamma)\Delta.$ Applying \theoremref{bernstein}
again, we have
\[
\Pr\Set{ Y\ge (\tau+\Delta) m - \frac{3\gamma\Delta m}{4}  } \le
\exp\left(-\frac{\Delta^2\gamma^2m^2}{O(\xi^2m)}\right)
\le \exp\left(-n\log^2 n\right)
\mper
\]
Note that there is a margin of $m\Delta/2$
between the bound on $Y$ and the bound on $X.$
Let $M\subseteq\mathbb{S}^{n-1}$ be the set 
$M=\{ u\colon \|P_Vu\|^2\le 1-\gamma\}.$ Further let $\tilde M$ be the set
obtained from $M$ by replacing each member of $M$ with its nearest point in
$N$ where $N$ be the discretization of the unit sphere given by
\lemmaref{epsilon-net} with the setting $c = \gamma\Delta/8.$ Note that $c\ge
1/\poly(n)$ and hence $|N|=\exp(O(n\log n)).$
We claim that
\[
\max_{u\in M}\sum_{i=1}^m \langle g_i,u\rangle^2
\le
\max_{u\in \tilde M}\sum_{i=1}^m \langle g_i,u\rangle^2
+ \frac{\gamma \Delta m}{8}\mper
\]
This is because each squared inner product can differ by at most
$\gamma\Delta/8$ and there are $m$ terms in the summation.
On the other hand, we have by a union bound,
\begin{align*}
\Pr\Set{\max_{u\in \tilde M}\sum_{i=1}^m \langle g_i,u\rangle^2
>  (\tau+\Delta) m - \frac{3\gamma\Delta m}{4}  } 
& \le |N|\cdot
\Pr\Set{ Y\ge (\tau+\Delta) m - \frac{3\gamma\Delta n}{4}  } \\
& \le \exp(O(n\log n))\exp(-\Omega(n\log^2n))\\
& \le \exp(-\Omega(n\log^2(n)))\mper
\end{align*}
We conclude that with probability $1-\exp(-n\log^2n),$
\[
\max_{u\in M}\sum_{i=1}^m \langle g_i,u\rangle^2
\le (\tau +\Delta) m - \frac{3\gamma \Delta m}{4} + \frac{\gamma \Delta m}{8}
\le (\tau +\Delta) m - \frac{5\gamma \Delta m}{8},
\] 
and thus strictly smaller than the global maximum.
This implies that the global maximizer $u^*$ must satisfy
$\|P_Vu^*\|^2\ge1-\gamma.$
\end{proof}

\section{Applications and Extensions}
\sectionlabel{applications}

In this section we derive various applications of our main theorem.
%
\subsection{Randomized Algorithms}
\label{sec:rand}
While we have stated our results in terms of algorithms that output a
(deterministic) function $f$ of the sketch $Ax$, we obtain the same results
for algorithms which use additional randomness to output a randomized function
$f$ of $Ax$. Indeed, the main observation is that our attack never makes the same query twice, 
with probability $1$. It follows that for each possible 
hardwiring of the randomness of $f$ for each possible input, we obtain a deterministic
function, and can apply Theorem \ref{thm:constant}. 

Now consider the attack in \figureref{attack}. In each
round~$t,$ we now allow the algorithm to use a new function
$f_t\colon\R^n\to\bits$ provided that $f_t(x)$ still only depends on $P_Ax.$
Under this assumption, the proof of \theoremref{attack} (and thus also the one
of \theoremref{constant}) carries through the same way as before except that
we replace $f$ by $f_t$ in each round. What is crucial for the attack is only
that the subspace~$A$ has not changed.

We thus have the following theorem. 
%
%
%

\begin{theorem}\theoremlabel{attackRandomized}
Let $B\ge 8.$ Let $A\subseteq\R^n$ be a $r$-dimensional subspace of $\R^n$
such that $n\ge r+90\log(Br).$ Assume that $B\le\poly(n)$. Let $f\colon \R^n\to\bits$ be a randomized algorithm
for which the distribution on outputs $f(x)$ only depends on $P_A x$,
for all $x\in\R^n.$ 
Then, there is an algorithm that given only oracle access
to $f$, which for each possible fixing of the randomness of $f$, 
finds with constant probability a strong failure certificate for $f$. 
The time and query complexity of the algorithm is bounded by $\poly(Br)$. 
Moreover, all queries that the algorithm makes are sampled from
$G(V^\bot,\sigma^2)$ for some $V\subseteq\R^n$ and $\sigma^2\in(0,B]$.  
Moreover, we may assume that for each query $x \sim G(V_t^{\bot}, \sigma^2)$ 
made by the algorithm in round~$t$, the function $f$ is allowed to 
depend on $V_t^{\bot}$. 
\end{theorem}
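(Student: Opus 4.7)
The plan is to reduce to the deterministic case of \theoremref{constant}. The crucial structural fact, already noted by the authors, is that every query issued by the attack in \figureref{attack} is sampled from a continuous Gaussian distribution $G(V_t^\bot,\sigma^2)$; hence with probability one all queries in the transcript are pairwise distinct. Thus no single input is probed twice, and the internal coin tosses that $f$ uses across queries can be treated as independent draws. Concretely, I would sample in advance, for each input $x\in\R^n$, a string of random bits $r_x$ and define $f'(x)\defeq f(x;r_x)$. Since the output distribution of $f$ depends only on $P_Ax$, we can further couple so that $r_x = r_{P_Ax}$, making $f'$ a deterministic function satisfying $f'(x)=f'(P_Ax)$ for every $x\in\R^n$.

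Once $f$ has been replaced by such an $f'$, \theoremref{constant} directly yields an algorithm that finds a strong failure certificate with constant probability over the attacker's randomness. Because the attacker's coins are independent of the family $(r_x)_x$, conditioning on any fixing of the latter preserves the guarantee, establishing the ``for each possible fixing'' form of the theorem. For the round-dependent extension---allowing the oracle to switch from $f$ to $f_t$ in round $t$ with $f_t$ possibly depending on $V_t^\bot$---I would revisit the proof of \theoremref{attack} and observe that every appeal to oracle behavior in round $t$ (the empirical frequency $s(t,\sigma^2)$ in step \itemref{empirical}, \lemmaref{correct}, \lemmaref{empirical}, \lemmaref{progress}, and \lemmaref{final}) involves only queries drawn in that same round. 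Substituting $f_t$ for $f$ in round $t$ therefore leaves the analysis intact, as long as each $f_t$ remains invariant under $P_A$; the invariant governing $(V_t,W_t)$ threads across rounds unchanged because it depends only on $A$ and on the attacker's state.

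The only real subtlety, rather than a genuine obstacle, is the legitimacy of the per-input hardwiring. This is easily handled because the attack makes at most $\poly(Br)$ queries which are almost surely distinct, so it suffices to sample $r_x$ lazily as each new query is made; no delicate measure-theoretic construction over $\R^n$ is needed. Finally, the direct-product boosting step inside \theoremref{constant} that turns an inverse-polynomial failure probability into a constant one carries through verbatim, since the $q$-fold majority of a randomized oracle is again a randomized oracle whose output distribution is invariant under $A^{\otimes q}$, and the same per-input hardwiring reduction applies to it.
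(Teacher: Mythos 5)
Your proposal follows the paper's own argument: since all queries are almost surely pairwise distinct, hardwire $f$'s randomness per projection to obtain a deterministic $f'$ that is invariant under $P_A$, apply \theoremref{constant}, and for the round-dependent extension observe that the analysis in the proof of \theoremref{attack} uses only fresh queries drawn within each round, so replacing $f$ by an $A$-invariant $f_t$ in round $t$ is harmless. One small correction to your hardwiring: you should set $f'(x)\defeq f(P_Ax;r_{P_Ax})$ rather than $f(x;r_{P_Ax})$, since the hypothesis on $f$ is only distributional and does not give the pointwise identity $f(x;r)=f(P_Ax;r)$ needed for your coupling to yield $f'(x)=f'(P_Ax)$.
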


\subsection{Approximating $\ell_p$ norms}
Our main theorem also applies to sketches that aim to approximate any
$\ell_p$-norm. A randomized sketch~$Z$ for the $\ell_p$-approximation problem
depends only on a subspace $A\subseteq\R^n$ and given $x\in\R^n$ aims to
output a number $Z(x)$ satisfying
\begin{equation}\equationlabel{ellp-approximation}
\|x\|_p\le Z(x)\le C\cdot\|x\|_p\mper
\end{equation}
The next corollary shows that we can find an input on which the sketch must be
incorrect.
\begin{corollary}\label{cor:lp}
Let $1 \leq p \leq \infty$.  
Let $Z$ be a randomized sketching algorithm for approximating the
$\ell_p$-norm which uses a subspace of dimension at most $n-O(\log(Cn)).$
Then, there is an algorithm which, with constant probability, given only
$\poly(Cn)$ oracle queries to $Z(x),$ finds a vector $x\in\R^n$ which violates
\equationref{ellp-approximation}. 
\end{corollary}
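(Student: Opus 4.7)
The plan is to reduce the $\ell_p$-approximation problem to {\sc GapNorm} and then invoke \theoremref{attackRandomized}. First, I restrict attention to the subspace of $\R^n$ spanned by the first $n' = r + 90\log(Br)$ coordinates, where $r$ is the dimension of the sketch's image and $B = 100 C^2$. Since $n - r \geq \Omega(\log(Cn))$ and $B = \poly(C)$, we have $n' \le n$, so this restriction is legal. The sketch restricted to this subspace still has image dimension at most $r$, so \theoremref{attackRandomized} applies with ambient dimension $n'$ and parameter $B$.

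Next I threshold $Z$ to obtain a Boolean predicate. Let $M_{p,n'}$ denote the typical value of $\|g\|_p$ for $g \sim N(0,1)^{n'}$, so $M_{p,n'} = \Theta(n'^{1/p})$ for $p < \infty$ and $M_{\infty,n'} = \Theta(\sqrt{\log n'})$. Set $T = 2C \cdot M_{p,n'}$ and define the (randomized) function $f(x) = 1$ iff $Z(x) \ge T$; clearly $f$ depends only on $P_A x$. The central observation is that for $g \sim G(V^\bot,\sigma^2)$, standard Gaussian concentration yields $\|g\|_p = \Theta\bigl(\sqrt{\sigma^2 + 1/4}\cdot M_{p,n'}\bigr)$ with probability $1 - 1/\poly(Cn)$. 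Hence when $\sigma^2 \le 2$ we have $\|g\|_p \le 3 M_{p,n'}$, so any $g$ with $f(g)=1$ satisfies $Z(g) \ge T > C\|g\|_p$, violating the upper bound in \equationref{ellp-approximation}; and when $\sigma^2 \ge B/2$ we have $\|g\|_p \ge \tfrac{\sqrt{B}}{2} M_{p,n'} \gg T$, so any $g$ with $f(g)=0$ satisfies $Z(g) < T < \|g\|_p$, violating the lower bound.

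Now I apply \theoremref{attackRandomized} to $f$, which makes at most $\poly(B,r) = \poly(Cn)$ adaptive queries and with constant probability outputs a strong failure certificate $(V,\sigma^2)$. By definition of a strong failure certificate, $f$ is incorrect on $G(V^\bot,\sigma^2)$ with probability at least $1/3$, so by drawing $O(1)$ samples (and then a further $O(\log(Cn))$ amplification to rule out the small-probability event that our concentration estimate fails) we obtain a specific $g$ on which $f$ errs in the sense of the previous paragraph. That $g$ is the desired counterexample to \equationref{ellp-approximation}.

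The main technical step is the concentration of $\|g\|_p$ under $G(V^\bot,\sigma^2)$ in both the low and high regimes. For $1 \le p < \infty$ this follows from concentration of the sum $\sum_i |g_i|^p$ via standard moment bounds, and for $p = \infty$ from the usual tail bound on the maximum of $n'$ correlated Gaussians. A minor subtlety is that the additive noise $N(0,1/4)^{n'}$ built into $G(V^\bot,\sigma^2)$ matters in the low regime, but since $\sqrt{\sigma^2+1/4} \le \sqrt{3}$ there and $\ge \sqrt{B}/2$ in the high regime, the ratio of typical $\ell_p$ norms between the two regimes remains $\Theta(\sqrt{B}) = \Theta(C)$, giving the needed margin for the threshold $T$.
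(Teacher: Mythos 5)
The reduction to {\sc GapNorm} via \theoremref{attackRandomized} is the right framework, and the paper does use it. But there is a genuine gap in the step you call ``the central observation,'' namely that for $g\sim G(V^\bot,\sigma^2)$ one has $\|g\|_p = \Theta\bigl(\sqrt{\sigma^2 + 1/4}\cdot M_{p,n'}\bigr)$ with high probability. This is only true when $V=\{0\}$, i.e., in the very first round of the attack. In general (see \definitionref{complement-gaussian}) the covariance of $g$ is $\sigma^2 P_{V^\bot} + \tfrac14 I$, so the variance of the $i$-th coordinate is $\sigma^2 (P_{V^\bot})_{ii} + \tfrac14$, and these entries sum only to $\dim(V^\bot)$, which shrinks over the course of the attack to as little as $\Theta(\log(Cn))$. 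Once $\dim(V^\bot)\ll n'$, the typical diagonal entry $(P_{V^\bot})_{ii}$ is $O(\log(Cn)/n')$. With your choice $B=100C^2$ and $\sigma^2 \ge B/2$, the per-coordinate variance is then $\tfrac14 + O(C^2\log(Cn)/n')$, which is $\Theta(1)$ as soon as $n'\gg C^2\log(Cn)$; equivalently, $\E\|g\|_2^2 = \dim(V^\bot)\sigma^2 + n'/4$ is dominated by the ambient noise term $n'/4$ in \emph{both} regimes. So in the high regime $\|g\|_p = \Theta(M_{p,n'})$, not $\Theta(C\,M_{p,n'})$, the threshold $T = 2C\,M_{p,n'}$ fails to separate the regimes, and there is no contradiction to extract from the failure certificate. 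In short, the claim implicitly treats $G(V^\bot,\sigma^2)$ as an isotropic Gaussian of variance $\sigma^2 + 1/4$, which it is not.

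The underlying point is that $B$ must be chosen polynomially large in $n'$ precisely to survive the eventual collapse of $\dim(V^\bot)$ down to $\Theta(\log(Cn))$: in the worst case the $\ell_2$ norm in the high regime is only about $\sqrt{\dim(V^\bot)\cdot B}$, so one needs $B = \poly(n')$ to open a gap. This is the paper's route: it thresholds on a polynomially large value, takes $B=\poly(Cn)$, and then converts from $\ell_2$ to $\ell_p$ via the crude but reliable worst-case inequality $\|x\|_p/\sqrt{n}\le\|x\|_2\le\sqrt{n}\|x\|_p$. Your attempt to threshold directly at the $\ell_p$ scale with $B = O(C^2)$ is more efficient in principle, but it requires a concentration statement for $\|g\|_p$ under $G(V^\bot,\sigma^2)$ that simply does not hold uniformly over the subspaces $V$ that arise, and without it the argument collapses.
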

\begin{proof}
The result follows from \theoremref{attackRandomized},
applied with approximation factor $B=O(C^2n^2)$ to the function
$f\colon\R^n\to\bits$ which outputs $1$ if $Z(x) \ge C^2n^2$ and $0$
otherwise.  The theorem gives us a strong failure certificate from which we can find
a vector $x$ satisfying either $\|x\|_2\geq8Cn$ and $f(x) = 0$, or,
$\|x\|_2\leq 4\sqrt{n}$ and $f(x) = 1$.  Using the fact that $\|x\|_p/\sqrt{n} \leq
\|x\|_2 \leq \sqrt{n} \|x\|_p,$ it follows that
\equationref{ellp-approximation} is violated.
\end{proof}

\subsection{Sparse recovery with an $\ell_2/\ell_2$-guarantee}
\sectionlabel{recovery}
An $\ell_2/\ell_2$ sparse recovery algorithm for a given parameter $k$, 
is a randomized sketching algorithm which given $Ax$,
outputs a vector $x'$ that satisfies the approximation guarantee 
$\|x'-x\|_2 \leq C \|x_{\mathrm{tail}(k)}\|_2.$ 
Here $x_{\mathrm{tail}(k)}$ denotes $x$ with its top $k$ coordinates (in
magnitude) replaced with $0$. We will show how to find a vector $x$ which
causes the output $x'$ to violate the approximation guarantee, for any $k \geq 1$. 

\begin{theorem}
\theoremlabel{sparse-recovery}
Consider any randomized $\ell_2/\ell_2$ sparse recovery algorithm with
approximation factor $C \le O(\sqrt{n})$, 
sparsity parameter $k \geq 1$, and sketching matrix 
consisting of $r = o(n/C^2)$ rows. 
Then there is an algorithm which, with constant probability, 
given only oracle access to $x'$ finds a vector $x \in \mathbb{R}^n$ which violates
the approximation guarantee with $\poly(n)$ adaptive oracle queries. 
\end{theorem}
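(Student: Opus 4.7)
The plan is to reduce to \theoremref{attackRandomized}, applied with approximation parameter $B=\Theta(C^2)$. Since $\mathcal R(Ax)$ depends on $x$ only through $Ax$, any Boolean test derived from $\mathcal R$ yields a function $f\colon\R^n\to\bits$ satisfying the required invariance $f(x)=f(P_Ax)$. Because $r=o(n/C^2)$ and $C=O(\sqrt n)$, we have $n\ge r+O(\log Bn)$, so the hypothesis of the theorem is met. A single invocation therefore produces, with constant probability and $\poly(Br)=\poly(n)$ adaptive queries, a strong failure certificate from which a violation will be extracted.

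The key idea is to exploit not just the existence of the certificate but also the internal structure of the attack of \figureref{attack}. That attack iteratively grows a subspace that, by the invariant of \lemmaref{progress}, ends up within $d(\cdot,\cdot)$-distance $1/\poly(Bn)$ of the row space of the sketching matrix. Thus in addition to the failure certificate one effectively obtains an $r$-dimensional subspace $V$ that is an extremely accurate approximation to the row space~$A$, so accurate (via \lemmaref{stat-dist}) that $V^\bot$ is essentially indistinguishable from $\ker A$ for the purposes of the sketch.

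To convert this near-invisibility into an $\ell_2/\ell_2$ violation, I would construct an explicit input directly. Fix three fresh indices $i,i',j$, set $u_j:=P_{V^\bot}e_j$ (an almost-unit vector nearly in $\ker A$), and form
\[
x \;=\; e_i+e_{i'}+\alpha\,u_j \qquad\text{with }\alpha=3C.
\]
Because $Au_j\approx 0$, the sketch $Ax$ equals $A(e_i+e_{i'})$ up to negligible error, so $\mathcal R(Ax)$ coincides with the recovery output on the $2$-sparse input $e_i+e_{i'}$, call it $\widehat w$. The $\ell_2/\ell_2$-guarantee applied to $e_i+e_{i'}$ gives $\|\widehat w-(e_i+e_{i'})\|_2\le C$ (its $1$-tail has norm~$1$). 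A triangle-inequality computation then yields $\|\widehat w-x\|_2\ge \alpha\|u_j\|_2-C\ge 2C$, while the heaviest coordinate of $x$ sits at position $j$ with magnitude $\approx\alpha$, so $\|x_{\mathrm{tail}(k)}\|_2\le\sqrt 2+o(1)$. Hence $\|\widehat w-x\|_2>C\,\|x_{\mathrm{tail}(k)}\|_2$, and $x$ is a certified violating input.

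The main obstacle is the simultaneous quantitative control of all the approximation errors: the residual $\|Au_j\|_2$, the deviation of $\|x_{\mathrm{tail}(k)}\|_2$ from $\sqrt 2$, and the $C$-slack from the recovery guarantee must together leave a constant gap after rescaling by $\alpha=\Theta(C)$. This is precisely where the $1/\poly(Bn)$ bound of \lemmaref{progress} is critical, and the hypothesis $r=o(n/C^2)$ leaves ample slack to absorb errors of order $1/\poly(C)$. A secondary subtlety is that the attack may terminate early with a subspace of dimension strictly less than $r$; this case is handled either by continuing the attack using the intermediate $V$ as its new starting state, or by invoking the direct-product amplification of \sectionref{high-error} so that the intermediate certificate itself directly produces a constant-probability violation without needing to learn~$A$ to full precision.
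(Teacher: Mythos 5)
Your proposal takes a genuinely different route from the paper, but it contains gaps that I do not see how to close.

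The central difficulty is that you never specify the function $f\colon\R^n\to\bits$ that you feed to the {\sc GapNorm} attack of \theoremref{attackRandomized}, and this is not a cosmetic omission. The progress lemma (\lemmaref{progress}) requires $f$ to be $B$-correct on $W_t^\bot$ at every round; if $f$ is an arbitrary Boolean test derived from $\mathcal R$, nothing guarantees this, and the attack may terminate immediately with a failure certificate that says nothing about $\mathcal R$'s recovery guarantee. The heart of the paper's proof is precisely the construction of a specific $f$ (which, given the sketch of $x\sim G(V^\bot,\sigma^2)$, perturbs by $4C\sqrt n\,P_{V^\bot}e_i$ for all $i$ in a set $S$ of near-coordinate directions, runs $\mathit{Alg}$ on each perturbation, and thresholds the $i$-th coordinate of the output) together with Cases 1 and 2 showing this $f$ solves {\sc GapNorm}($B$) on the relevant distributions \emph{whenever} $\mathit{Alg}$ is correct. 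Without such a construction, a {\sc GapNorm} failure certificate carries no information about $\mathit{Alg}$, so you cannot convert the certificate into a violation.

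Your plan to exploit the learned subspace $V$ rather than the certificate has two further problems. First, the invariant only says $V_t$ is close to \emph{some} $(t-1)$-dimensional subspace $W_t\subseteq A$, not to $A$ itself; you only get $V\approx A$ if the attack survives all $r$ rounds, which it need not do. The fallbacks you propose (``continue the attack using the intermediate $V$'' or ``use the direct product amplification'') do not address this --- once $f$ starts erring the progress lemma no longer applies, and the direct-product construction strengthens weak certificates to strong ones but doesn't relate a {\sc GapNorm} failure to a sparse-recovery failure. Second, your explicit test vector $x=e_i+e_{i'}+3C\,P_{V^\bot}e_j$ tacitly assumes $\mathcal R(Ax)=\mathcal R(A(e_i+e_{i'}))$ because $A\,P_{V^\bot}e_j$ is tiny. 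But $\mathcal R$ is an arbitrary (possibly discontinuous) function of the sketch, so a $1/\poly(n)$ residual is not the same as a zero residual; the paper sidesteps this by reasoning over distributions and total-variation closeness (\factref{shifted-gaussians}), not over a single deterministic input. Finally, your choice $B=\Theta(C^2)$ is too small: the indistinguishability step (the paper's Case 2) needs the Gaussian noise $\sigma\approx\sqrt B$ in the sketch to swamp shifts of order $C\sqrt r$, which forces $B\gg C^2 r$; with $r=o(n/C^2)$ this requires $B=\Theta(n)$, as the paper uses.

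The clean quantitative observation in your direct construction (the $2C$ versus $\sqrt2\,C$ margin) is correct in spirit and would serve as intuition, but as stated the argument does not constitute a proof of the theorem.
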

\begin{remark}
We note that in general the $r = O(n/C^2)$ restriction cannot be improved, at least for small $k$, 
since there is an upper bound of $O((n/C^2) \cdot k \log(n/k))$. Indeed, with $O(k \log(n/k))$ rows, there is
a deterministic procedure to compute $x'$ with $\|x'-x\|_2^2 \leq 4\|x_{\textrm{tail}(k)}\|_1^2$. By splitting the 
coordinates of $x$ into $n/C^2$ blocks $x^1, \ldots, x^{n/C^2}$ 
and applying this procedure on each block, the total squared error is
$4 \sum_j \|x^j_{\textrm{tail}(k)}\|_1^2 \leq 4 C^2 \|x_{\textrm{tail}(k)}\|_2^2$, which implies
$\|x'-x\|_2 \leq 2 C \|x_{\textrm{tail}(k)}\|_2$. 
\end{remark}
\begin{proof}
It suffices to prove the theorem for $k = 1$, since extending the theorem to larger $k$
can be done by appending $k-1$ additional coordinates to the query vector, each of value $+\infty$. 
We will prove the theorem for each possible fixing of the randomness of function $f$, and so
we can assume the function $f$ is deterministic. 

We use the sparse recovery algorithm $\mathit{Alg}$
to build an algorithm $\mathit{Alg}'$ for the {\sc GapNorm}($B$) problem 
for some value of $B = \Theta(n)$. 
We will use \theoremref{attackRandomized} to argue that 
with constant probability,
$\mathit{Alg}$ must have failed on some query in the attack. 
We use that in each round $\poly(n)$ queries $x$ are drawn from
a subspace Gaussian family $G(V^{\bot}, \sigma^2)$, for certain $V^{\bot}$ and 
$\sigma$ which are chosen by the attacking algorithm, and vary throughout the course of the attack.
Further, we use that the function $f$ can depend on $V^{\bot}$.  

In a given round in the simulation, we have a subspace $V^{\bot}$.
Let $U = V^{\bot} \cap A$ with dimension $r' \leq r$.  
Let $$S = \Set{i \in [n] \mid e_i P_{V^{\bot}}e_i \geq (1-\kappa^2/C^2)^{1/2}},$$
where $\kappa > 0$ is a sufficiently small constant to be determined. 
Notice that tr$(P_{V^{\bot}}) \geq n - r$ and $e_i P_{V^{\bot}}e_i \leq 1$ for all $i$. The following is a simple
application of Markov's bound.
\begin{lemma}\label{lem:markov}
Let $x = 1- \alpha/C^2$ for a constant $\alpha > 0$. 
The number $z$ of indices $i$ for which $e_i P_{V^{\bot}}e_i$ is larger than $x$
is at least $n - C^2 r/\alpha$. 
\end{lemma}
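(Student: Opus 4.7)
The plan is to argue by a direct counting/Markov inequality applied to the complementary quantities $b_i \defeq 1 - e_i^{\smallrmtrsp{}}P_{V^\bot}e_i$. Since $P_{V^\bot}$ is an orthogonal projection, each $b_i \in [0,1]$, and the trace hypothesis gives
\[
\sum_{i=1}^n b_i = n - \tr(P_{V^\bot}) \le n - (n-r) = r.
\]
I would then separate the indices into those with $e_i^{\smallrmtrsp{}}P_{V^\bot}e_i > x = 1 - \alpha/C^2$ (call this set of size $z$) and the remaining $n-z$ indices, each of which satisfies $b_i \ge \alpha/C^2$.

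Restricting the sum to the latter set, we get
\[
(n-z)\cdot \frac{\alpha}{C^2} \le \sum_{i \notin S} b_i \le \sum_{i=1}^n b_i \le r,
\]
which rearranges to $n - z \le rC^2/\alpha$, i.e.\ $z \ge n - rC^2/\alpha$, as claimed.

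There really is no obstacle here; the content of the lemma is just the observation that $P_{V^\bot}$ has ``average diagonal mass'' at least $1 - r/n$, so only a Markov-fraction of diagonal entries can be noticeably below $1$. The only thing worth being careful about is the sign convention and the fact that $P_{V^\bot}$ being an orthogonal projection guarantees $0 \le e_i^{\smallrmtrsp{}}P_{V^\bot}e_i \le 1$ (via $\|P_{V^\bot}e_i\|_2^2 = e_i^{\smallrmtrsp{}}P_{V^\bot}^2 e_i = e_i^{\smallrmtrsp{}}P_{V^\bot}e_i$ together with $\|P_{V^\bot}e_i\|_2 \le \|e_i\|_2 = 1$), which in turn makes each $b_i$ nonnegative and hence enables the Markov step cleanly.
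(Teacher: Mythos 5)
Your proof is correct and is essentially the same argument as the paper's: both rely on the trace bound $\tr(P_{V^\bot})\ge n-r$ and a Markov-type count of how many diagonal entries can fall below $x$. You merely rephrase it in terms of the complementary quantities $b_i = 1 - e_i^{\smallrmtrsp{}}P_{V^\bot}e_i$, which is a cosmetic difference that makes the nonnegativity needed for the Markov step a bit more explicit but buys nothing substantively new.
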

\begin{proof}
$$n- r = \sum_i e_i P_{V^{\bot}}e_i < z \cdot 1 + (n-z) \cdot x = (1-x)z + n x,$$
and so $(1-x)(n-z) < r$, or $z > n - \frac{C^2 r}{\alpha}$. 
\end{proof}
By \lemmaref{markov}, for appropriate $r = \beta n/C^2$, where $\beta > 0$ is a sufficiently small
positive constant depending on $\kappa$, we have that $|S| \geq 2n/3$, which holds
at any round in the attack since tr$(P_{V^{\bot}})$ is always at least $n-r$. 

\paragraph{The attack:}
We design a function $f(Ax)$, which is allowed and
will depend on $V^{\bot}$ (as well as $A$), to solve {\sc Gap-Norm}($B$). Our reduction is deterministic. 
Here
$x \sim G(V^{\bot}, \sigma^2)$
is a query generated during the attack given by \theoremref{attackRandomized}. 

Given $Ax$, the algorithm $\mathit{Alg'}$ first computes the set $S$
with $|S| \geq 2n/3$ with $e_i P_{V^{\bot}}e_i \geq (1-\kappa^2/C^2)^{1/2}$ for all $i \in S$. Notice
that $S$ depends only on $V^{\bot}$. For $i \in S$, let $y^i = x + 4C\sqrt{n} P_{V^{\bot}} e_i$. Given
$V^{\bot}$ and $Ax$ (and $A$), $\mathit{Alg'}$ can compute $Ay^i = Ax + 4C\sqrt{n} A P_{V^{\bot}} e_i$. 

Let $z^i$ be the output of $\mathit{Alg}$ run on input $Ay^i$, for each $i \in S$.  
If $|z^i_i| \geq C \sqrt{n}$ for all $ i \in S$, then $\mathit{Alg'}$ sets $f(x) = 0$, 
otherwise it sets $f(x) = 1$. 

We assume for all queries $x$ and all $\sigma$ chosen during the attack, 
that $\|x\|_2^2 \in [n \sigma^2/2, 2n \sigma^2]$,
which happens with probability $1-1/n^{\omega(1)}$ by standard concentration 
bounds for the $\chi^2$-distribution. 
To analyze this reduction, we distinguish two cases for which we require
correctness for~$f$.  
\paragraph{Case 1:} $\|x\|_2^2 < 4n$. 
By concentration bounds of the $\chi^2$-distribution, we can assume $\sigma^2 \leq 8$. 
Let $i \in S$ and consider $y^i$. Then,
\begin{eqnarray*}
\|y^i_{\textrm{tail}(1)}\|_2 & \leq & \|y^i- 4C\sqrt{n} (e_i^T P_{V^{\bot}} e_i) e_i\|_2 \\
& \leq & \|x\|_2 + 4 C \sqrt{n} (1- (e_iP_{V^{\bot}} e_i)^2)^{1/2} \\
& \leq & 2\sqrt{n} + 4C \sqrt{n} (1-(\sqrt{1-\kappa^2/C^2})^2)^{1/2} \\
& \leq & 2\sqrt{n} + 4\kappa \sqrt{n}\\
& = & (2 + 4 \kappa) \sqrt{n}.
\end{eqnarray*}
By correctness of $\mathit{Alg}$, it follows that for the output $z^i$ of
$\mathit{Alg}$, we have
\begin{eqnarray}\label{eqn:correct}
\|z^i-y^i\|_2 \leq C (2 + 4 \kappa) \sqrt{n}.
\end{eqnarray}
This implies that $z^i_i$ must be at least $C\sqrt{n}$. Indeed, otherwise we would have
\begin{eqnarray*}
\|z^i-y^i\|_2 & \geq & |z^i_i-y^i_i|
 >  4C \sqrt{n} \sqrt{1-\kappa^2/C^2} - |x_i| - C \sqrt{n}
 \geq  2.5 C \sqrt{n},
\end{eqnarray*}
contradicting (\ref{eqn:correct}) for $\kappa$ a sufficiently small constant (and $C$ at least a sufficiently
large constant, which we can assume since it only weakens the correctness requirement of $\mathit{Alg}$). 

Note that we can assume $\mathit{Alg}$ is correct on a query $x \sim G(V^{\bot}, \sigma)$ 
(as otherwise we are already done), 
Hence, $z^i_i$ must be at least $C\sqrt{n}$
simultaneously for every $i \in S$, 
with probability at least $1-1/n^s$ for arbitrarily large $s$. We thus have:
\begin{eqnarray}\label{eqn:c1}
\Pr_{x \sim G(V^{\bot}, \sigma^2)} \Set{f(x) = 0 \mid \|x\|_2^2 < 4n} \geq 1 - \frac{1}{n^s},
\end{eqnarray}

\paragraph{Case 2:} $Bn/4 \leq \|x\|_2^2 \leq 100 Bn$, for a parameter $B = \Theta(n)$. 
Recall that $x$ is obtained by sampling $g_1 \sim N(0, \sigma^2)^n$,
$g_2 \sim N(0,1/4)^n$, and setting $x = P_{V^{\bot}}g_1 + g_2$. Recall that 
$U = V^{\bot} \cap A$, so that we have $P_Ax = P_U g_1 + P_Ag_2$. We can assume that $\sigma^2 \geq \frac{B}{8}$,
as mentioned above, by tail bounds on the $\chi^2$-distribution. 

Associate $U$ with an $r' \times n$ orthonormal matrix ($r' \leq r$) whose rows span $U$. Since
the rows of $U$ are orthonormal, $\|U\|_F^2 \leq r'$, and so by averaging, for at least $2n/3$ 
of its columns $U_j$, we have
$\|U_j\|_2^2 \leq 3r'/n \leq 3r/n$. Let $T = \{j \mid \|U_j\|_2^2 \leq 3r/n\}$. 

Fix a $j \in T$, and consider $y = x + 4C\sqrt{n} P_{V^{\bot}}e_j$. For $x
\sim G(V^{\bot}, \sigma^2)$, 
we start by upper bounding the variation distance between the distributions of random variables 
$P_Ax = P_Ug_1 + P_A g_2$ and 
$P_Ay = P_Ax + 4C \sqrt{n} P_A \cdot P_{V^{\bot}} e_{j} = P_Ug_1 + 4C\sqrt{n} P_U e_{j} + P_Ag_2$. 
The variation distance cannot decrease by fixing $g_2$, so it suffices to upper bound the variation distance
between $P_Ug_1$ and $P_Ug_1 + 4 C \sqrt{n} P_U e_{j}$. Since $P_U$ is a projection matrix, there is a $1$-to-$1$
map from these distributions to $Ug_1$ and $Ug_1 + 4C\sqrt{n} U e_j$, so we upper bound the variation
distance between the latter two distributions. 

By rotational invariance of the Gaussian distribution, 
$U g_1 \sim N(0, \sigma^2 \mathrm{Id}_{r'})$, while 
$U g_1 + 4C \sqrt{n} U e_j \sim N(4 C \sqrt{n} U_j, \sigma^2 \mathrm{Id}_{r'})$, 
where $U_j$ is the $j$-th column of $U$. 
Applying \factref{shifted-gaussians}, and using $\sigma^2 \geq B/8$,
\begin{align*}
\|N(0, \sigma^2 \mathrm{Id}_{r'}) - N(4 C \sqrt{n} U_j, \sigma^2 \mathrm{Id}_{r'})\|_{\textrm{tv}} 
 \leq  \frac{4 C \sqrt{n} \cdot \|U_j\|_2}{\sigma}
& \leq  \frac{\sqrt{8} \cdot 4 C \sqrt{n} \cdot \sqrt{3r/n}} {\sqrt{B}}\\
& 
 =  \frac{8 \sqrt{6} C  \sqrt{\beta n/C^2}}{\sqrt{B}}
 =  \frac{8 \sqrt{6} \sqrt{n\beta}}{\sqrt{B}}.
\end{align*}
By the triangle inequality, for any $j, j' \in T$, 
$$\|N(4 C \sqrt{n} U_j, \sigma^2 \mathrm{Id}_{r'}) - N(4 C \sqrt{n} U_{j'}, \sigma^2 \mathrm{Id}_{r'})\|_{\textrm{tv}} 
\leq \frac{16 \sqrt{6} \sqrt{n\beta}}{\sqrt{B}},$$
and so the variation distance between the distributions of random variables
$P_A(x + 4 C \sqrt{n} P_{V^{\bot}}e_j)$ and 
$P_A(x + 4 C \sqrt{n} P_{V^{\bot}}e_{j'})$ is at most $\frac{16 \sqrt{6} \sqrt{n\beta}}{\sqrt{B}}$. 
For any constant $\gamma > 0$, we can choose the constant $\beta$ sufficiently small so that
for $B = \gamma^2 n$, this variation distance is at most $1/100$. We fix such a $\beta$,
for a $\gamma$ to be determined below. 

Fix an $i \in S \cap T$. 
Consider the output $z^i$ of $\mathit{Alg}$ given $Ay^i$. 
Using that $\|x\|_2 \leq 10\sqrt{Bn} = 10\gamma n$, we have 
\begin{eqnarray*}
\|y^i_{\textrm{tail}(1)}\|_2 & \leq & \|y^i- 4C\sqrt{n} (e_i^T P_{V^{\bot}} e_i) e_i\|_2 \\
& \leq & \|x\|_2 + 4 C \sqrt{n} (1- (e_iP_{V^{\bot}} e_i)^2)^{1/2} \\
& \leq & 10\gamma n + 4C \sqrt{n} (1-(\sqrt{1-\kappa^2/C^2})^2)^{1/2} \\
& \leq & 10 \gamma n + 4\kappa \sqrt{n}.
\end{eqnarray*}
If $\mathit{Alg}$ succeeds,
\begin{eqnarray*}
\|z^i\|_2 & \leq & \|y^i\|_2 + C \|y^i_{\textrm{tail}(1)}\|_2\\ 
& \leq & \|x\|_2 + 4C\sqrt{n} + C(10 \gamma n + 4\kappa \sqrt{n})\\
& \leq & 10 \gamma n + 4C\sqrt{n} + 10C \gamma n + 4\kappa C \sqrt{n}\\
& \leq & \zeta C n,
\end{eqnarray*}
where $\zeta > 0$ is a constant that can be made arbitrarily small by making $\gamma > 0$
arbitrarily small (and assuming $n$ large enough). 
Hence, $\|z^i\|_2^2 \leq \zeta^2 C^2 n^2$. It follows that $|z^i_j| \geq C \sqrt{n}$ for at most
$\zeta^2 n$ values of $j$.
%
%
Now we use the following facts if $\mathit{Alg}$ succeeds. 
\begin{enumerate}
\item $|S| \geq 2n/3$,
\item $|T| \geq 2n/3$,
\item for all $i$, $z^i$ contains at most $\zeta^2 n$ values of $j$ for which $|z^i_j| \geq C \sqrt{n}$
\item for any $j, j' \in T$, the variation distance between the distributions of random variables
$P_A(x + 4 C \sqrt{n} P_{V^{\bot}}e_j)$ and 
$P_A(x + 4 C \sqrt{n} P_{V^{\bot}}e_{j'})$ is at most $1/100$. 
\end{enumerate} 
The first and second conditions imply that $|S \cap T| \geq 2n/3-n/3 = n/3$. 
The third and fourth conditions imply that for any $i \in S \cap T$, if $\mathit{Alg}$ succeeds
(which we can assume), then 
\begin{eqnarray*}
\Pr_{x \sim G(V^{\bot}, \sigma^2)} \Set{|z^i_i| \geq C \sqrt{n}}
 & \leq & \Pr_{j \in S \cap T} \Pr_{x \sim G(V^{\bot}, \sigma^2)}\Set{|z^i_j| \geq
C \sqrt{n}} + \frac{1}{100}\\
 & \leq & \frac{\zeta^2 n}{n/3} + \frac{1}{100}
 <  \frac{1}{10},
\end{eqnarray*}
where the last inequality follows for sufficiently small constant $\zeta > 0$.
Hence, 
\begin{eqnarray}\label{eqn:c2}
\Pr_{x \sim G(V^{\bot}, \sigma )} \Set{f(x) = 0 \mid Bn/4 \leq \|x\|_2^2 \leq
100 Bn} < \frac{1}{10}.
\end{eqnarray}
Combining, (\ref{eqn:c1}) and (\ref{eqn:c2}), for sufficiently large $n$ 
we have for all $V^{\bot}$ and $\sigma$ chosen throughout the
course of the attack,
$$\Pr_{x \sim G(V^{\bot}, \sigma^2)}
\Set{(f(x) = 1 \wedge \|x\|^2 < 4n) \vee (f(x) = 0 \wedge Bn/4 \leq \|x\|_2^2
\leq 100Bn)} < \frac{1}{10}.$$

\paragraph{Wrap-up:}
We have built a function $f$ for {\sc Gap-Norm}($B$) 
which has distributional error less than $1/10$ on $G(V^{\bot}, \sigma^2)$ for
any $V^{\bot}$ and $\sigma$ chosen throughout the course of the attack, whenever $\|x\|_2^2 < 4n$ or
$Bn/4 \leq \|x\|_2^2 \leq 100Bn$. The reduction is deterministic and holds for each setting of the
randomness of $f$. It follows by Theorem \ref{thm:attackRandomized}, 
that with constant probability, with $\poly(n)$ adaptive oracle queries to $f$, we will find a strong 
failure certificate $(V, \sigma^2)$
for $f$ (for each fixing of its randomness). In this case, either $\sigma^2 > B/2$ and $\Pr_{g \sim G(V^{\bot},
\sigma^2)}\Set{f(g) = 1} \leq 2/3$, which
would violate (\ref{eqn:c2}), or $\sigma^2 \leq 2$ and $\Pr_{g \sim
G(V^{\bot}, \sigma^2)}\Set{f(g) \neq 0} \geq 1/3$, which
would violate (\ref{eqn:c1}). It follows that our assumption that the
compressed sensing algorithm $\mathit{Alg}$
succeeded on all queries made was false, which implies that we have found a
query to $\mathit{Alg}$ violating its
approximation guarantee. This completes the proof. 
\end{proof}

\newcommand{\etalchar}[1]{$^{#1}$}


\begin{thebibliography}{MMP{\etalchar{+}}10}

\bibitem[AGM12a]{agm12}
Kook~Jin Ahn, Sudipto Guha, and Andrew McGregor.
\newblock Analyzing graph structure via linear measurements.
\newblock In {\em SODA}, pages 459--467, 2012.

\bibitem[AGM12b]{agm12b}
Kook~Jin Ahn, Sudipto Guha, and Andrew McGregor.
\newblock Graph sketches: sparsification, spanners, and subgraphs.
\newblock In {\em PODS}, pages 5--14, 2012.

\bibitem[AMS99]{AMS99}
Noga Alon, Yossi Matias, and Mario Szegedy.
\newblock {The Space Complexity of Approximating the Frequency Moments}.
\newblock {\em J. Comput. Syst. Sci.}, 58(1):137--147, 1999.

\bibitem[CMY11]{cmy11}
Graham Cormode, S.~Muthukrishnan, and Ke~Yi.
\newblock Algorithms for distributed functional monitoring.
\newblock {\em ACM Transactions on Algorithms}, 7(2):21, 2011.

\bibitem[CP97]{ChangP97}
J.~T. Chang and D.~Pollard.
\newblock Conditioning as disintegration.
\newblock {\em Statistica Neerlandica}, 51(3):287--317, 1997.

\bibitem[DMNS06]{DworkMNS06}
Cynthia Dwork, Frank McSherry, Kobbi Nissim, and Adam Smith.
\newblock Calibrating noise to sensitivity in private data analysis.
\newblock In {\em Proc.\ $3$rd TCC}, pages 265--284. Springer, 2006.

\bibitem[DNV12]{DworkNV12}
Cynthia Dwork, Moni Naor, and Salil Vadhan.
\newblock The privacy of the analyst and the power of the state.
\newblock In {\em Proc.\ $51$rd Foundations of Computer Science (FOCS)}. IEEE,
  2012.

\bibitem[GHR{\etalchar{+}}12]{ghrsw12}
Anna~C. Gilbert, Brett Hemenway, Atri Rudra, Martin~J. Strauss, and Mary
  Wootters.
\newblock Recovering simple signals.
\newblock In {\em ITA}, pages 382--391, 2012.

\bibitem[GHS{\etalchar{+}}12]{ghsww12}
Anna~C. Gilbert, Brett Hemenway, Martin~J. Strauss, David~P. Woodruff, and Mary
  Wootters.
\newblock Reusable low-error compressive sampling schemes through privacy.
\newblock In {\em SSP}, 2012.

\bibitem[HR10]{HardtR10}
Moritz Hardt and Guy Rothblum.
\newblock A multiplicative weights mechanism for privacy-preserving data
  analysis.
\newblock In {\em Proc.\ $51$st Foundations of Computer Science (FOCS)}, pages
  61--70. IEEE, 2010.

\bibitem[Ind07]{IndykCourse}
Piotr Indyk.
\newblock Sketching, streaming and sublinear-space algorithms, 2007.
\newblock Graduate course notes available at
  \url{http://stellar.mit.edu/S/course/6/fa07/6.895/}.

\bibitem[IPW11]{ipw11}
Piotr Indyk, Eric Price, and David~P. Woodruff.
\newblock On the power of adaptivity in sparse recovery.
\newblock In {\em FOCS}, pages 285--294, 2011.

\bibitem[MFHH02]{mfhh02}
Samuel Madden, Michael~J. Franklin, Joseph~M. Hellerstein, and Wei Hong.
\newblock Tag: A tiny aggregation service for ad-hoc sensor networks.
\newblock In {\em OSDI}, 2002.

\bibitem[MMP{\etalchar{+}}10]{McGregorMPRTV10}
Andrew McGregor, Ilya Mironov, Toniann Pitassi, Omer Reingold, Kunal Talwar,
  and Salil~P. Vadhan.
\newblock The limits of two-party differential privacy.
\newblock In {\em Proc.\ $51$st Foundations of Computer Science (FOCS)}, pages
  81--90. IEEE, 2010.

\bibitem[MNS08]{MironovNS08}
Ilya Mironov, Moni Naor, and Gil Segev.
\newblock Sketching in adversarial environments.
\newblock In {\em STOC}, pages 651--660. ACM, 2008.

\bibitem[Mut05]{Muthu}
S.~Muthukrishnan.
\newblock {Data Streams: Algorithms and Applications}.
\newblock {\em Foundations and Trends in Theoretical Computer Science},
  1(2):117--236, 2005.

\bibitem[PW13]{pw12}
Eric Price and David~P. Woodruff.
\newblock Lower bounds for adaptive sparse recovery.
\newblock In {\em SODA}, 2013.

\bibitem[RR10]{RothR10}
Aaron Roth and Tim Roughgarden.
\newblock Interactive privacy via the median mechanism.
\newblock In {\em Proc.\ $42$nd Symposium on Theory of Computing (STOC)}, pages
  765--774. ACM, 2010.

\end{thebibliography}
\end{document}